
\documentclass[journal]{IEEEtran}

\usepackage{epsfig}
\usepackage{amsmath}
\usepackage{amssymb}
\usepackage{amsthm}
\usepackage{cite}
\usepackage{cleveref}
\usepackage[caption=false]{subfig}
\usepackage{xcolor}
\DeclareMathOperator*{\argmax}{arg\,max}
\DeclareMathOperator*{\argmin}{arg\,min}
\newtheorem{theorem}{Theorem}

\newtheorem{corollary}{Corollary}
\newtheorem{remark}{Remark}
\newtheorem{definition}{Definition}

\makeatletter
  \def\my@tag@font{\small}
  \def\maketag@@@#1{\hbox{\m@th\normalfont\my@tag@font#1}}
  \let\amsmath@eqref\eqref
  \renewcommand\eqref[1]{{\let\my@tag@font\relax\amsmath@eqref{#1}}}
\makeatother

\begin{document}
\title{A Moving Target Defense for Securing Cyber-Physical Systems}
\author{Paul~Griffioen,~\IEEEmembership{Student~Member,~IEEE,}
	Sean~Weerakkody,~\IEEEmembership{Member,~IEEE,}
	and~Bruno~Sinopoli,~\IEEEmembership{Senior~Member,~IEEE}%
\thanks{P. Griffioen and S. Weerakkody are with the Department of Electrical and Computer Engineering, Carnegie Mellon University, Pittsburgh, PA, USA 15213. B. Sinopoli is with the Department of Electrical and Systems Engineering, Washington University in St. Louis, St. Louis, MO, USA 63130. Email: {\tt\small pgriffi1@andrew.cmu.edu, sweerakk@alumni.cmu.edu, bsinopoli@wustl.edu}}
\thanks{This work is partially supported by Department of Energy grant DE-OE0000779 and by National Science Foundation grant 164652.}%
}


\maketitle

\begin{abstract}
This article considers the design and analysis of multiple moving target defenses for recognizing and isolating attacks on cyber-physical systems. We consider attackers who perform integrity attacks on some set of sensors and actuators in a control system. In such cases, it has been shown that a model aware adversary can carefully design attack vectors to bypass bad data detection and identification filters while causing damage to the control system. To counter such an attacker, we propose the moving target defense which introduces stochastic, time-varying parameters in the control system. The underlying random dynamics of the system limit an attacker's knowledge of the model and inhibits his or her ability to construct stealthy attack sequences. Moreover, the time-varying nature of the dynamics thwarts adaptive adversaries. We explore three main designs.  First, we consider a hybrid system where parameters within the existing plant are switched among multiple modes. We demonstrate how such an approach can enable both the detection and identification of malicious nodes. Next, we investigate the addition of an extended system with dynamics that are coupled to the original plant but do not affect system performance. Here, an attack on the original system will affect the authenticating subsystem and in turn be revealed by a set of sensors measuring the extended plant. Lastly, we propose the use of sensor nonlinearities to enhance the effectiveness of the moving target defense. The nonlinear dynamics act to conceal normal operational behavior from an attacker who has tampered with the system state, further hindering an attacker's ability to glean information about the time-varying dynamics. In all cases mechanisms for analysis and design are proposed. Finally, we analyze attack detectability for each moving target defense by investigating expected lower bounds on the detection statistic. Our contributions are also tested via simulation.
\end{abstract}

\IEEEpeerreviewmaketitle

\section{Introduction}
Securing cyber-physical systems (CPSs), the amalgamation of sensing, processing, control, and communication in physical spaces, is an essential goal in today's society. CPSs are ubiquitous in modern critical infrastructure such as transportation systems, energy delivery, health care, and sewage/water management. Consequently, these systems are attractive targets for adversaries and are essential to protect. Unfortunately, CPSs are vulnerable to adversarial attacks \cite{cardenas2008secure} due to the large number of attack surfaces found in these large scale, heterogeneous, and highly connected systems. Additionally, existing defenses from cyber security alone are insufficient for protecting CPSs. Traditional techniques such as authenticated encryption, message authentication codes, and signatures that typically enable the detection of integrity attacks can be computationally complex and are ineffective against a class of attacks known as physical attacks. Moreover, updating legacy systems can prove to be impractical.

The vulnerabilities in CPSs have culminated in several effective attacks from highly resourceful and knowledgeable attackers. In the year 2000, a malicious insider was able to utilize detailed system knowledge to attack a waste management system in Queensland, Australia \cite{slay2007lessons}, resulting in the leakage of millions of liters of sewage. With Stuxnet \cite{chen2010stuxnet}, a nation state adversary was able to compromise a uranium enrichment facility in Iran, leading to the destruction of a thousand centrifuges. More recently in 2015, hackers were able to remotely compromise a supervisory control and data acquisition (SCADA) system in Ukraine \cite{case2016analysis}, allowing them to cause widespread blackouts.

Motivated by the threat of such sophisticated attackers, we aim to design resilient CPSs. As a first step we focus on the problem of detecting and in some cases isolating attacks from malicious attackers. The problem of recognizing attacks is not trivial, especially when considering highly knowledgeable and resourceful attackers. For instance adversaries can utilize model knowledge to engage in deceptive and powerful stealthy attacks, including false data injection attacks \cite{liu2011false,mo2010false}, covert attacks \cite{smith2015covert}, zero dynamics attacks \cite{pasqualetti2013attack,teixeira2015secure}, and replay attacks \cite{mo2009secure}. Here the adversary is able to leverage access to system channels and/or model knowledge to construct attacks which bypass traditional bad data detectors such that the outputs received by a SCADA operator are statistically consistent with expected output behavior.

To counter such an attacker, a defender must engage in active detection \cite{weerakkody2017active} by designing a system that adds additional redundancy or introduces a physical secret. For instance, physical watermarking was introduced in \cite{mo2009secure} to counter replay adversaries. Here, the defender changes his or her control input to introduce random authenticating perturbations to the system. Several extensions have been pursued, for instance \cite{mo2014detecting,mo2015physical,satchidanandan2017dynamic,weerakkody2017bernoulli}. Alternatively, the defender can pursue one time changes to the system, including changes to the parameters \cite{teixeira2012revealing} or structural changes, for instance involving sensing and communication \cite{weerakkody2017robust}. In addition, encryption or lower cost mechanisms such as coding \cite{miao2014coding} can be effective tools for authentication. Nonetheless, the above schemes can be rendered ineffective by strong attackers. Watermarking can fail against additive attacks pursued by model aware attackers. Increasing robustness through one time changes can fail against highly resourceful attackers. Finally sensor coding can be ineffective against attackers with physical access to sensors and a certain class of zero dynamics attacks.

To address these challenges, we consider the moving target defense, which was first introduced in \cite{movingtarget} with extensions in \cite{switchedsystem,weerakkody2016moving,acc2017griffioen}. Here, the defender introduces time-varying parameters into the control system, resulting in periodic changes to the system matrices. The unknown parameters limit the attacker's understanding of the system model. Moreover, the time-varying dynamics ideally act as a moving target, changing fast enough to hinder a potential adaptive adversary from performing system identification.

The article considers three main moving target designs. In the first design, we evaluate a hybrid moving target where the system is switched among a number of discrete modes. We provide a set of design recommendations for the hybrid moving target which enable a defender to both detect and identify sensor attacks in control systems. Secondly, we design an extended moving target where we introduce an auxiliary system with time-varying dynamics coupled to the original plant. An attacker who perturbs the original system will also affect the additional dynamics due to this coupling. Moreover, the time-varying behavior of the system prevents the defender from concealing his or her attack through fake sensor measurements. We provide efficiently solvable optimization problems to design the parameters that generate the time-varying matrices in this extended moving target.

Thirdly, we note that even in the presence of time-varying dynamics, the attacker still has some opportunity to learn useful information about the model which can be applied to an attack. To limit this information, we introduce random nonlinearities in the sensor measurements which are amplified when the system state is perturbed and consequently conceal information about the system from the adversary when the plant is under attack. We provide a limit analysis to demonstrate the effectiveness of this approach as well as optimization problems to design the coefficient matrix associated with the nonlinearity. Lastly, we provide mechanisms to analyze attack detectability by investigating expected lower bounds on the detection statistic for each moving target defense.

An overarching goal of this article is to present the moving target as a general technique, which can be realized to counter a variety of attackers or molded to fit a variety of architectures. In order to illustrate this, previous results on the hybrid moving target defense \cite{weerakkody2016moving} and the extended moving target defense \cite{movingtarget,acc2017griffioen} are repeated in this article. The main contributions relative to our previous work in \cite{movingtarget}, \cite{weerakkody2016moving}, and \cite{acc2017griffioen} are as follows:
\begin{enumerate}
    \item An extension of the work in \cite{acc2017griffioen} to account for a time-varying covariance that generates the distribution of the auxiliary actuators. (section IV-A)
    \item The introduction, analysis, and design of random nonlinearities in the sensor measurements that conceal information about the time-varying dynamics from the adversary when the plant is under attack. (section V, VII-A)
    \item The presentation and validation of a simpler and more accurate method for computing expected lower bounds on the detection statistic than that given in \cite{movingtarget}. (section VI, VII-B)
    \item The organized presentation of three different moving target techniques differentiated by their differing architectures and the attack models they wish to thwart.
\end{enumerate}

The rest of the article is summarized below. In section II, we introduce the system and attack models along with the moving target defense. In section III, we consider the design of a hybrid moving target defense, placing a special focus on attack identification. Next, in section IV, we investigate the design of an extended moving target defense for attack detection. Later, in section V, we pursue the design of a nonlinear moving target defense to limit an attacker's ability to identify the system model.  In section VI, we propose statistical bounds to analyze the performance of the moving target defense. Lastly, section VII includes simulation results and section VIII concludes the article.

\section{Modeling the Moving Target}

\subsection{System Model}
To begin, we introduce the model for the system under consideration. We model our CPS as a linear time-invariant system as follows
\begin{equation} \small
x_{k+1} = Ax_k + Bu_k + w_k, \quad y_k = Cx_k + v_k.
\end{equation}
Here $x_k \in \mathbb{R}^n$ represents the system state at time $k$, $u_k \in \mathbb{R}^p$ is a vector of control inputs, and $y_k \in \mathbb{R}^m$ represents a collection of $m$ scalar sensor outputs. In addition, to capture uncertainty we consider independent and identically distributed (IID) Gaussian process noise $w_k \sim \mathcal{N}(0,Q)$ and IID Gaussian sensor noise $v_k \sim \mathcal{N}(0,R)$. We assume that $(A,B)$ and $(A,Q^{\frac{1}{2}})$ are stabilizable, $(A,C)$ is detectable, and $R \succ 0$.

In this article, we consider an adversary who can perform integrity attacks. For the hybrid moving target, we assume that the attacker is able to corrupt all of the outputs. For the extended moving target and nonlinear moving target, we assume that the attacker can corrupt all of the inputs and outputs. This for instance can be done over a network through a man in the middle attack where an attacker intercepts true packets and replaces them with false packets. Alternatively, physical attacks can disrupt the integrity of a system. For instance, the attacker can change the settings of programmable logic controllers (PLCs) or the environment surrounding system sensors. Mathematically, we model an integrity attack as follows
\begin{equation} \small
\label{IntegrityAttackStatesSensors}
x_{k+1} = Ax_k + Bu_k + B^au_k^a + w_k, \quad y_k = Cx_k + D^ad_k^a + v_k.
\end{equation}
Without loss of generality, an attack is assumed to begin at time $k = 0$. Here, $B^au_k^a$ represents attacks on the control inputs and $D^ad_k^a$ represents attacks on sensor outputs. If all actuators can be corrupted, $B^a = B$ and if all sensors can be modified, $D^ a = I$. Motivated by the resources of malicious insiders and nation state adversaries, we will in the case of the extended moving target and nonlinear moving target consider this worst case scenario. Additionally, we will also consider the possibility that an attacker has detailed system knowledge. A fundamental understanding of the plant when combined with significant disclosure and disruption resources can lead to powerful attacks \cite{teixeira2015secure}. For instance, an attacker can attempt to subtract his or her influence. Here an adversary chooses an arbitrary sequence of control inputs $\{u_k^a\}$ in order to drive the system along the controllable subspace $(A,B^a)$. To avoid detection, the attacker leverages model knowledge to construct stealthy outputs. Specifically, 
\begin{equation} \small
\label{ZeroDynamics}
D^a d_k^a = - C x_k^a, \quad x_{k+1}^a = Ax_k^a + B^a u_k^a, \quad x_0^a = 0.
\end{equation}
It can be shown that the probability distribution of the outputs under such an attack is identical to the distribution under normal operation. Consequently no standard bad data detector can recognize this adversarial behavior, and as a result this behavior is perfectly stealthy. We remark that a significant resource for an attacker here is model knowledge, which allows the adversary to carefully construct fake sensor outputs. In the ensuing subsections we propose three main designs which allow us to limit an attacker's knowledge of the system model. We call this collection of tools the moving target defense.

\subsection{Hybrid Moving Target Defense}
In the hybrid moving target we change parameters of the system, particularly the system matrices, in a time-varying fashion to limit the adversary's knowledge of the system model. The time-varying sequence of system matrices is known to the defender but kept hidden from the adversary, which limits the effectiveness of an adaptive attacker. The dynamics of the hybrid moving target are given below
\begin{equation} \small
\label{HybridSystemDynamics}
    x_{k+1} = A_k x_k + B_k u_k + w_k, \quad y_k = C_k x_k + v_k.
\end{equation}
We assume that our plant is a switching hybrid system. Here, $(A_k,B_k,C_k)$ belong to a finite set of modes $\Gamma = \{(A(1),B(1),C(1)), \cdots, (A(l),B(l),C(l))\}$. While $\Gamma$ may be known to an attacker, the exact realization of system matrices will be unknown. This forces an attacker to leverage imperfect system information when constructing an attack, which in turn can reveal his or her malicious behavior. We assume the adversary is able to modify all of the sensor measurements. The information available to the defender and the attacker at time step $k$, denoted by $\mathcal{I}_k^D$ and $\mathcal{I}_k^A$, respectively, is given by
\begin{equation*} \small
\begin{split}
\mathcal{I}_k^D &\triangleq \{A_{0:k},B_{0:k},C_{0:k},u_{0:k},y_{0:k}^a,f(w_k,v_k)\}, \\
\mathcal{I}_k^A &\triangleq \{\Gamma,u_{0:k},y_{0:k}^A,D^ad_{0:k}^a,f(w_k,v_k)\},
\end{split}
\end{equation*}
where $y_k^a$ represents the attacked sensor measurements received by the system operator, $y_k^A$ denotes the sensor measurements that the attacker intercepts, and $D^ad_{0:k}^a$ represents the bias the attacker adds to the sensor measurements.

\begin{remark}
The sequence of time-varying matrices can be determined for instance by a cryptographically secure pseudo random number generator (PRNG). Here, the sequence of system matrices will be entirely determined by the seed of the random number generator. As we aim to prevent an attacker from learning about the sequence of system matrices, the random seed must be hidden from the attacker. Alternatively, the defender must know the random seed to perform tasks of detection and estimation. As such the random seed serves as a root of trust and is analogous to a cryptographic key.
\end{remark}

In section III, we will investigate the applications of the hybrid moving target defense for the purposes of identifying sensor attacks in control systems, specifically considering how to design $\Gamma$ and the sequence of time-varying matrices. We note that introducing parameter changes to the system can result in tradeoffs between security and control performance. This issue is addressed in the next section when we consider the extended moving target defense.

\subsection{Extended Moving Target Defense}
In the extended moving target, an authenticating subsystem is added on top of the nominal control system. Specifically, we introduce additional states $\tilde{x}_k\in\mathbb{R}^{\tilde{n}}$ measured by additional sensors $\tilde{y}_k\in\mathbb{R}^{\tilde{m}}$ which have dynamics that are coupled to the dynamics of the original state $x_k$. The dynamics of the extended moving target are given below
\begin{align}
\label{ExtendedStateDynamics}
\text{\small$\underbrace{\begin{bmatrix}
\tilde{x}_{k+1} \\
x_{k+1}
\end{bmatrix}}_{\bar{x}_{k+1}}$} &\text{\small$=
\underbrace{\begin{bmatrix}
\tilde{A} & \bar{A}_k \\
0 & A
\end{bmatrix}}_{\mathcal{A}_k}
\underbrace{\begin{bmatrix}
\tilde{x}_k \\
x_k
\end{bmatrix}}_{\bar{x}_k} +
\underbrace{\begin{bmatrix}
\tilde{B}_k \\
B
\end{bmatrix}}_{\mathcal{B}_k}
u_k +
\underbrace{\begin{bmatrix}
\tilde{w}_k \\
w_k
\end{bmatrix}}_{\bar{w}_k},$} \\
\label{ExtendedSensorDynamics}
\text{\small$\underbrace{\begin{bmatrix}
\tilde{y}_k \\
y_k
\end{bmatrix}}_{\bar{y}_k}$} &\text{\small$=
\underbrace{\begin{bmatrix}
\tilde{C} & \bar{C}_k \\
0 & C
\end{bmatrix}}_{\mathcal{C}_k}
\underbrace{\begin{bmatrix}
\tilde{x}_k \\
x_k
\end{bmatrix}}_{\bar{x}_k} +
\underbrace{\begin{bmatrix}
\tilde{v}_k \\
v_k
\end{bmatrix}}_{\bar{v}_k},$}
\end{align}
with process noise $\bar{w}_k\sim\mathcal{N}(0,\mathcal{Q})$ and sensor noise $\bar{v}_k\sim\mathcal{N}(0,\mathcal{R})$ such that $\mathcal{Q}\triangleq\text{BlkDiag}(\tilde{Q},Q)\succeq0$ and $\mathcal{R}\triangleq\begin{bmatrix}\tilde{R}&\tilde{R}_{12}\text{ };&\tilde{R}_{12}^T&R\end{bmatrix}\succ0$. We assume that the time-varying matrices $\bar{A}_k$, $\tilde{B}_k$, and $\bar{C}_{k}$ are selected from an IID distribution (to be designed later in section IV). Without loss of generality, the control inputs are multiplexed to the actuators of both the nominal and extended systems. The extended moving target is designed so that if an adversary attempts to bias the original state $x_k$, he or she also modifies the auxiliary state $\tilde{x}_k$. This in turn will cause changes to the measurements $\tilde{y}_k$. Ideally an attacker who can modify $\tilde{y}_k$ will be unable to do so in a convincing manner due to his or her lack of knowledge about the time-varying dynamics. The time-varying behavior will also impede the task of system identification. We assume the adversary is able to modify all of the control inputs and sensor measurements. The information available to the defender and the attacker at time step $k$ is given by
\begin{equation*} \small
\medmuskip=3.09mu
\thinmuskip=3.09mu
\thickmuskip=3.09mu
\begin{split}
\mathcal{I}_k^D &\triangleq \{A,B,C,\tilde{A},\bar{A}_{0:k},\tilde{B}_{0:k},\tilde{C},\bar{C}_{0:k},u_{0:k},\bar{y}_{0:k}^a,f(\bar{w}_k,\bar{v}_k)\}, \\
\mathcal{I}_k^A &\triangleq \{A,B,C,\tilde{A},\tilde{C},f(\bar{A},\tilde{B},\bar{C}),u_{0:k},u_{0:k}^a,\bar{y}_{0:k}^A,\bar{d}_{0:k}^a,f(\bar{w}_k,\bar{v}_k)\},
\end{split}
\end{equation*}
where $\bar{y}_k^a$ represents the attacked sensor measurements received by the system operator, $u_k^a$ denotes the bias the attacker adds to the control inputs, $\bar{y}_k^A$ represents the sensor measurements that the attacker intercepts, and $\bar{d}_k^a$ denotes the bias the attacker adds to the sensor measurements.

\begin{remark}
While this system involves matrices $\bar{A}_k$, $\tilde{B}_k$, and $\bar{C}_k$ selected from an IID distribution, the extended moving target defense can still be effective in other scenarios. For instance, the system parameters can evolve at multiple time scales. In this case, the longer the target remains in place, the easier it is for the adversary to identify the system.
\end{remark}

A significant advantage of the extended moving target defense relative to the hybrid moving target defense is potential system performance. In particular, if we do not care about controlling the additional states $\tilde{x}_k$, the controller of the original system can remain unchanged and no online performance is sacrificed. Because the dynamics of the original plant remain in place, there is no tradeoff between security and control. We will consider the design of the parameters that generate the system matrices in the extended moving target for attack detection in section IV.

\begin{remark}
As before, the sequence of time-varying matrices can be determined by a PRNG. The extended system itself can be introduced by leveraging existing dynamics in the system. For instance one can consider the dynamics of waste products such as heat in a chemical reaction or the friction in a mechanical generator. The dynamics can be made time-varying by changing conditions at the plant. Alternatively, one can introduce external hardware such as RLC circuits with variable resistors and capacitors to generate the time-varying auxiliary system.
\end{remark}

\subsection{Nonlinear Moving Target Defense}
The utility of the moving target lies in the challenges it poses for an adversary aiming to perform system identification. However, we acknowledge that the sensor measurements as constructed do reveal some information about the system dynamics. In order to further limit the information available to an attacker, we can intentionally introduce nonlinearities. Many systems are inherently nonlinear, allowing us to leverage the dynamics of the system to introduce these nonlinearities. Notably, consider the sensor measurements for the nonlinear moving target below
\begin{equation} \small
\label{ExtendedNonlinearSensorDynamics}
\underbrace{\begin{bmatrix}
\tilde{y}_k \\
y_k
\end{bmatrix}}_{\bar{y}_k} =
\underbrace{\begin{bmatrix}
\tilde{C} & \bar{C}_k \\
0 & C
\end{bmatrix}}_{\mathcal{C}_k}
\underbrace{\begin{bmatrix}
\tilde{x}_k \\
x_k
\end{bmatrix}}_{\bar{x}_k} +
\begin{bmatrix} G_kh(x_k) \\ 0 \end{bmatrix} + 
\underbrace{\begin{bmatrix}
\tilde{v}_k \\
v_k
\end{bmatrix}}_{\bar{v}_k}.
\end{equation}
It is assumed that the extended state dynamics are unchanged \eqref{ExtendedStateDynamics}. However, a nonlinearity $G_kh(x_k)$ is introduced into the auxiliary sensor measurements where $G_k$ is a random matrix chosen from an IID distribution. Here the nonlinearity takes a form such that $G_k$ determines the direction of the nonlinearity and $h(x_k)$ determines the magnitude of the nonlinearity. The nonlinearity is designed so that it is approximately $0$ when the system state lies within a normal region of operation. When the state has been perturbed away from its normal region of operation, the nonlinearity becomes large and unpredictable. We assume the adversary is able to modify all of the control inputs and sensor measurements. The information available to the defender and the attacker at time step $k$ is given by
\begin{equation*} \small
\begin{split}
\mathcal{I}_k^D \triangleq \{&A,B,C,\tilde{A},\bar{A}_{0:k},\tilde{B}_{0:k},\tilde{C},\bar{C}_{0:k},G_{0:k},\text{nonlinear function }h, \\
&u_{0:k},\bar{y}_{0:k}^a,f(\bar{w}_k,\bar{v}_k)\}, \\
\mathcal{I}_k^A \triangleq \{&A,B,C,\tilde{A},\tilde{C},f(\bar{A},\tilde{B},\bar{C}),f(G),\text{nonlinear function }h, \\
&u_{0:k},u_{0:k}^a,\bar{y}_{0:k}^A,\bar{d}_{0:k}^a,f(\bar{w}_k,\bar{v}_k)\}.
\end{split}
\end{equation*}

An attacker who aims to remain stealthy must be able to produce counterfeit measurements which do not contain this large nonlinearity. Nonetheless, this is impractical because the attacker does not know the time-varying matrix $G_k$ which determines the nonlinearity. Moreover, the large highly nonlinear attack measurements will significantly impede an attacker's ability to learn the time-varying matrices $(\bar{A}_k, \tilde{B}_k, \bar{C}_k)$ from the measurements $\tilde{y}_k$. The design of matrix $G_k$ and an analysis of the nonlinear moving target is presented in section V.

\subsection{Estimation and Detection}
A Kalman filter can be used to compute the minimum mean squared error state estimate $\hat{\bar{x}}_{k|k}$ given the set of previous measurements up to $\bar{y}_k$. The Kalman filter is a linear estimator given by
\begin{align}
\label{KalmanAPriori}
\text{\small$\hat{\bar{x}}_{k+1|k}$} &\text{\small$= \mathcal{A}_k\hat{\bar{x}}_{k|k} + \mathcal{B}_ku_k,$} \\
\label{KalmanAPosteriori}
\text{\small$\hat{\bar{x}}_{k|k}$} &\text{\small$= (I-\mathcal{K}_k\mathcal{C}_k)\hat{\bar{x}}_{k|k-1} + \mathcal{K}_k\bar{y}_k,$} \\
\label{KalmanGain}
\text{\small$\mathcal{K}_k$} &\text{\small$= \mathcal{P}_{k|k-1}\mathcal{C}_k^T(\mathcal{C}_k\mathcal{P}_{k|k-1}\mathcal{C}_k^T+\mathcal{R})^{-1},$} \\
\label{ErrorCovariance}
\text{\small$\mathcal{P}_{k+1|k}$} &\text{\small$= \mathcal{A}_k(I-\mathcal{K}_k\mathcal{C}_k)\mathcal{P}_{k|k-1}\mathcal{A}_k^T + \mathcal{Q},$}
\end{align}
where $\hat{\bar{x}}_{k+1|k}$ is the a priori state estimate, $\hat{\bar{x}}_{k|k}$ is the a posteriori state estimate, $\mathcal{P}_{k+1|k}$ is the a priori error covariance matrix, and $\mathcal{K}_k$ is the Kalman gain. To detect attacks on the CPS, a residue-based detector that leverages the a priori state estimate $\hat{\bar{x}}_{k|k-1}$ is utilized. The residue $\bar{z}_k$ represents the difference between the observed and expected value of the measurements and is given by
\begin{equation} \small
\label{Residue}
\bar{z}_k = \bar{y}_k - \mathcal{C}_k\hat{\bar{x}}_{k|k-1}.
\end{equation}
By incorporating this residue, a $\chi^2$ detector given by
\begin{equation} \small
\label{Detector}
g_k(\bar{z}_{k-T+1:k}) = \sum_{i=k-T+1}^k\bar{z}_i^T(\mathcal{C}_i\mathcal{P}_{i|i-1}\mathcal{C}_i^T+\mathcal{R})^{-1}\bar{z}_i \mathop{\gtrless}_{\mathcal{H}_0}^{\mathcal{H}_1} \eta_k,
\end{equation}
with detection statistic $g_k$ follows a $\chi^2$ distribution under normal operation. The $\chi^2$ detector, which has $T(m+\tilde{m})$ degrees of freedom, attempts to exploit this fact by testing to see if the residues follow the correct distribution. Here $\eta_k$ represents the threshold of the bad data detector, $\mathcal{H}_0$ is the null hypothesis which represents normal system operation, $\mathcal{H}_1$ is the alternative hypothesis which denotes that the system is under attack, and $T$ represents the detector window that considers past measurements. Measurements that are in close agreement with expected values generate small detection statistics and thus raise no alarm. Large deviations between measured and expected behavior will lead to a large detection statistic, thus causing an alarm.

\begin{remark}
Estimation and detection for the hybrid moving target is described by replacing $\hat{\bar{x}}_{k+1|k}$, $\hat{\bar{x}}_{k|k}$, $\bar{y}_k$, $\bar{z}_k$, $\mathcal{A}_k$, $\mathcal{B}_k$, $\mathcal{C}_k$, $\mathcal{K}_k$, $\mathcal{P}_{k+1|k}$, $\mathcal{Q}$, and $\mathcal{R}$ in \cref{KalmanAPriori,KalmanAPosteriori,KalmanGain,ErrorCovariance,Residue,Detector} with $\hat{x}_{k+1|k}$, $\hat{x}_{k|k}$, $y_k$, $z_k$, $A_k$, $B_k$, $C_k$, $K_k$, $P_{k+1|k}$, $Q$, and $R$, respectively, where $\hat{x}_{k+1|k}$ and $\hat{x}_{k|k}$ are the a priori and a posteriori state estimates for the nominal system, $z_k$ is the residue for the nominal system, $K_k$ is the Kalman gain for the nominal system, and $P_{k+1|k}$ is the a priori error covariance matrix for the nominal system.
\end{remark}

While the estimation and detection techniques described above can be applied to the hybrid moving target and the extended moving target, a slight modification must occur when performing estimation and detection for the nonlinear moving target. Because the sensor measurements are nonlinear, an extended Kalman filter is used and is given by
\begin{align}
\text{\small$\hat{\bar{x}}_{k+1|k}$} &\text{\small$= \mathcal{A}_k\hat{\bar{x}}_{k|k} + \mathcal{B}_ku_k,$} \\
\text{\small$\hat{\bar{x}}_{k|k}$} &\text{\small$= (I-\mathcal{K}_k\mathcal{C}_k)\hat{\bar{x}}_{k|k-1} + \mathcal{K}_k\bar{y}_k - \mathcal{K}_k\mathcal{G}_kh(\hat{x}_{k|k-1}),$} \\
\text{\small$\mathcal{K}_k$} &\text{\small$= \mathcal{P}_{k|k-1}\Phi_k^T(\Phi_k\mathcal{P}_{k|k-1}\Phi_k^T+\mathcal{R})^{-1},$} \\
\text{\small$\mathcal{P}_{k+1|k}$} &\text{\small$= \mathcal{A}_k(I-\mathcal{K}_k\Phi_k)\mathcal{P}_{k|k-1}\mathcal{A}_k^T + \mathcal{Q},$} \\
\text{\small$\mathcal{G}_k$} &\text{\small$\triangleq\begin{bmatrix}G_k\\0\end{bmatrix}, \quad \Phi_k\triangleq\mathcal{C}_k+\begin{bmatrix}0&G_k\frac{\partial h(x_k)}{\partial x_k}\big|_{\hat{x}_{k|k-1}}\\0&0\end{bmatrix}.$} \nonumber
\end{align}
The residue is then $\bar{z}_k = \bar{y}_k - \mathcal{C}_k\hat{\bar{x}}_{k|k-1} - \begin{bmatrix}G_kh(\hat{x}_{k|k-1});\text{ }0\end{bmatrix}$, and the detector is the same as the $\chi^2$ detector in \eqref{Detector} except that $\mathcal{C}_i$ is replaced by $\Phi_i$.

\section{Hybrid Moving Target Defense}
We first consider the hybrid moving target defense, where we perform active detection by changing the parameters of the plant itself in a discrete fashion. This technique will aid not only in the detection of malicious adversaries but 
will also prevent unidentifiable attacks by limiting the adversary's knowledge of the system. To begin, we let $y(x_0,D^ad_k^a,k)$ be the output $y_k$ due to the initial state $x_0$ and the sequence of attacks $\{D^ad_0^a,\cdots,D^ad_k^a\}$, and we let $y_k^s$ represent the $s$th entry of $y_k$ where we have dropped the superscript $a$ (denoting the measurements received by the system operator) for notational simplicity.
\begin{definition}
A nonzero attack on sensor $s$ is unambiguously identifiable at time $t$ if there is no $x_0^* \in \mathbb{R}^n$ satisfying $y_k^s = y^s(x_0^*, 0, k)$ for $0 \le k \le t$. An attack on sensor $s$ is unambiguously identifiable  if it is unambiguously identifiable for all $t$.
\end{definition}
The notion of unambiguous identifiability characterizes when the defender can be certain that sensor $s$ is faulty or under attack. This scenario occurs only if there exists no initial state which produces the output sequence at $y^s$. We seek to design a system that forces the attacker to generate unambiguously identifiable attacks on all targeted sensors, allowing the defender to identify these misbehaving sensors.

We consider the hybrid moving target dynamics as given in (\ref{HybridSystemDynamics}) from the adversary's perspective, where the adversary performs an attack on an ordered set of sensors $L = \{s_1, \cdots, s_{|L|}\}$ using additive inputs $d_k^a \in \mathbb{R}^{|L|}$ such that $y_k=C_kx_k+D^ad_k^a+v_k$. Without loss of generality, we assume that an attack starts at time $k = 0$. Here, $D^a \in \mathbb{R}^{m \times |L|}$ is defined as $D_{uv}^a (L)= \mathbb{I}_{u = s_i, v = i}$ where $\mathbb{I}$ is the indicator function and $(u,v)$ are the indices of an element of $D^a$. Implicitly, we assume that the set of sensors which the adversary targets is constant due to (ideally) the inherent difficulty of hijacking sensors. In an integrity attack, the adversary seeks to adversely affect the physical system by preventing proper feedback.

Consequently, it is important for the defender to identify trusted sensor nodes. Estimation and control algorithms can then be tuned to ignore attacked nodes. We assume that the defender knows the system dynamics $A_k,B_k,C_k$ as well as the input and output histories given by $u_{0:k-1}$ and $y_{0:k}$ but is unaware of the set $L$ and the initial state $x_0$. In addition, we assume that the adversary is limited to sensor attacks. That is, unlike the attack models considered in the extended moving target defense and nonlinear moving target defense, no integrity attacks will be performed on actuators. Hence the problem of identifying malicious nodes is independent of the control input, allowing us to disregard the control input and let $B_k$ be constant. In the deterministic case, the dynamics are then given by
\begin{equation} \small
x_{k+1} = A_k x_k, \quad y_k = C_k x_k + D^a d_k^a. \label{eq:timevarying}
\end{equation}

\begin{remark}
In the deterministic case, we explore attacks where the defender has no knowledge of the initial state. While this is certainly not realistic, the attack vectors developed in this scenario can still remain stealthy in a practical stochastic setting if the adversary carefully ensures that his or her initial attack inputs remain hidden by the noise of the system.
\end{remark}

We now characterize attacks which are not unambiguously identifiable. For notational simplicity let the $s$th row of $C_k$ and $D^a$ be denoted as $C_k^s$ and $D^s$, respectively.
\begin{theorem}
An attack on sensor $s$ in \eqref{eq:timevarying}  is not unambiguously identifiable at time $t$ if and only if there exists an $x_0^*$ such that $D^{s} d_k^a = C_k^s (\prod_{j=0}^{k-1} A_{k-1-j})x_0^*$ for all time $0 \le k \le t$ and $C_k^s (\prod_{j=0}^{k-1} A_{k-1-j})x_0^* \neq 0$ for some time $0 \le k \le t$. \label{MTImage}
\end{theorem}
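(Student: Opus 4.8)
The plan is to unwind the definition of unambiguous identifiability into a purely linear-algebraic condition by propagating the deterministic dynamics \eqref{eq:timevarying} forward in time. First I would note that, since the recursion $x_{k+1}=A_kx_k$ is linear and homogeneous, the state at time $k$ is $x_k=\Psi_k x_0$, where $\Psi_k\triangleq\prod_{j=0}^{k-1}A_{k-1-j}=A_{k-1}A_{k-2}\cdots A_0$ is the state-transition product appearing in the theorem (with $\Psi_0=I$ the empty product). Hence the observed $s$th output under the true initial state $x_0$ and attack sequence $\{d_k^a\}$ is the affine expression
\[
y_k^s = C_k^s\Psi_k x_0 + D^s d_k^a,
\]
whereas the hypothetical attack-free output generated by a candidate initial state $x_0^{**}$ is $y^s(x_0^{**},0,k)=C_k^s\Psi_k x_0^{**}$.

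Next I would translate the definition verbatim. The attack is not unambiguously identifiable at time $t$ exactly when there is some $x_0^{**}\in\mathbb{R}^n$ with $y_k^s=C_k^s\Psi_k x_0^{**}$ for every $0\le k\le t$. Substituting the affine expression for $y_k^s$ and rearranging yields $D^s d_k^a=C_k^s\Psi_k(x_0^{**}-x_0)$ on the window. Introducing the change of variables $x_0^*\triangleq x_0^{**}-x_0$ produces exactly the first asserted equality $D^s d_k^a=C_k^s\Psi_k x_0^*$; since the translation $x_0^{**}\mapsto x_0^{**}-x_0$ is a bijection of $\mathbb{R}^n$, the existence of a matching $x_0^{**}$ is equivalent to the existence of such an $x_0^*$, which settles the first clause in both directions by bookkeeping alone.

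The step I expect to demand the most care is reconciling the standing ``nonzero attack'' hypothesis with the second clause $C_k^s\Psi_k x_0^*\neq 0$ for some $0\le k\le t$. The strategy is to observe that, under the first clause, $C_k^s\Psi_k x_0^*=D^s d_k^a$ holds identically across the window, so $C_k^s\Psi_k x_0^*$ is nonzero for some $k\le t$ if and only if $D^s d_k^a$ is nonzero for some $k\le t$, i.e. if and only if the attack genuinely biases sensor $s$ within $[0,t]$. I would argue this is precisely the force of the ``nonzero'' qualifier: were $D^s d_k^a$ to vanish throughout $[0,t]$, the received output would coincide with the true attack-free output of $x_0$, the trivial choice $x_0^{**}=x_0$ would render sensor $s$ indistinguishable from an unattacked one, and no nonzero attack would be witnessed on the window. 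Establishing this correspondence cleanly --- and confirming that the indicator definition $D_{uv}^a(L)=\mathbb{I}_{u=s_i,v=i}$ indeed makes the scalar $D^s d_k^a$ the bias injected into sensor $s$ (and $0$ when $s\notin L$) --- is the main obstacle, though it is conceptual rather than computational.

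Finally I would chain the two equivalences: not unambiguously identifiable at $t$ $\Leftrightarrow$ existence of $x_0^{**}$ matching the output on $[0,t]$ $\Leftrightarrow$ existence of $x_0^*$ satisfying $D^s d_k^a=C_k^s\Psi_k x_0^*$ on $[0,t]$ together with the witness $C_k^s\Psi_k x_0^*\neq 0$ for some $k\le t$, which is the claimed characterization. No heavier machinery (observability Gramians, rank tests, or the like) seems needed here, since the result is a faithful restatement of the definition through the forward map $\Psi_k$.
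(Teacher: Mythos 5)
Your proposal is correct and takes essentially the same route as the proof the paper defers to \cite{weerakkody2016moving}: propagate the homogeneous dynamics to write $y_k^s = C_k^s\Psi_k x_0 + D^s d_k^a$, negate the definition of unambiguous identifiability, and absorb the true initial state via the change of variables $x_0^* = x_0^{**} - x_0$. Your treatment of the second clause is also the intended one --- under the matching condition it is exactly the requirement that the attack actually biases sensor $s$ within $[0,t]$, which is what the ``nonzero attack'' qualifier in the definition supplies --- so nothing further is needed.
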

\begin{proof}
The proof is given in \cite{weerakkody2016moving}.
\end{proof}
Changing the system matrices as a function of time allows the system to act like a moving target. Even if an attacker is aware of the existing configurations $\Gamma$ of the system, he or she will likely be forced to generate unambiguously identifiable attacks since he or she is not aware of the sequence of system matrices.

\subsection{System Design for Deterministic Identification}
We now consider criteria that can allow a defender to design an effective set $\Gamma$. We assume that the adversary knows $\Gamma$, the sequence of attack inputs $D^ad_{0:k}^a$, and the probability distribution of the sequence of system matrices $A_k$ and $C_k$ but does not know the input sequence $u_{0:k-1}$ or the output sequence $y_{0:k}$. Given this knowledge, an adversary can guess the sequence of system matrices and if correct can generate attacks that are not unambiguously identifiable.

We would like to consider systems where $A_k$ and $C_{k}$ remain constant for multiple time steps due to the system's inertia. For now, we assume the pair $(A_{k},C_{k}) \subset \Gamma$ is \emph{constant}. An adversary can use his or her knowledge of $\Gamma$ to guess a pair  $(A_{k},C_{k}) \in \Gamma$ and generate unidentifiable attack inputs. We next determine when an attacker is able to guess an incorrect pair and avoid generating an unambiguously identifiable attack. 

\begin{theorem}
\label{UnambiguousAttackEigenvalues}
 Suppose $(A,C) = (A(1),C(1))$ and an adversary generates a nonzero attack input on sensor $s$ using $(A(2),C(2))$ by inserting attacks along the image of $\mathcal{O}_{t,2}^{{s}}$ where $\mathcal{O}_{t,j}^{{s}}\triangleq\text{\small$\begin{bmatrix}(C^s(j))^T&(C^s(j)A(j))^T\text{ }\cdots\text{ }(C^s(j)A(j)^{t-1})^T\end{bmatrix}^T$}$. Let $\Lambda^j = \{\lambda_1^j, \cdots, \lambda_{q_j}^j\}$ be the set of distinct eigenvalues associated with $A(j)$. Let $\text{\small$\left\{v_{1,1}^{\lambda_i^j},\cdots,v_{1,r_{ij}(1)}^{\lambda_i^j},\cdots,v_{\ell(\lambda_i^j),1}^{\lambda_i^j},\cdots,v_{\ell(\lambda_i^j),r_{ij}(\ell(\lambda_i^j))}^{\lambda_i^j} \right\}$}$ be a maximal set of linearly independent (generalized) eigenvectors associated with eigenvalue $\lambda_i^j$ with $\ell(\lambda_i^j)$ Jordan blocks satisfying
\begin{equation} \small
A(j) v_{\tau,1}^{\lambda_i^j} = \lambda_i^j v_{\tau,1}^{\lambda_i^j}, \quad A(j)v_{\tau,k+1}^{\lambda_i^j} = \lambda_i^j v_{\tau,k+1}^{\lambda_i^j} + v_{\tau,k}^{\lambda_i^j}.
\end{equation}
Let $r_{ij}^M = \underset{t}{\max}~r_{ij}(t)$, and define $V_{s,k}^{\lambda_i^j} \in \mathbb{C}^{r_{ij}^M \times r_{ij}(k)}$ as
\begin{equation*} \small
V_{s,k}^{\lambda_i^j} \triangleq
\begin{bmatrix}
C^s(j) v_{k,1}^{\lambda_i^j} & \cdots & C^s(j) v_{k,r_{ij}(k)}^{\lambda_i^j} \\
0 & \ddots & \vdots \\
0 & \cdots & C^s(j) v_{k,1}^{\lambda_i^j} \\
0 & \cdots & 0
\end{bmatrix}.
\end{equation*}
There exists an attack on sensor $s$ which is not unambiguously identifiable for all time if and only if there exists $\lambda_{i_1}^1 \in \Lambda^1$ and $\lambda_{i_2}^2 \in \Lambda^2$ which satisfy $\lambda_{i_1}^1 = \lambda_{i_2}^2$ and
\begin{equation*} \small
\begin{split}
\textup{Null}\begin{pmatrix}\mathcal{V}_s^{\lambda_{i_1}^1}&\mathcal{V}_s^{\lambda_{i_2}^2}\end{pmatrix} &> \textup{Null}\begin{pmatrix}\mathcal{V}_s^{\lambda_{i_1}^1}\end{pmatrix} + \textup{Null}\begin{pmatrix}\mathcal{V}_s^{\lambda_{i_2}^2}\end{pmatrix}, \\
\mathcal{V}_s^{\lambda_{i_j}^j} \triangleq& \begin{pmatrix}V_{s,1}^{\lambda_{i_j}^j}\text{ }\cdots\text{ }V_{s,\ell(\lambda_{i_j}^j)}^{\lambda_{i_j}^j}\\0_{r_j^M\times\sum_{t=1}^{\ell(\lambda_{i_j}^j)}r_{i_jj}(t)}\end{pmatrix},
\end{split}
\end{equation*}
with $r_1^M=r_{i_22}^M-r_{i_11}^M$ if $r_{i_11}^M<r_{i_22}^M$, $r_2^M=r_{i_11}^M-r_{i_22}^M$ if $r_{i_11}^M>r_{i_22}^M$, and $r_j^M=0$ otherwise.

Otherwise the attack can be detected in time $t \le 2n-1$. \label{theorem:det}
\end{theorem}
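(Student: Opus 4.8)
The plan is to apply \Cref{MTImage} to turn unambiguous identifiability into a question about when two linear time-invariant systems can produce the same scalar output sequence, and then to resolve that question through a modal (Jordan) decomposition. Since the pair $(A_k,C_k)$ is constant and equal to $(A(1),C(1))$, the product $\prod_{j=0}^{k-1}A_{k-1-j}$ collapses to $A(1)^k$, so \Cref{MTImage} states that the attack is not unambiguously identifiable at time $t$ exactly when there is an $x_0^*$ with $D^s d_k^a = C^s(1)A(1)^k x_0^*$ on $0\le k\le t$ and a nonzero common value for some $k$. The adversary builds the attack along the image of $\mathcal{O}_{t,2}^s$, i.e. $D^s d_k^a = C^s(2)A(2)^k x_0^a$ for some $x_0^a$. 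Hence the whole statement reduces to characterizing when there exist $x_0^a,x_0^*$ with $C^s(2)A(2)^k x_0^a = C^s(1)A(1)^k x_0^*$ for all $k$, with this common sequence not identically zero (nonzero common output automatically forces the attack input $d_k^a$ to be nonzero).

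First I would expand each side in the Jordan basis of its own $A(j)$. Using the chain identity $A(j)^k v_{\tau,m}^{\lambda}=\sum_{p\ge 0}\binom{k}{p}\lambda^{k-p}v_{\tau,m-p}^{\lambda}$, each output becomes a finite linear combination of the modal functions $\{\binom{k}{p}\lambda^{k-p}\}$, i.e. an exponential-polynomial sequence whose coefficient of $\binom{k}{p}\lambda^{k-p}$ is a sum of terms $C^s(j)v_{\cdot,\cdot}^{\lambda}$. This is precisely what the shifted/Toeplitz blocks $V_{s,k}^{\lambda_i^j}$ and their stack $\mathcal{V}_s^{\lambda_i^j}$ record: $\alpha\mapsto\mathcal{V}_s^{\lambda_{i_1}^1}\alpha$ sends a vector of generalized-eigenvector coordinates to the vector of modal coefficients of the mode-$1$ output along $\lambda_{i_1}^1$, and likewise $\beta\mapsto\mathcal{V}_s^{\lambda_{i_2}^2}\beta$ for mode $2$. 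The main bookkeeping step is verifying that the shift structure of $V_{s,k}^{\lambda_i^j}$ reproduces the binomial coefficients correctly and that the zero-padding reconciles unequal maximal Jordan-block sizes $r_{i_11}^M$ and $r_{i_22}^M$; this is where essentially all of the care is needed.

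Next I would invoke linear independence of the modal functions $\{k^p\lambda^k : \lambda\in\mathbb{C},\,p\ge 0\}$ (a standard Wronskian/Vandermonde fact) to conclude that two exponential-polynomial sequences coincide for all $k$ if and only if their coefficients agree separately on each eigenvalue. Consequently a nonzero matched sequence can live only on a shared eigenvalue $\lambda_{i_1}^1=\lambda_{i_2}^2$, and there the requirement becomes: there exist $\alpha,\beta$ with $\mathcal{V}_s^{\lambda_{i_1}^1}\alpha=\mathcal{V}_s^{\lambda_{i_2}^2}\beta\neq 0$ (the adversary concentrates $x_0^a$ on that eigenvalue). A short argument then identifies this with the stated inequality: $(\alpha,-\beta)$ lies in $\textup{Null}(\mathcal{V}_s^{\lambda_{i_1}^1}\ \ \mathcal{V}_s^{\lambda_{i_2}^2})$ but, since the common value is nonzero, cannot be written as a sum of a vector in $\textup{Null}(\mathcal{V}_s^{\lambda_{i_1}^1})$ and one in $\textup{Null}(\mathcal{V}_s^{\lambda_{i_2}^2})$; hence such a pair exists iff $\textup{Null}(\mathcal{V}_s^{\lambda_{i_1}^1}\ \ \mathcal{V}_s^{\lambda_{i_2}^2})>\textup{Null}(\mathcal{V}_s^{\lambda_{i_1}^1})+\textup{Null}(\mathcal{V}_s^{\lambda_{i_2}^2})$, establishing the ``if and only if.''

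Finally, for the detection-time bound I would view the difference of the two responses as the output of the stacked system $\mathrm{diag}(A(1),A(2))$ with output matrix $[\,C^s(1)\ \ -C^s(2)\,]$ and initial state $(x_0^*,x_0^a)$, which has dimension $2n$. If this output vanishes on the window $k=0,\dots,2n-1$ (that is, $2n$ consecutive steps), then by Cayley--Hamilton the initial state lies in the unobservable subspace, so the output vanishes for all $k$ and the attack is matchable forever. Taking the contrapositive, whenever the eigenvalue/nullspace condition fails no nonzero attack is matchable for all time, so any such attack must fail to match every legitimate output already on $[0,2n-1]$ and is therefore unambiguously identifiable at $t\le 2n-1$. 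I expect the principal obstacle to be the Jordan-chain bookkeeping of the second paragraph, namely matching the $\binom{k}{p}$ coefficient pattern to the shift and zero-padding conventions of $\mathcal{V}_s^{\lambda_i^j}$, rather than the closing linear-algebra or Cayley--Hamilton steps.
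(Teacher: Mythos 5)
Your proposal is correct and takes essentially the same route as the paper's proof (deferred to the cited arXiv reference): reduce via Theorem~\ref{MTImage} to asking when $C^s(2)A(2)^k x_0^a = C^s(1)A(1)^k x_0^*$ can hold for all $k$ with a nonzero common sequence, expand in Jordan coordinates so that $\mathcal{V}_s^{\lambda}$ maps generalized-eigenvector coefficients to the modal coefficients of $\binom{k}{p}\lambda^{k-p}$ (with the zero-padding equalizing the two systems' maximal chain lengths), invoke linear independence of the modal sequences to confine any nonzero match to a shared eigenvalue, and observe that the nullity inequality is exactly nontrivial intersection of the ranges of $\mathcal{V}_s^{\lambda_{i_1}^1}$ and $\mathcal{V}_s^{\lambda_{i_2}^2}$. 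Your closing Cayley--Hamilton argument on the $2n$-dimensional stacked system $\mathrm{diag}\bigl(A(1),A(2)\bigr)$ with output $\bigl[C^s(1)\ \ -C^s(2)\bigr]$ is likewise the standard way the $t \le 2n-1$ bound is obtained, so no gaps remain beyond the Toeplitz/binomial bookkeeping you already flagged and carried out correctly.
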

\begin{proof}
The proof is given in \cite{weerakkody2016movingarxiv}.
\end{proof}
Roughly speaking, given enough observations, the output at sensor $s$ for a time-invariant system will be dominated by the observable mode(s) that have the largest eigenvalue. Thus, if the eigenvalues between two system matrices are distinct, we are able to distinguish the resulting outputs. Theorem \ref{UnambiguousAttackEigenvalues} gives the defender an efficient way to determine if an attacker can guess $\Gamma$ incorrectly and still remain unidentified when the system matrices are kept constant for at least $2n$ time steps. It also prescribes a means to perform perfect identification.\\
\textbf{Design Recommendations}
\begin{enumerate}
\item  For all pairs $i \neq j \in \{1, \cdots, l\}$, $\Lambda^i \cap \Lambda^j = \emptyset$.
\item  The system matrices $(A_k,C_k)$ are periodically changed after every $\kappa \ge 2n$ time steps.
\item  Let $\{l_k\}$ be a sequence where $l_k \in \{1, \cdots, l\}$. Let $\varrho_k$ denote the indices of a subsequence. $\mbox{Pr}((A_{\varrho_k},C_{\varrho_k}) = (A(l_k),C(l_k)), \text{ }\forall k) = 0$.
\item  The pair $(A(i),C(i))$ is observable $\forall i \in \{1, \cdots, l\}$.
\item  For all $i \in \{1, \cdots, l\}$, $0 \notin \Lambda^i$.  
\end{enumerate}

\begin{corollary}
Assume a defender follows the design recommendations. Suppose sensor $s$ is attacked and there is no $t^*$ such that $D^{s} d_k^a = 0$ for all $k \ge t^*$. Then the sensor attack will be unambiguously identifiable with probability 1. \label{cor:design}
\end{corollary}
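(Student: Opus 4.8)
The plan is to exploit the windowed structure imposed by the design recommendations, reduce the problem to a per-window application of Theorem~\ref{theorem:det}, and then close with a probabilistic guessing argument. First I would use recommendation 2 to partition the time axis into consecutive windows of length $\kappa \ge 2n$, on each of which the true pair $(A_k,C_k)$ equals a constant mode $(A(l_w),C(l_w))\in\Gamma$. By Theorem~\ref{MTImage}, if the attack on sensor $s$ is \emph{not} unambiguously identifiable, then there is a single fake initial condition $x_0^*$ whose free response $\xi_k = (\prod_{j=0}^{k-1}A_{k-1-j})x_0^*$ reproduces the attack, i.e. $D^s d_k^a = C_k^s \xi_k$ for all $k$. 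Restricted to a window, this forces the attack output over that window to lie in the image of the observability matrix $\mathcal{O}_{\kappa,l_w}^s$ of the true mode, with per-window initial state $\xi_{T_w}$.

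Next I would establish a per-window uniqueness statement: over $\kappa \ge 2n$ steps with a constant mode, a \emph{nonzero} sensor-$s$ output sequence can be a free response of at most one mode. This is the content of Theorem~\ref{theorem:det} combined with recommendation 1. If the attacker constructs its window output as a response of a guessed mode $j$ while the true mode is $i \ne j$, then since $\Lambda^i \cap \Lambda^j = \emptyset$ the shared-eigenvalue condition of Theorem~\ref{theorem:det} cannot hold, so the attack is detected in at most $2n - 1 \le \kappa - 1$ steps; equivalently, $\mathrm{image}(\mathcal{O}_{\kappa,i}^s) \cap \mathrm{image}(\mathcal{O}_{\kappa,j}^s) = \{0\}$. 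Recommendations 4 and 5 supply the nondegeneracy this step relies on: observability lets the per-window initial state be recovered from the sensor-$s$ output, while $0 \notin \Lambda^i$ keeps the fake trajectory $\xi_k$ from collapsing and makes the inter-window transition invertible, so a persistent attack genuinely exercises this uniqueness in infinitely many windows rather than in a decaying or annihilated direction.

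With uniqueness in hand, I would interpret the attacker's behavior as an implicit sequence of mode guesses. Since in the deterministic identification model the adversary does not observe $y_{0:k}$ and therefore cannot learn the realized mode sequence, the attack $D^s d_{0:k}^a$ is a predetermined function of the attacker's information alone. In any window where the sensor-$s$ attack is nonzero, the produced window output lies in the image of at most one $\mathcal{O}_{\kappa,\hat l_w}^s$ (and in none if the attacker miscalculates, which only helps), committing the attacker to a single guess $\hat l_w$ that is independent of the realized modes. By the persistence hypothesis, that there is no $t^*$ with $D^s d_k^a = 0$ for all $k \ge t^*$, there are infinitely many such active windows, indexed by some subsequence $\varrho$. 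For the attack to evade identification in every window we need $\hat l_w = l_{\varrho_w}$ for all $w$, and recommendation 3 states precisely that this matching event has probability zero along any subsequence. Conditioning on the attacker's private randomness, which is independent of the mode sequence, and integrating preserves the zero probability, so with probability one the attacker guesses incorrectly in some active window, the attack output there fails to lie in $\mathrm{image}(\mathcal{O}_{\kappa,l_w}^s)$, no global $x_0^*$ can explain the observations, and the attack is unambiguously identifiable.

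The main obstacle I anticipate is the middle step: rigorously converting ``the attack is stealthy over a window'' into ``the attacker committed to a unique mode in that window,'' and verifying that the inter-window transitions do not let the adversary re-exploit ambiguity across window boundaries. This is exactly where recommendations 4 and 5 earn their place, together with the requirement that a \emph{single} global $x_0^*$ must explain all windows simultaneously through the invertible transitions $A(l_w)^{\kappa}$. Once the guess sequence $\{\hat l_w\}$ and the active-window subsequence $\varrho$ are argued to be independent of the realized modes, the final probabilistic conclusion via recommendation 3 is comparatively routine.
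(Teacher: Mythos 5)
Your proposal is correct and follows essentially the same route as the paper's proof (which the paper defers to the cited companion work): partition time into constant-mode windows of length $\kappa \ge 2n$, use Theorem~\ref{theorem:det} under recommendation 1 to show that a nonzero per-window attack signature lies in the image of at most one mode's observability matrix and thus commits the attacker to a single mode guess independent of the realization, and then combine the persistence hypothesis (infinitely many active windows) with recommendation 3 to conclude the matching event has probability zero. The one deviation is minor: the paper explicitly remarks that recommendations 4 and 5 are \emph{not} needed for this corollary (they are justified later for the stochastic setting), whereas you treat them as load-bearing; the nondegeneracy you invoke them for already follows from the windowed uniqueness argument, so your uses of observability and $0 \notin \Lambda^i$ are harmless but removable.
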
 
\begin{proof}
The proof is given in \cite{weerakkody2016moving}.
\end{proof}
As a result, an attacker who persistently biases a sensor will be perfectly identified. Note that recommendation 3 can be achieved with an IID assumption or an aperiodic and irreducible Markov chain. The last 2 recommendations are not needed for this result but are justified in the next subsection when we consider stochastic systems.

\begin{remark}
Keeping the system matrices constant for a long enough period of time appears counter-intuitive for the hybrid moving target. However, the given adversary is not performing system identification and is instead guessing the system matrices. As such, keeping the dynamics constant does not provide useful information for an attacker. Additionally, keeping the matrices constant long enough gives the defender the information he or she needs to distinguish between the different hybrid states. Similar to the problem of observability, the problem of identification involves a rank deficient matrix until enough measurements have been gathered.
\end{remark}

\subsection{False Data Injection Detection}
We now examine the effectiveness of the hybrid moving target defense for detection in the case of a stochastic system where the dynamics are given by
\begin{equation} \small
x_{k+1} = A_k x_k + w_k, \quad y_k = C_k x_k + D^a d_k^a + v_k. \label{eq:timevaryingstoc}
\end{equation}

The information and goals of the adversary and the defender remain unchanged except that both the adversary and the defender are aware of the noise statistics and the defender knows the distribution of the initial state $x_0\sim\mathcal{N}(\hat{x}_{0|-1},P_{0|-1})$. To characterize detection performance, we consider the additive bias $\Delta z_k$ the adversary injects on the normalized residues due to his or her sensor attacks. The normalized residue is the normalized difference between the observed measurement and its expected value, which is slightly different than the unnormalized residue defined in section II that is used throughout the rest of the article. The bias $\Delta e_k\triangleq x_k-\hat{x}_{k|k}$ on the a posteriori state estimation error and the bias $\Delta z_k$ on the normalized residues are given by 
\begin{align}
\text{\small$\Delta e_k$} &\text{\small$= (A_{k-1}-K_kC_kA_{k-1}) \Delta e_{k-1} - K_k D^a d_k^a,$} \\
\text{\small$\Delta z_k$} &\text{\small$= (C_k P_{k|k-1} C_k^T + R)^{-\frac{1}{2}}\left( C_k A_{k-1} \Delta e_{k-1} + D^a d_k^a \right),$}
\end{align}
with $\Delta e_{-1} = 0$. A residue detector such as the $\chi^2$ detector will recognize large residues and mark them as belonging to an attack. We now show that an adversary is restricted in the bias he or she can inject on the state estimation error without significantly biasing the residues and incurring detection.
\begin{theorem} 
Suppose a defender uses a hybrid moving target defense leveraging the design recommendations listed above. Then  $\limsup_{k \rightarrow \infty} \| \Delta e_k \| = \infty \implies  \limsup_{k \rightarrow \infty} \| \Delta z_k \| = \infty$ with probability 1. \label{theorem:fdi}
\end{theorem}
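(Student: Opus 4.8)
The plan is to argue the contrapositive: assuming the additive residue bias stays bounded, $\limsup_k \|\Delta z_k\| < \infty$, I would show that the estimation-error bias $\Delta e_k$ must also remain bounded. First I would convert the residue bound into a bound on the a priori residual bias. Since $R \succ 0$, the a priori covariance $P_{k|k-1}$ is uniformly bounded (stabilizability of $(A,Q^{1/2})$ and observability per design recommendation~4), and $C_k$ ranges over the finite set $\Gamma$, the normalizer $C_kP_{k|k-1}C_k^T + R$ is bounded above and below. Hence $\|\Delta z_k\|$ bounded is equivalent to $\|r_k\|$ bounded, where $r_k \triangleq (C_kP_{k|k-1}C_k^T+R)^{1/2}\Delta z_k = C_kA_{k-1}\Delta e_{k-1} + D^a d_k^a$. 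Substituting $D^a d_k^a = r_k - C_kA_{k-1}\Delta e_{k-1}$ into the error recursion collapses the closed loop to the open-loop update $\Delta e_k = A_{k-1}\Delta e_{k-1} - K_k r_k$, with a uniformly bounded forcing term $K_k r_k$.

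Next I would exploit the window structure of design recommendation~2: time splits into intervals of length $\kappa \ge 2n$ on which $(A_k,C_k)$ equals a fixed mode $(A(j),C(j))$. Iterating the open-loop update across one interval gives $\phi_{\tau+1} = A(j_\tau)^\kappa \phi_\tau - \rho_\tau$, where $\phi_\tau$ is the error bias at the interval boundary and $\rho_\tau$ is bounded by a constant (depending on $\kappa$ and $\Gamma$) times $\sup_k\|r_k\|$. Thus the only way $\|\Delta e_k\|$ can escape to infinity is for $\phi_\tau$ to accumulate growth along the unstable eigendirections of the successive modes; the stable directions are uniformly contracted, and the bounded forcing alone cannot produce divergence.

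The crux, and the main obstacle, is to show that sustaining this unstable growth is incompatible with $\|r_k\|$ staying bounded, with probability one. This is exactly where the moving target is essential: for a single \emph{known} LTI mode with full sensor access the claim is false, since an omniscient attacker could set $D^ad_k^a = -C A_{k-1}\Delta e_{k-1}$, forcing $r_k \equiv 0$ while $\Delta e_k = A_{k-1}\Delta e_{k-1}$ diverges along the unstable eigenspace. The defense defeats precisely this cancellation: the attacker must commit $d_k^a$ from $\mathcal{I}_k^A$ without the realized mode, and recommendation~1 (pairwise-disjoint spectra), recommendation~5 ($0 \notin \Lambda^i$), and recommendation~3 (the mode sequence almost surely departs from any fixed guess) guarantee that the eigenstructure the attacker tries to cancel almost surely mismatches the actual eigenstructure at infinitely many mode switches. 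I would make this rigorous by reusing the deterministic machinery: by the dominant-eigenvalue reasoning following Theorem~\ref{theorem:det}, the growing component of $\Delta e_k$ is asymptotically aligned with the largest observable eigenvalue of the current mode, and for the residual to stay bounded this growing output component would have to be reproducible at sensor $s$ by some legitimate initial state, which is exactly the condition that the attack is \emph{not} unambiguously identifiable. Corollary~\ref{cor:design} rules this out with probability one for any persistent attack, yielding the contradiction.

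Finally, I would close the technical gaps. The Kalman gain $K_k$ need not reach steady state within a single $2n$-length window, so I would either invoke convergence of $P_{k|k-1}$ over repeated visits to a mode or carry uniform bounds on the transient gains. I would handle the attacker's adaptivity (since $d_k^a$ may depend on intercepted measurements, hence indirectly on past modes) by conditioning on the filtration of $\mathcal{I}_k^A$ and applying the probability-one identifiability guarantee along the random mode sequence. The single hardest step remains bridging the \emph{bounded}, rather than exactly zero, residual regime to the exact-consistency notion underlying unambiguous identifiability, namely showing that a bounded residual cannot absorb an unbounded, unstable error growth unless that growth is output-consistent at sensor $s$.
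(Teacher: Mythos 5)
The paper itself contains no argument for Theorem~\ref{theorem:fdi}; it defers entirely to \cite{weerakkody2016moving}, so your attempt can only be judged on its own merits and against the one ingredient the paper does signal (the parenthetical ``a Kalman filter with bounded covariance (see proof of Theorem~\ref{theorem:fdi})''). Your opening reductions are correct and almost certainly coincide with the reference's setup: with recommendations 2 and 4 giving uniform complete observability and $R \succ 0$, the normalizer $C_kP_{k|k-1}C_k^T + R$ is bounded above and below, so bounded $\Delta z_k$ is equivalent to bounded $r_k = C_kA_{k-1}\Delta e_{k-1} + D^ad_k^a$; and the algebraic collapse of the closed-loop bias recursion to $\Delta e_k = A_{k-1}\Delta e_{k-1} - K_k r_k$ is exactly right. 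This correctly isolates the content of the theorem: unbounded $\Delta e_k$ with bounded forcing requires the attacker to track the exponentially growing term $C_kA_{k-1}\Delta e_{k-1}$ to within $O(1)$ without knowing the realized mode sequence.

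The genuine gap is the step you yourself flag as unresolved, and flagging it does not close it: you reduce the stochastic, bounded-residual regime to Theorem~\ref{theorem:det} and Corollary~\ref{cor:design}, but those are noise-free, \emph{exact}-consistency statements --- an attack is unidentifiable only if some initial state reproduces the output \emph{exactly} --- and they say nothing about an attacker who matches the growing output component only up to a bounded error. What is actually needed is a quantitative mismatch lemma: if $\|\Delta e_k\|$ diverges, it grows exponentially along products of mode matrices, and (by recommendations 1 and 5, disjoint nonzero spectra) any input sequence tracking $C_kA_{k-1}\Delta e_{k-1}$ to within a constant must commit to the realized mode at each switch; on a wrong commitment the tracking error is proportional to $\|\Delta e_k\|$ itself, hence unbounded along the almost-surely infinite sequence of mismatches. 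Your proposal asserts this conclusion but never proves the proportionality, and the Jordan-block/dominant-eigenvalue alignment you invoke is delicate under switching every $\kappa$ steps. A second unaddressed point: recommendation 3 bounds the probability that a \emph{fixed} guessed subsequence matches, but $\mathcal{I}_k^A$ contains $y_{0:k}^A$, which leaks mode information, so an adaptive attacker is not making a fixed guess; ``conditioning on the filtration'' is named but no argument is given that the conditional mode uncertainty persists at switch times, and this is where the real probabilistic work of the reference's proof lies. Finally, a minor but real error: your single-mode counterexample is misstated --- with $r_k \equiv 0$ and $\Delta e_{-1} = 0$ your own recursion forces $\Delta e_k \equiv 0$; the correct classical construction uses bounded \emph{nonzero} $r_k$ seeding growth along an unstable eigenvector of $A$, which is the mechanism the moving target must (and, per the theorem, does) defeat.
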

\begin{proof}
The proof is given in \cite{weerakkody2016moving}.
\end{proof}
Thus the attacker is able to destabilize the estimation error only by destabilizing the residues. As such, there is a point where an attacker is unable to introduce additional bias to the estimation error without revealing his or her presence due to his or her effect on the measurement residues.

\subsection{Resilient Estimation and Identification}
While the hybrid moving target defense guarantees we can detect unbounded false data injection attacks, we want to also identify specific malicious sensors as in the deterministic case. To do so, we present a resilient estimator that fuses state estimates generated by individual sensors since previous results \cite{sun2004multi,gan2001comparison} suggest such an estimator has better fault tolerance. This is desirable since we are attempting to force a normally stealthy adversary to generate faults. We will show that an attacker can destabilize this estimator only if the culprit sensors can be identified. In particular, we will show that the estimation error will become unbounded only if the bias on a sensor residue is also unbounded.

To begin, we assume that for each sensor $s$, $\text{NS}(\mathcal{O}_{n,1}^{s}) = \cdots = \text{NS}(\mathcal{O}_{n,l}^{s})$, where $\text{NS}(A)$ denotes the null space of $A$. Such a condition is realistic since it implies that changing the system dynamics does not affect what portion of the state the sensor itself can observe. Using a Kalman decomposition for each sensor $s$, there exists a state transformation $\mathcal{T}_s \triangleq \begin{bmatrix} \mathcal{T}_s^{uo} & \mathcal{T}_s^o \end{bmatrix}$ such that $\begin{bmatrix} \mathcal{T}_s^{uo} & \mathcal{T}_s^o \end{bmatrix} \text{\small$\begin{bmatrix} \zeta_{k,s}^{{uo}^T} & \zeta_{k,s}^T \end{bmatrix}^T$} = x_k$ and $\begin{bmatrix} \mathcal{T}_s^{uo} & \mathcal{T}_s^o \end{bmatrix} \text{\small$\begin{bmatrix} \psi_{k,s}^{{uo}^T} & \psi_{k,s}^T \end{bmatrix}^T$} = w_k$. Here, the columns of $\mathcal{T}_s^{uo}$ are a basis for $\text{NS}(\mathcal{O}_{n,1}^{s})$ while the columns of $\mathcal{T}_s^o$ should be chosen so the resulting $\mathcal{T}_s$ is invertible. Using the same transform $\mathcal{T}_s$ for each mode in $\Gamma$, there exists a  $\Gamma^s = \{(C_s(1),A_s(1)),\cdots,(C_s(l),A_s(l))\}$ corresponding to $\Gamma$ such that
\begin{equation} \small
\zeta_{k+1,s} = A_{k,s} \zeta_{k,s} + \psi_{k,s}, \quad y_k^s = C_{k,s} \zeta_{k,s} + v_k^s,
\end{equation}
where each pair $(A_{k,s},C_{k,s})$ is observable and belongs to $\Gamma^s$.

A Kalman filter with bounded covariance (see proof of Theorem \ref{theorem:fdi}) can be constructed to estimate $\zeta_{k,s}$ given $y_{0:k}^s$. From the definition of the Kalman filter, we have
\begin{align}
\label{eq:kalmanprior}
\text{\small$\hat{\zeta}_{k+1|k,s}$} &\text{\small$= A_{k,s} \hat{\zeta}_{k|k,s},$} \\
\label{eq:kalmanposterior}
\text{\small$\hat{\zeta}_{k|k,s}$} &\text{\small$= (I - K_{k,s} C_{k,s}) \hat{\zeta}_{k|k-1,s} + K_{k,s} y_k^s,$} \\
\label{eq:kalmangain}
\text{\small$K_{k,s}$} &\text{\small$= P_{k|k-1}^{s,s} C_{k,s}^T (C_{k,s}  P_{k|k-1}^{s,s} C_{k,s}^T + R_{s,s})^{-1},$} \\
\label{eq:kalmanpriorcov}
\text{\small$P_{k+1|k}^{s_1,s_2}$} &\text{\small$= A_{k,s_1} P_{k|k}^{s_1,s_2} A_{k,s_2}^T + Q_{s_1,s_2},$} \\
\label{eq:kalmanposteriorcov}
\text{\small$P_{k|k}^{s_1,s_2}$} &\text{\small$= (I - K_{k,s_1} C_{k,s_1}) P_{k|k-1}^{s_1,s_2} (I - K_{k,s_2} C_{k,s_2})^T$} \\
&\text{\small$\quad+ K_{k,s_1}R_{s_1,s_2}K_{k,s_2}^T,$} \nonumber \\
\label{eq:normalizedresidue}
\text{\small$z_{k,s}$} &\text{\small$= (C_{k,s}  P_{k|k-1}^{s,s} C_{k,s}^T + R_{s,s})^{-\frac{1}{2}}(y_k^s - C_{k,s}\hat{\zeta}_{k|k-1,s}),$}
\end{align}
where $\hat{\zeta}_{k|k-1,s}$ and $\hat{\zeta}_{k|k,s}$ are the a priori and a posteriori state estimates of $\zeta_{k,s}$, $P_{k|k-1}^{s_1,s_2} \triangleq \mathbb{E}[e_{k|k-1,s_1}e_{k|k-1,s_2}^T]$ and $P_{k|k}^{s_1,s_2} \triangleq \mathbb{E}[e_{k|k,s_1}e_{k|k,s_2}^T]$ are the a priori and a posteriori error covariance matrices with $e_{k|k-1,s}\triangleq\zeta_{k,s}-\hat{\zeta}_{k|k-1,s}$ and $e_{k|k,s}\triangleq\zeta_{k,s}-\hat{\zeta}_{k|k,s}$, $K_{k,s}$ is the Kalman gain, $Q_{s_1,s_2}\triangleq\mathbb{E}[\psi_{k,s_1}\psi_{k,s_2}^T]$, $R_{i,j}$ is the $(i,j)$th entry of $R$, and $z_{k,s}$ is the normalized residue. Note that \eqref{eq:kalmanpriorcov} and \eqref{eq:kalmanposteriorcov} hold for $s_1 = s_2$.

We would like to use the individual state estimates $\hat{\zeta}_{k|k,s}$ associated with each sensor $s$ to obtain an overall state estimate of $x_k$. To do this, first define $x_{k,s}^o$ as $x_{k,s}^o \triangleq \mathcal{T}_s^{o}\hat{\zeta}_{k|k,s} + \alpha_{k,s}$, where $\alpha_{k,s}$ is an IID sequence of Gaussian random variables with $\alpha_{k,s} \sim \mathcal{N}(0,\sigma I)$ for some small $\sigma > 0$. Moreover $\{\alpha_{k,s_1}\}$ and $\{\alpha_{k,s_2}\}$ are independent sequences. $\alpha_{k,s}$ is a mathematical artifact introduced so the subsequent estimator has a simplified closed form and can be easily removed or mitigated by letting $\sigma$ tend to 0. From here we obtain $\mathbf{\hat{y}_k} = W \mathbf{x_k} + \mathbf{\alpha_k}$, where $\mathbf{\hat{y}_k} \triangleq \text{\small$\begin{bmatrix}  x_{k,1}^{o^T} \text{ } \cdots \text{ } x_{k,m}^{o^T} \end{bmatrix}^T$}$, $\mathbf{x_k} \triangleq \text{\small$\begin{bmatrix} \zeta_{k,1}^{uo^T} \text{ } \cdots \text{ } \zeta_{k,m}^{uo^T} & x_k^T \end{bmatrix}^T$}$,
\begin{equation*} \small
\medmuskip=3.62mu
\thinmuskip=3.62mu
\thickmuskip=3.62mu
\mathbf{\alpha_k} \triangleq 
\begin{bmatrix} -\mathcal{T}_1^{o} e_{k|k,1} + \alpha_{k,1} \\ \vdots \\ -\mathcal{T}_m^{o} e_{k|k,m} + \alpha_{k,m} \end{bmatrix},
W \triangleq \begin{bmatrix} -\mathcal{T}_1^{uo} & \cdots & 0 & I \\
\vdots & \ddots & \vdots & \vdots \\
0 & \cdots & -\mathcal{T}_m^{uo} & I \end{bmatrix}.
\end{equation*}
It can be seen that $\mathbf{\alpha_k}$ is normally distributed so that $\mathbf{\alpha_k} \sim \mathcal{N}({0},\Upsilon)$, where $\Upsilon \succ 0$ consists of $m \times m$ blocks with the $(i,j)$th block given by $\mathcal{T}_i^o P_{k|k}^{i,j} \mathcal{T}_j^{oT}+ \delta_{ij}\sigma I$. Here $\delta_{ij}$ is the Kronecker delta. The minimum variance unbiased estimate (MVUB) \cite{Scharf1990} of $\mathbf{x}_k$ given $\mathbf{\hat{y}_k}$ is given by 
\begin{equation} \small
\label{eq:optest}
\mathbf{\hat{x}}_{k} = (W^T \Upsilon^{-1} W)^{-1} W^T \Upsilon^{-1} \mathbf{\hat{y}_k}.
\end{equation}
The last $n$ entries of $\mathbf{\hat{x}_{k}}$, denoted as $\hat{x}_k^*$, constitute an MVUB estimate of $x_k$ given the set of sensor estimates $\mathbf{\hat{{y}}_k}$. We next show that the proposed estimator of $x_k$ has bounded covariance.
\begin{theorem}
Consider the estimator of $x_k$ defined by \cref{eq:kalmanprior,eq:kalmanposterior,eq:kalmangain,eq:kalmanpriorcov,eq:kalmanposteriorcov,eq:normalizedresidue,eq:optest}. The estimator has bounded covariance. \label{theorem:boundedcov}
\end{theorem}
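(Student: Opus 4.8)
The plan is to work directly with the error covariance of the MVUB estimate and show it is uniformly bounded in $k$. Writing the estimation error of \eqref{eq:optest} as $\mathbf{\hat{x}}_k - \mathbf{x}_k = (W^T\Upsilon^{-1}W)^{-1}W^T\Upsilon^{-1}\mathbf{\alpha}_k$, and recalling that $\mathbf{\alpha}_k$ has covariance $\Upsilon$ by construction, the error covariance collapses to the standard Gauss--Markov form $(W^T\Upsilon^{-1}W)^{-1}$ \cite{Scharf1990}. Since the covariance of $\hat{x}_k^*$ is the lower-right $n\times n$ principal block of this matrix, it suffices to produce a uniform upper bound $(W^T\Upsilon^{-1}W)^{-1} \preceq c\,I$ for a constant $c$ independent of $k$, as a principal submatrix inherits such a bound.

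First I would bound $\Upsilon$ from above. Its $(i,j)$ block is $\mathcal{T}_i^o P_{k|k}^{i,j}\mathcal{T}_j^{oT} + \delta_{ij}\sigma I$, where the transforms $\mathcal{T}_s^o$ are fixed and $\sigma$ is a fixed constant, so the only $k$-dependence enters through the bank of filter covariances $P_{k|k}^{i,j}$. The diagonal blocks $P_{k|k}^{s,s}$ are the error covariances of the per-sensor Kalman filters of \cref{eq:kalmanprior,eq:kalmanposterior,eq:kalmangain,eq:kalmanpriorcov,eq:kalmanposteriorcov}, which are uniformly bounded: each pair $(A_{k,s},C_{k,s})$ is observable, belongs to the finite set $\Gamma^s$, and (by design recommendation 2) is held fixed for at least $2n$ steps, so the switching system is uniformly observable and the bounded-covariance argument from the proof of Theorem \ref{theorem:fdi} applies. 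The off-diagonal blocks are then controlled by $\|P_{k|k}^{s_1,s_2}\| \le \sqrt{\|P_{k|k}^{s_1,s_1}\|\,\|P_{k|k}^{s_2,s_2}\|}$, which follows from positive semidefiniteness of the joint error covariance, so every block of $\Upsilon$ is uniformly bounded and hence $\Upsilon \preceq \bar{\Upsilon}\,I$ for some constant $\bar{\Upsilon}$. Consequently $\Upsilon^{-1} \succeq \bar{\Upsilon}^{-1}I$ and, by congruence, $W^T\Upsilon^{-1}W \succeq \bar{\Upsilon}^{-1}\,W^TW$.

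Second I would exploit that $W$ is a constant matrix of full column rank, so that $W^TW \succeq \underline{w}\,I$ for some fixed $\underline{w}>0$. Full column rank follows by solving $Wv=0$ for $v = \begin{bmatrix}\xi_1^T & \cdots & \xi_m^T & \eta^T\end{bmatrix}^T$: each block row forces $\mathcal{T}_s^{uo}\xi_s = \eta$, so $\eta$ lies in $\bigcap_{s}\textup{range}(\mathcal{T}_s^{uo}) = \bigcap_s \textup{NS}(\mathcal{O}_{n,1}^{s})$, which is precisely the unobservable subspace of the full sensor suite; this is trivial since the sensors collectively observe $x_k$, so $\eta=0$, and as each $\mathcal{T}_s^{uo}$ has independent columns, $\xi_s=0$. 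Combining the two bounds yields $W^T\Upsilon^{-1}W \succeq (\underline{w}/\bar{\Upsilon})\,I$, hence $(W^T\Upsilon^{-1}W)^{-1} \preceq (\bar{\Upsilon}/\underline{w})\,I$, and the covariance of $\hat{x}_k^*$ inherits this bound. The role of the artifact $\alpha_{k,s}$ is also clear here: the $\sigma I$ term guarantees $\Upsilon \succ 0$ so that $\Upsilon^{-1}$ and the MVUB estimate are well defined even when the effective measurement noise $-\mathcal{T}_s^o e_{k|k,s}$ is rank deficient.

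The main obstacle is the uniform boundedness of the Kalman bank $\{P_{k|k}^{i,j}\}$ over the random, time-varying mode sequence rather than for a single fixed mode; this is where uniform observability (guaranteed by finitely many observable modes each held for at least $2n$ steps) is essential, and it is the substantive input borrowed from the proof of Theorem \ref{theorem:fdi}. The cross-covariance estimate and the constancy and full column rank of $W$ are comparatively routine once this uniform bound is in hand.
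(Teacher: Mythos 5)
Your proof is correct and follows essentially the same route as the paper's (deferred to \cite{weerakkody2016moving}): the error covariance of \eqref{eq:optest} is $(W^T\Upsilon^{-1}W)^{-1}$, $\Upsilon$ is uniformly bounded because the per-sensor Kalman bank has bounded covariance (the input the paper itself borrows from the proof of Theorem~\ref{theorem:fdi}, with cross-covariances controlled by positive semidefiniteness of the joint error covariance), and $W$ is a constant full-column-rank matrix since $\bigcap_s \textup{NS}(\mathcal{O}_{n,1}^{s})=\{0\}$ under the design recommendations. Your observations that the $\sigma I$ artifact guarantees $\Upsilon\succ 0$ and that the principal $n\times n$ block of $(W^T\Upsilon^{-1}W)^{-1}$ inherits the uniform bound match the paper's construction exactly.
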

\begin{proof}
The proof is given in \cite{weerakkody2016moving}.
\end{proof}

We lastly demonstrate that the proposed estimator is sensitive to biases $\Delta z_{k,s}$ in individual normalized residues, specifically showing that an infinite bias introduced into the estimator implies that the residues are also infinite. Defining $e_k^* \triangleq x_k - \hat{x}_k^*$ and letting $\Delta e_k^*$ represent the bias on $e_k^*$ due to the adversary's inputs, we have the following result.
\begin{theorem}
Consider the estimator of $x_k$ defined by \cref{eq:kalmanprior,eq:kalmanposterior,eq:kalmangain,eq:kalmanpriorcov,eq:kalmanposteriorcov,eq:normalizedresidue,eq:optest}. Then, with probability 1, $\limsup_{k \rightarrow \infty} \| \Delta e_k^* \| = \infty \implies \limsup_{k \rightarrow \infty} \| \Delta z_{k,i} \| = \infty$ for some $i \in \{1, \cdots, m\}$.
\end{theorem}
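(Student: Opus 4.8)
The plan is to prove the contrapositive: assuming every individual normalized-residue bias stays bounded, i.e. $\limsup_{k\to\infty}\|\Delta z_{k,i}\| < \infty$ for all $i \in \{1,\dots,m\}$, I would show that $\limsup_{k\to\infty}\|\Delta e_k^*\| < \infty$. The argument decouples into a per-sensor step followed by a fusion step. First I would observe that, because the adversary corrupts only outputs, the true state $x_k$ and each true substate $\zeta_{k,s}$ carry no bias, so $\Delta e_k^* = -\Delta \hat{x}_k^*$ and $\Delta e_{k|k,s} = -\Delta \hat{\zeta}_{k|k,s}$; the entire problem therefore reduces to controlling the biases injected into the estimates.

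\textbf{Per-sensor step.} Each scalar sensor $s$ runs its own hybrid-moving-target Kalman filter \eqref{eq:kalmanprior}--\eqref{eq:normalizedresidue} on the observable pair $(A_{k,s},C_{k,s}) \in \Gamma^s$. I would verify that $\Gamma^s$ inherits the design recommendations from $\Gamma$: observability holds by construction of the Kalman decomposition, and since the observable eigenvalues of $A_s(j)$ form a subset of the eigenvalues of $A(j)$, recommendations 1 and 5 (distinct, nonzero eigenvalues across modes) are preserved, while recommendations 2--4 concern only the switching sequence and are unchanged. Theorem \ref{theorem:fdi} then applies to sensor $s$'s filter, giving, with probability $1$, $\limsup_k\|\Delta e_{k|k,s}\| = \infty \Rightarrow \limsup_k\|\Delta z_{k,s}\| = \infty$. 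Taking the contrapositive, the standing assumption forces every $\Delta e_{k|k,s}$ --- and hence every $\Delta \hat{\zeta}_{k|k,s}$ and every block $\mathcal{T}_s^o\Delta\hat{\zeta}_{k|k,s}$ of $\Delta\mathbf{\hat{y}_k}$ --- to be bounded, the artificial noise $\alpha_{k,s}$ being adversary-independent and contributing no bias.

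\textbf{Fusion step.} From \eqref{eq:optest}, the bias propagates linearly as $\Delta\mathbf{\hat{x}}_k = (W^T\Upsilon^{-1}W)^{-1}W^T\Upsilon^{-1}\,\Delta\mathbf{\hat{y}_k}$, and $\Delta e_k^* = -\Delta\hat{x}_k^*$ is the last $n$ entries of $\Delta\mathbf{\hat{x}}_k$. The crux is to bound the operator norm of $(W^T\Upsilon^{-1}W)^{-1}W^T\Upsilon^{-1}$ uniformly in $k$. Viewing $\Upsilon$ as the covariance of $\mathbf{\alpha_k}$, the independent artificial noise contributes $\sigma I$ on each diagonal block, so $\Upsilon = M_k + \sigma I \succeq \sigma I$ with $M_k \succeq 0$, giving $\Upsilon^{-1}\preceq \sigma^{-1}I$. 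Theorem \ref{theorem:boundedcov} makes the blocks $\mathcal{T}_i^o P_{k|k}^{i,j}\mathcal{T}_j^{oT}$ of $M_k$ bounded, so $\lambda_{\max}(\Upsilon)$ is bounded above; since $W$ is constant with full column rank, $W^T\Upsilon^{-1}W \succeq \lambda_{\min}(\Upsilon^{-1})\,\sigma_{\min}(W)^2 I \succeq c I$ for a fixed $c>0$, whence $(W^T\Upsilon^{-1}W)^{-1}$ is uniformly bounded. Chaining these estimates yields $\|\Delta e_k^*\| \le \|(W^T\Upsilon^{-1}W)^{-1}W^T\Upsilon^{-1}\|\,\|\Delta\mathbf{\hat{y}_k}\|$ with a $k$-independent operator-norm factor, so the bounded $\Delta\mathbf{\hat{y}_k}$ from the per-sensor step forces $\Delta e_k^*$ bounded, establishing the contrapositive almost surely.

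I expect the main obstacle to be the fusion step's uniform operator bound: one must simultaneously rule out $\Upsilon$ becoming singular (handled by the $\sigma I$ floor) and $W^T\Upsilon^{-1}W$ becoming singular (handled by the constant full-column-rank $W$ together with the bounded-covariance upper bound on $\Upsilon$), both uniformly in $k$. A secondary technical point is justifying rigorously that the design recommendations transfer to each reduced model $\Gamma^s$ so that Theorem \ref{theorem:fdi} may be invoked per sensor; this is precisely what upgrades the conclusion to hold with probability $1$.
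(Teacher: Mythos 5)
Your proposal is correct and follows essentially the same route as the paper's proof, which is not given inline but deferred to \cite{weerakkody2016moving}: there, too, the argument combines the per-sensor application of the Theorem \ref{theorem:fdi}-type stability result to the reduced observable pairs in $\Gamma^s$ with a uniform operator-norm bound on the fusion map $(W^T\Upsilon^{-1}W)^{-1}W^T\Upsilon^{-1}$, obtained exactly as you do from the $\sigma I$ floor in $\Upsilon$, the bounded per-sensor filter covariances, and the full column rank of the constant matrix $W$. Your statement of it in contrapositive form is logically identical, and your two flagged technical points (transfer of the design recommendations to each $\Gamma^s$, and uniform invertibility of $W^T\Upsilon^{-1}W$) are precisely the ingredients the cited proof supplies.
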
 
\begin{proof}
The proof is given in \cite{weerakkody2016moving}.
\end{proof}
While the proposed estimator does not guarantee each malicious sensor will be identified, it does guarantee that the defender will be able to identify and remove sensors whose attacks cause unbounded bias in the estimation error simply by analyzing each sensor's measurements individually. This is due to the fact that the bias on the residues of such sensors will grow unbounded, which can be easily detected by a $\chi^2$ detector. As a result, we propose the following detector to identify malicious behavior for each individual sensor $s$
\begin{equation} \small
\label{chi}
g_{k,s}(z_{k-T+1:k,s}) = \sum_{j = k-T+1}^k z_{j,s}^2 \overset{{\mathcal{H}_1^s}}{\underset{\mathcal{H}_0^s}{ \gtrless}} \tau_k^i.
\end{equation}
Here $g_{k,s}$ is the detection statistic for sensor $s$, $\tau_k^i$ represents the threshold of the detector, $i\in\{1,\cdots,m\}$, and $\mathcal{H}_1^s$ and $\mathcal{H}_0^s$ are the hypotheses that sensor $s$ is malfunctioning or is working normally, respectively. A sensor $s$ which repeatedly fails detection can be removed from consideration when obtaining a state estimate and the proposed fusion-based estimation scheme can be adjusted accordingly.

\begin{remark}
Consider a standard LTI system whose dynamics are known to an attacker. From \cite{fawzi2014secure}, there exists an estimator which can recover the system state with up to $q$ sensor attacks if and only if the system is $2q$ sparse observable (that is observable even if any $2q$ sensors are removed). The hybrid moving target allows us to perform perfect (stable) state estimation in the deterministic (stochastic) scenario if the system is merely $q$ sparse observable. In the deterministic case, we simply identify the $q$ attacked sensors, remove them, and use the healthy sensors to recover the initial state. In the stochastic case, we can identify $q$ sensors that cause any destabilizing estimation errors, remove them, and then use the proposed fusion-based state estimator to obtain a stable estimate.
\end{remark}

\section{Extended Moving Target Defense}
Instead of varying the system matrices directly, the extended moving target defense introduces an auxiliary system whose sensor measurements reveal any biases an adversary exerts on the nominal control system. As such, we seek to design the auxiliary system in such a way as to maximize the probability of detection when the system is under attack. Specifically, we would like to design the parameters that generate $\bar{A}_k$, $\tilde{B}_k$, and $\bar{C}_k$ to maximize detection performance. Because a joint maximization over $\bar{A}_k$, $\tilde{B}_k$, and $\bar{C}_k$ becomes infeasible for $\bar{A}_k$ and $\bar{C}_k$, we recognize that detection performance is a direct function of accurate state estimation. Consequently, we design the parameters that generate $\bar{A}_k$ and $\bar{C}_k$ to maximize estimation performance while designing the parameters that generate $\tilde{B}_k$ to maximize detection performance.

\begin{remark}
For notational simplicity, we will assume that $\bar{A}_k$, $\tilde{B}_k$, and $\bar{C}_k$ are not sparse. However, the analysis presented in this section can easily be extended to designs where $\bar{A}_k$, $\tilde{B}_k$, and $\bar{C}_k$ are sparse matrices.
\end{remark}

We consider a general set of additive integrity attacks as modeled in (\ref{IntegrityAttackStatesSensors}) for the nominal system where $B^a=B$ (all actuators can be corrupted) and $D^a=I$ (all sensors can be modified). An adversary with these capabilities and knowledge of the nominal system dynamics can arbitrarily and stealthily perturb the nominal system using a covert attack \cite{smith2015covert}. This set of additive integrity attacks can be written as
\begin{equation} \small
\label{AttackStateSensorDynamics}
\bar{x}_{k+1}^A = \mathcal{A}_k\bar{x}_k^A + \mathcal{B}_k (u_k+u_k^a) + \bar{w}_k, \quad \bar{y}_k^a = \mathcal{C}_k\bar{x}_k^A + \bar{d}_k^a + \bar{v}_k,
\end{equation}
where $\bar{x}_k^A$ represents the attacked states, $u_k^a$ denotes the attacker's additive bias on the control inputs, $\bar{d}_k^a$ represents the attacker's additive bias on the sensor measurements, and $\bar{y}_k^a$ denotes the biased sensor measurements received by the system operator. Here the auxiliary actuators $\tilde{B}_k$ and coupling matrices $\bar{A}_k$ and $\bar{C}_k$ are generated from the following distributions: $\tilde{B}_k(\text{row }i)\sim\mathcal{N}(\mu_{\tilde{B}},\Sigma_{\tilde{B}_k})\text{ }\forall i$, $\bar{A}_k(\text{row }i)\sim\mathcal{N}(\mu_{\bar{A}},\Sigma_{\bar{A}})\text{ }\forall i$, and $\bar{C}_k(\text{row }i)\sim\mathcal{N}(\mu_{\bar{C}},\Sigma_{\bar{C}})\text{ }\forall i$ with independence between rows over time. We consider a strong adversary who is able to read and modify all of the inputs and outputs so that the design of the parameters generating $\bar{A}_k$, $\tilde{B}_k$, and $\bar{C}_k$ is optimal for even the strongest additive integrity attacks. Given this attack model, we now describe how to design the covariances $\Sigma_{\tilde{B}_k}$, $\Sigma_{\bar{A}}$, and $\Sigma_{\bar{C}}$ of the distributions associated with the auxiliary actuators and the coupling matrices to maximize detection and estimation performance, respectively.

\subsection{Auxiliary Actuators Design}
To design the covariance $\Sigma_{\tilde{B}_k}$ to maximize detection performance, we use the Kullback-Liebler (KL) divergence as a metric for detection performance that, roughly speaking, quantifies the distance between the distribution of the residue under attack and the distribution of the residue under normal operation. We note that any additive integrity attack will result in an additive bias on the residue which can be written as a linear combination of the control input biases $u_{j:k-1}^a$ and sensor measurement biases $\bar{d}_{j+1:k}^a$ exerted by the attacker. Here $j$ denotes the time when the attacker first exerts a bias on the control inputs and $j+1$ represents the time when the attacker first attempts to hide his or her attack by exerting a bias on the sensor measurements. As shown in \cite{acc2017griffioen}, the bias on the residue $\Delta\bar{z}_i$ can be written as
\begin{equation} \small
\underbrace{\begin{bmatrix}
M_{(j,i)}^x & -M_{(j,i)}^y
\end{bmatrix}}_{M_{(j,i)}}
\underbrace{\begin{bmatrix}
u_j^{a^T} \text{ } \cdots \text{ } u_{k-1}^{a^T} &
\bar{d}_{j+1}^{a^T} \text{ } \cdots \text{ } \bar{d}_k^{a^T}
\end{bmatrix}^T}_{\phi_{j:k}},
\end{equation}
where $\phi_{j:k}$ represents a vector containing all of the attacker's biases and $M_{(j,i)}^x$ and $M_{(j,i)}^y$ are given by
\begin{align}
\label{Mxji}
\text{\small$M_{(j,i)}^x$} &\text{\small$\triangleq \begin{bmatrix}
\mathcal{C}_i\mathcal{D}_{(j,i)}\mathcal{B}_j \text{ } \cdots \text{ } \mathcal{C}_i\mathcal{D}_{(i-1,i)}\mathcal{B}_{i-1} & 0_i \text{ } \cdots \text{ } 0_{k-1}
\end{bmatrix},$} \\
\text{\small$M_{(j,i)}^y$} &\text{\small$\triangleq \begin{bmatrix}
\Xi_{(j+1,i)} \text{ } \cdots \text{ } \Xi_{(i-1,i)} & I & 0_{i+1} \text{ } \cdots \text{ } 0_k
\end{bmatrix},$}
\end{align}
with $\mathcal{D}_{(j,i)}\triangleq\prod_{t=j+1}^{i-1}\mathcal{A}_{i+j-t}(I-\mathcal{K}_{i+j-t}\mathcal{C}_{i+j-t})$, $\Xi_{(j,i)}\triangleq\mathcal{C}_i\mathcal{D}_{(j,i)}\mathcal{A}_j\mathcal{K}_j$, $0_t\in\mathbb{R}^{(m+\tilde{m})\times p}$ for $M_{(j,i)}^x$, and $I,0_t\in\mathbb{R}^{(m+\tilde{m})\times(m+\tilde{m})}$ for $M_{(j,i)}^y$.

Under normal operation, it can be shown that the residue follows a normal distribution $f_0(\bar{z}_i) = \mathcal{N}(0,\Sigma_i)$ with zero mean and covariance $\Sigma_i\triangleq\mathcal{C}_i\mathcal{P}_{i|i-1}\mathcal{C}_i^T+\mathcal{R}$. If the defender has no prior information about the attacker's biases $\phi_{j:k}$, the residue under attack will also follow a normal distribution $f_1(\bar{z}_i) = \mathcal{N}(M_{(j,i)}\phi_{j:k},\Sigma_i)$ with a mean equal to the bias on the residue. Because the covariance of $f_0$ and $f_1$ are the same, the KL divergence is symmetric and can be written as
\begin{equation} \small
\begin{split}
D_{KL}&\left(f_1(\bar{z}_{k-T+1:k})||f_0(\bar{z}_{k-T+1:k})\right) = \\
&\quad = \mathbb{E}_{\bar{z}_{k-T+1:k}}\Bigg[\frac{1}{2}\Bigg(\sum_{i=k-T+1}^k-\bar{z}_i^T\Sigma_i^{-1}\bar{z}_i + \\
&\quad\quad +(\bar{z}_i-M_{(j,i)}\phi_{j:k})^T\Sigma_i^{-1}(\bar{z}_i-M_{(j,i)}\phi_{j:k})\Bigg)\Bigg|\mathcal{H}_0\Bigg] \\
&\quad = \frac{1}{2}\phi_{j:k}^T\left(\sum_{i=k-T+1}^kM_{(j,i)}^T\Sigma_i^{-1}M_{(j,i)}\right)\phi_{j:k},
\end{split}
\end{equation}
where the second equality follows from the fact that the residue has zero mean under normal operation.

Maximizing the KL divergence becomes difficult because the attacker biases $\phi_{j:k}$ are unknown to the defender. However, we note that $M_{(j,i)}^T\Sigma_i^{-1}M_{(j,i)}$ is positive semidefinite, allowing us to maximize the expected value of the KL divergence for all possible additive integrity attacks. This is carried out by maximizing a nonnegative constant $\epsilon$ such that the expected value of the KL divergence is greater than a positive semidefinite lower bound $N(\epsilon)$ that is a function of $\epsilon$. Since there are real-world constraints on the variance magnitude of the auxiliary actuators, we constrain the covariance $\Sigma_{\tilde{B}_k}$ with a positive semidefinite upper bound $N_B$. This maximization problem is presented below
\begin{equation} \small
\label{InitialOptimization}
\begin{split}
& \argmax_{\epsilon,\Sigma_{\tilde{B}_k}} \epsilon \quad \text{s.t. } \Sigma_{\tilde{B}_k} \preceq N_B, \\
& \quad \quad \frac{1}{2}\mathbb{E}_{\tilde{B}_{j:k-1}}\left[\sum_{i=k-T+1}^kM_{(j,i)}^T\Sigma_i^{-1}M_{(j,i)}\right] \succeq N(\epsilon).
\end{split}
\end{equation}

As shown in \cite{acc2017griffioen}, we can construct the positive semidefinite lower bound $N(\epsilon)$ to match the block structure of $\sum_{i=k-T+1}^kM_{(j,i)}^T\Sigma_i^{-1}M_{(j,i)}$. The off-diagonal blocks of this structure are not functions of $\Sigma_{\tilde{B}_k}$, allowing the constraint in (\ref{InitialOptimization}) to be simplified to the following series of constraints
\begin{equation} \small
\label{ConstraintV3}
\begin{split}
& \frac{1}{2}\sum_{i=k-T+1}^k\Psi_{(j,i)}^{tt}\succeq \epsilon N_{t}, \quad t = 1,\cdots,i-j
\end{split}.
\end{equation}
Here $N_t\succeq0$, and $\Psi_{(j,i)}^{tt}$ is shown in \cite{acc2017griffioen} to be
\begin{equation} \small
\label{PsiDef}
\begin{split}
& \Psi_{(j,i)}^{tt} = \text{Tr}(\tilde{D}_{(j+t-1,i)}^T\mathcal{C}_i^T\Sigma_i^{-1}\mathcal{C}_i\tilde{D}_{(j+t-1,i)})\Sigma_{\tilde{B}_{j+t-1}} \\
& \quad + \text{Sum}(\tilde{D}_{(j+t-1,i)}^T\mathcal{C}_i^T\Sigma_i^{-1}\mathcal{C}_i\tilde{D}_{(j+t-1,i)})\mu_{\tilde{B}}\mu_{\tilde{B}}^T \\
& \quad + B^T\bar{D}_{(j+t-1,i)}^T\mathcal{C}_i^T\Sigma_i^{-1}\mathcal{C}_i\bar{D}_{(j+t-1,i)}B \\
& \quad + \begin{bmatrix}\mu_{\tilde{B}}&\cdots&\mu_{\tilde{B}}\end{bmatrix}\tilde{D}_{(j+t-1,i)}^T\mathcal{C}_i^T\Sigma_i^{-1}\mathcal{C}_i\bar{D}_{(j+t-1,i)}B \\
& \quad + B^T\bar{D}_{(j+t-1,i)}^T\mathcal{C}_i^T\Sigma_i^{-1}\mathcal{C}_i\tilde{D}_{(j+t-1,i)}\begin{bmatrix}\mu_{\tilde{B}}&\cdots&\mu_{\tilde{B}}\end{bmatrix}^T,
\end{split}
\end{equation}
where $\begin{bmatrix}\tilde{D}_{(j+t-1,i)}&\bar{D}_{(j+t-1,i)}\end{bmatrix}\triangleq\mathcal{D}_{(j+t-1,i)}$ and $\text{Sum}(A)$ represents the sum of all the elements of $A$.

Combining (\ref{ConstraintV3}) and (\ref{PsiDef}) allows the second constraint in (\ref{InitialOptimization}) to be written as a set of positive semidefinite constraints. We choose $j=k-T$ so that the KL divergence is maximized over the time window $T$ of the chi-squared detector. This allows the optimization problem in (\ref{InitialOptimization}) to be written as
\begin{equation} \small
\label{BBkOptimization}
\begin{split}
& \argmax_{\epsilon,\Sigma_{\tilde{B}_k}}\epsilon \quad \text{s.t. } \Sigma_{\tilde{B}_k} \preceq N_B, \\
& \quad \frac{1}{2}\sum_{i=k-T+1}^{k-t}\text{Tr}(\tilde{D}_{(k-T+t,i)}^T\mathcal{C}_i^T\Sigma_i^{-1}\mathcal{C}_i\tilde{D}_{(k-T+t,i)})\Sigma_{\tilde{B}_{k-T+t}} \\
& \quad + \text{Sum}(\tilde{D}_{(k-T+t,i)}^T\mathcal{C}_i^T\Sigma_i^{-1}\mathcal{C}_i\tilde{D}_{(k-T+t,i)})\mu_{\tilde{B}}\mu_{\tilde{B}}^T \\
& \quad + \begin{bmatrix}\mu_{\tilde{B}}\text{ }\cdots\text{ }\mu_{\tilde{B}}\end{bmatrix}\tilde{D}_{(k-T+t,i)}^T\mathcal{C}_i^T\Sigma_i^{-1}\mathcal{C}_i\bar{D}_{(k-T+t,i)}B \\
& \quad + B^T\bar{D}_{(k-T+t,i)}^T\mathcal{C}_i^T\Sigma_i^{-1}\mathcal{C}_i\tilde{D}_{(k-T+t,i)}\begin{bmatrix}\mu_{\tilde{B}}\text{ }\cdots\text{ }\mu_{\tilde{B}}\end{bmatrix}^T \\
& \quad + B^T\bar{D}_{(k-T+t,i)}^T\mathcal{C}_i^T\Sigma_i^{-1}\mathcal{C}_i\bar{D}_{(k-T+t,i)}B \succeq \epsilon N_{t+1},
\end{split}
\end{equation}
with $t = 0,\cdots,T-1$. If $\Sigma_{\tilde{B}_k}$ is time-invariant, this becomes a semidefinite program with unique solutions for $\epsilon$ and $\Sigma_{\tilde{B}_k}$. However, to make this optimization problem solvable when $\Sigma_{\tilde{B}_k}$ is time-varying, we reformulate the optimization by considering a global optimization problem of \eqref{BBkOptimization} over all time steps $k$. We add a subscript to $\epsilon$ in the second constraint of \eqref{BBkOptimization} so that it is now $\epsilon_{k-T+t}$, distinguishing it from $\epsilon$'s at other time steps. Furthermore, we modify the objective function to be $\sum_{k=0}^\infty\epsilon_k$. This formulation allows the global optimization problem to be divided into optimization problems for each time step $k$, resulting in
\begin{equation} \small
\label{BBkOptimization2}
\begin{split}
\argmax_{\epsilon_k,\Sigma_{\tilde{B}_k}}\epsilon_k \quad &\text{s.t. } \Sigma_{\tilde{B}_k} \preceq N_B, \\
& \frac{1}{2}\sum_{i=k+1-t}^{k+T-2t}\text{Tr}(\tilde{D}_{(k,i)}^T\mathcal{C}_i^T\Sigma_i^{-1}\mathcal{C}_i\tilde{D}_{(k,i)})\Sigma_{\tilde{B}_k} \\
& + \text{Sum}(\tilde{D}_{(k,i)}^T\mathcal{C}_i^T\Sigma_i^{-1}\mathcal{C}_i\tilde{D}_{(k,i)})\mu_{\tilde{B}}\mu_{\tilde{B}}^T \\
& + \begin{bmatrix}\mu_{\tilde{B}}\text{ }\cdots\text{ }\mu_{\tilde{B}}\end{bmatrix}\tilde{D}_{(k,i)}^T\mathcal{C}_i^T\Sigma_i^{-1}\mathcal{C}_i\bar{D}_{(k,i)}B \\
& + B^T\bar{D}_{(k,i)}^T\mathcal{C}_i^T\Sigma_i^{-1}\mathcal{C}_i\tilde{D}_{(k,i)}\begin{bmatrix}\mu_{\tilde{B}}\text{ }\cdots\text{ }\mu_{\tilde{B}}\end{bmatrix}^T \\
& + B^T\bar{D}_{(k,i)}^T\mathcal{C}_i^T\Sigma_i^{-1}\mathcal{C}_i\bar{D}_{(k,i)}B \succeq \epsilon_k N_{t+1},
\end{split}
\end{equation}
which is a semidefinite program with $t = 0,\cdots,T-1$ and unique solutions for $\epsilon_k$ and $\Sigma_{\tilde{B}_k}$. As a result, this optimization problem provides a method for designing $\Sigma_{\tilde{B}_k}$ so that the expected value of the KL divergence is maximized for the set of all possible additive integrity attacks described by (\ref{AttackStateSensorDynamics}).

\begin{remark}
\label{JointDesign}
One could alternatively consider formulating an optimization problem around the KL divergence as a function of both the mean $\mu_{\tilde{B}}$ and the covariance $\Sigma_{\tilde{B}_k}$. However, such formulations will be non-convex. Addressing the joint design of $\mu_{\tilde{B}}$ and $\Sigma_{\tilde{B}_k}$ using non-convex techniques along with alternative formulations of \eqref{InitialOptimization} are left for future work. This remark also applies to the joint design of $\mu_{\bar{A}}$, $\mu_{\bar{C}}$, $\Sigma_{\bar{A}}$, and $\Sigma_{\bar{C}}$ in \eqref{InitialBarOptimization} and $\mu_{G}$ and $\Sigma_G$ in \eqref{InitialNonlinearOptimization}.
\end{remark}

\subsection{Coupling Matrices Design}
We now focus our attention on the design of the covariances $\Sigma_{\bar{A}}$ and $\Sigma_{\bar{C}}$ that generate the coupling matrices $\bar{A}_k$ and $\bar{C}_k$. To maximize estimation performance, we seek to design $\Sigma_{\bar{A}}$ and $\Sigma_{\bar{C}}$ to maximize the amount of information the defender receives about the attacked states $x_k^A$ through the biased auxiliary sensor measurements $\tilde{y}_k^a$. Accurate estimation of the attacked state will enable a system operator to better distinguish between true and falsified measurements. Since the accuracy of the state estimate depends on the amount of information the sensor measurements carry about the state, maximizing this amount of information should increase detection performance.

\begin{remark}
If the system is operating normally, this design will maximize the amount of information the system operator receives about the unaltered states $x_k$. Consequently, this design will increase estimation performance regardless of whether or not the system is under attack.
\end{remark}

We consider the amount of information all the biased auxiliary sensor measurements $\tilde{y}_{0:k}^a$ contain about all the attacked states $x_{0:k}^A$. As shown in \cite{acc2017griffioen}, we can represent all the biased auxiliary sensor measurements $\tilde{y}_{0:k}^a$ as
\begin{equation} \small
\tilde{y}_{0:k}^a = (H_A+H_C)x_{0:k}^A + H_Bu_{0:k}^A + \tilde{d}_{0:k}^a + H_W\tilde{w}_{0:k-1} + \tilde{v}_{0:k},
\end{equation}
where $\tilde{y}_{0:k}^a \triangleq \text{\small$\begin{bmatrix} \tilde{y}_0^{a^T} \text{ } \cdots \text{ } \tilde{y}_k^{a^T} \end{bmatrix}^T$}$, $x_{0:k}^A \triangleq \text{\small$\begin{bmatrix} x_0^{A^T} \text{ } \cdots \text{ } x_k^{A^T} \end{bmatrix}^T$}$, $u_{0:k}^A \triangleq \text{\small$\begin{bmatrix} u_0^{A^T} \text{ } \cdots \text{ } u_k^{A^T} \end{bmatrix}^T$}$, $u_k^A\triangleq u_k+u_k^a$, $\tilde{d}_{0:k}^a \triangleq \text{\small$\begin{bmatrix} \tilde{d}_0^{a^T} \text{ } \cdots \text{ } \tilde{d}_k^{a^T} \end{bmatrix}^T$}$, $\tilde{w}_{0:k-1} \triangleq \text{\small$\begin{bmatrix} \tilde{x}_0^T & \tilde{w}_0^T \text{ } \cdots \text{ } \tilde{w}_{k-1}^T \end{bmatrix}^T$}$, $\tilde{v}_{0:k} \triangleq \text{\small$\begin{bmatrix} \tilde{v}_0^T \text{ } \cdots \text{ } \tilde{v}_k^T \end{bmatrix}^T$}$, $H_A\triangleq H_D\text{BlkDiag}(\bar{A}_0,\cdots,\bar{A}_k)$, $H_B\triangleq H_D\text{BlkDiag}(\tilde{B}_0,\cdots,\tilde{B}_k)$,  and $H_C\triangleq\text{BlkDiag}(\bar{C}_0,\cdots,\bar{C}_k)$ with
\begin{equation*} \small
\medmuskip=2.58mu
\thinmuskip=2.58mu
\thickmuskip=2.58mu
H_D \triangleq
\begin{bmatrix}
0 & \cdots & 0 & 0 \\
\tilde{C}\tilde{A}^0 & \cdots & 0 & 0 \\
\vdots & \ddots & \vdots & \vdots \\
\tilde{C}\tilde{A}^{k-1} & \cdots & \tilde{C}\tilde{A}^0 & 0
\end{bmatrix},
H_W \triangleq
\begin{bmatrix}
\tilde{C}\tilde{A}^0 & \cdots & 0 \\
\vdots & \ddots & \vdots \\
\tilde{C}\tilde{A}^k & \cdots & \tilde{C}\tilde{A}^0
\end{bmatrix}.
\end{equation*}

To quantify the amount of information the defender receives about the attacked states through the biased auxiliary sensor measurements, we use the Fisher information matrix $\mathcal{I}$. The Fisher information matrix is a metric that quantifies the amount of information a set of measurements contains about a set of unknown parameters. As seen in \cite{acc2017griffioen}, the Fisher information matrix is shown to be
\begin{equation} \small
\label{LinearACFIM}
\mathcal{I} = (H_A + H_C)^T (H_W \Sigma_{\tilde{Q}} H_W^T + \Sigma_{\tilde{R}})^{-1} (H_A + H_C),
\end{equation}
where $\tilde{x}_0\sim\mathcal{N}(0,\tilde{P}_{0|-1})$, $\tilde{P}_{k+1|k}$ is the a priori error covariance matrix for the auxiliary system, $\Sigma_{\tilde{Q}}\triangleq\text{BlkDiag}(\tilde{P}_{0|-1},\tilde{Q},\cdots,\tilde{Q})$, and $\Sigma_{\tilde{R}}\triangleq\text{BlkDiag}(\tilde{R},\cdots,\tilde{R})$.

We note that $\mathcal{I}$ is positive semidefinite, allowing us to maximize the expected value of the Fisher information matrix for all possible additive integrity attacks. This is carried out by maximizing a nonnegative constant $\gamma$ such that the expected value of the Fisher information matrix is greater than a positive semidefinite lower bound $\Theta(\gamma)$ that is a function of $\gamma$. Since there are real-world constraints on the variance magnitude of the state coupling and auxiliary sensors, we constrain the covariances $\Sigma_{\bar{A}}$ and $\Sigma_{\bar{C}}$ with positive semidefinite upper bounds $\Theta_A$ and $\Theta_C$, respectively. This maximization problem is presented below
\begin{equation} \small
\medmuskip=1.8mu
\thinmuskip=1.8mu
\thickmuskip=1.8mu
\label{InitialBarOptimization}
\argmax_{\gamma,\Sigma_{\bar{A}},\Sigma_{\bar{C}}} \gamma ~~ \text{s.t. } \Sigma_{\bar{A}} \preceq \Theta_A, ~ \Sigma_{\bar{C}} \preceq \Theta_C, ~ \mathbb{E}_{\bar{A}_{0;k-1},\bar{C}_{0:k}}\left[\mathcal{I}\right] \succeq \Theta(\gamma).
\end{equation}

As shown in \cite{acc2017griffioen}, we can construct the positive semidefinite lower bound $\Theta(\gamma)$ to match the block structure of $\mathcal{I}$. The off-diagonal blocks of this structure are not functions of $\Sigma_{\bar{A}}$ or $\Sigma_{\bar{C}}$, allowing the constraint in (\ref{InitialBarOptimization}) to be simplified to the following series of constraints
\begin{equation} \small
\label{ConstraintBarV1}
\Omega_A^i + \Omega_C^i + \Omega_{AC}^i + \Omega_{AC}^{i^T} \succeq \gamma \Theta_i, \quad i = 0,\cdots,k,
\end{equation}
where $\Theta_i\succeq0$, $\Omega_A^i=\text{Tr}(J_{ii})\Sigma_{\bar{A}}+\text{Sum}(J_{ii})\mu_{\bar{A}}\mu_{\bar{A}}^T$, $\Omega_C^i=\text{Tr}(S_{ii})\Sigma_{\bar{C}}+\text{Sum}(S_{ii})\mu_{\bar{C}}\mu_{\bar{C}}^T$, and $\Omega_{AC}^i\triangleq\mathbb{E}_{\bar{A}_i}\left[\bar{A}_i^T\right]F_{ii}\mathbb{E}_{\bar{C}_i}\left[\bar{C}_i\right]$. $J_{ii}\in\mathbb{R}^{\tilde{n}\times\tilde{n}}$, $S_{ii}\in\mathbb{R}^{\tilde{m}\times\tilde{m}}$, and $F_{ii}\in\mathbb{R}^{\tilde{n}\times\tilde{m}}$ represent the $(i,i)$th blocks of $H_D^T(H_W\Sigma_{\tilde{Q}}H_W^T+\Sigma_{\tilde{R}})^{-1}H_D$, $(H_W\Sigma_{\tilde{Q}}H_W^T+\Sigma_{\tilde{R}})^{-1}$, and $H_D^T(H_W\Sigma_{\tilde{Q}}H_W^T+\Sigma_{\tilde{R}})^{-1}$, respectively.

Utilizing (\ref{ConstraintBarV1}) and the results above allows the third constraint in (\ref{InitialBarOptimization}) to be written as a series of positive semidefinite constraints. We choose $k=T-1$ so that the Fisher information matrix is maximized over the time window $T$ of the chi-squared detector. This allows the optimization problem in (\ref{InitialBarOptimization}) to be written as
\begin{equation} \small
\label{ACOptimization}
\begin{split}
& \argmax_{\gamma,\Sigma_{\bar{A}},\Sigma_{\bar{C}}} \gamma \quad \text{s.t. } \Sigma_{\bar{A}} \preceq \Theta_A, \quad \Sigma_{\bar{C}} \preceq \Theta_C,  \\
& \quad\quad \text{Tr}(J_{ii})\Sigma_{\bar{A}} + \text{Tr}(S_{ii})\Sigma_{\bar{C}} + \text{Sum}(J_{ii})\mu_{\bar{A}}\mu_{\bar{A}}^T \\
& \quad\quad + \text{Sum}(S_{ii})\mu_{\bar{C}}\mu_{\bar{C}}^T + \begin{bmatrix}\mu_{\bar{A}}\text{ }\cdots\text{ }\mu_{\bar{A}}\end{bmatrix}F_{ii}\begin{bmatrix}\mu_{\bar{C}}\text{ }\cdots\text{ }\mu_{\bar{C}}\end{bmatrix}^T \\
& \quad\quad + \begin{bmatrix}\mu_{\bar{C}}\text{ }\cdots\text{ }\mu_{\bar{C}}\end{bmatrix}F_{ii}^T\begin{bmatrix}\mu_{\bar{A}}\text{ }\cdots\text{ }\mu_{\bar{A}}\end{bmatrix}^T \succeq \gamma \Theta_i,
\end{split}
\end{equation}
which is a semidefinite program with $i=0,\cdots,T-1$ and unique solutions for $\gamma$, $\Sigma_{\bar{A}}$, and $\Sigma_{\bar{C}}$. As a result, this optimization problem provides a method for designing $\Sigma_{\bar{A}}$ and $\Sigma_{\bar{C}}$ so that the expected value of the Fisher information matrix is maximized for the set of all possible additive integrity attacks described by (\ref{AttackStateSensorDynamics}).
\begin{remark}
Unlike the hybrid moving target defense, the extended moving target defense as presented is not designed directly to perform attack identification or resilient state estimation. The nominal sensors only measure the original LTI dynamics and as such offer no inherent advantages over standard techniques for identification and estimation. Nevertheless, if the sensors for the extended subsystem are secure, an avenue exists for resilient state estimation. Specifically, consider a system with (only) sensor attacks on the nominal sensors. If the system consisting of the nominal dynamics with the extended sensors is observable, a resilient state estimate can be obtained that can be used to determine which of the nominal sensors are reporting false or misleading information, resulting in attack identification.
\end{remark}
\section{Nonlinear Moving Target Defense}
While the parameters that generate the time-varying dynamics of the extended moving target are designed to maximize detection and estimation performance, it is still possible for an intelligent adversary to perform some system identification. This is due to the fact that the auxiliary sensor measurements contain some information about the time-varying matrices $\bar{A}_k$, $\tilde{B}_k$, and $\bar{C}_k$. As a result, we seek to leverage the nonlinearity $G_kh(x_k)$ in the nonlinear moving target to minimize the amount of information an adversary may receive about the time-varying matrices $\bar{A}_k$, $\tilde{B}_k$, and $\bar{C}_k$. The auxiliary sensors in this system measure a nonlinear function of the state, where the nonlinear function $h(x_k)$ is an element-wise mapping from $\mathbb{R}^n\to\mathbb{R}^n$ and $G_k\in\mathbb{R}^{\tilde{m}\times n}$ is generated from the distribution $G_k(\text{column } i)\sim\mathcal{N}(\mu_G,\Sigma_{G})\text{ }\forall i$ with independence between columns over time.

\subsection{Limiting System Identification}
We again consider a general set of integrity attacks as modeled in (\ref{AttackStateSensorDynamics}) for the nonlinear moving target defense. From the perspective of the attacker, this can be written as
\begin{align}
\label{NonlinearStateDynamics}
\text{\small$\bar{x}_{k+1}^A$} &\text{\small$=\mathcal{A}_k\bar{x}_k^A + \mathcal{B}_k\underbrace{(u_k+u_k^a)}_{u_k^A} + \bar{w}_k,$} \\
\label{NonlinearSensorDynamics}
\text{\small$\underbrace{\begin{bmatrix}
\tilde{y}_k^A \\
y_k^A
\end{bmatrix}}_{\bar{y}_k^A}$} &\text{\small$=
\underbrace{\begin{bmatrix}
\tilde{C} & \bar{C}_k \\
0 & C
\end{bmatrix}}_{\mathcal{C}_k}
\underbrace{\begin{bmatrix}
\tilde{x}_k^A \\
x_k^A
\end{bmatrix}}_{\bar{x}_k^A} +
\begin{bmatrix}
G_kh(x_k^A)\\
0
\end{bmatrix} +
\underbrace{\begin{bmatrix}
\tilde{v}_k \\
v_k
\end{bmatrix}}_{\bar{v}_k},$}
\end{align}
where $\bar{y}_k^A$ represents the sensor measurements that the attacker intercepts. Given these dynamics, the auxiliary intercepted sensor measurements are given by
\begin{equation} \small
\begin{split}
\tilde{y}_k^A & = \tilde{C}\tilde{A}^k\tilde{x}_0 + \tilde{C}\sum_{j=0}^{k-1}\tilde{A}^{k-1-j}(\bar{A}_jx_j^A+\tilde{B}_ju_j^A+\tilde{w}_j) \\
& \quad + \bar{C}_kx_k^A + G_kh(x_k^A) + \tilde{v}_k.
\end{split}
\end{equation}
Considering the amount of information all the auxiliary intercepted sensor measurements $\tilde{y}_{0:k}^A$ contain about all the time-varying matrices $\bar{A}_{0:k}$, $\tilde{B}_{0:k}$, and $\bar{C}_{0:k}$, we can represent all the auxiliary intercepted sensor measurements $\tilde{y}_{0:k}^A$ as
\begin{equation} \small
\begin{split}
\tilde{y}_{0:k}^A & = H_X\text{vec}(\bar{A}_{0:k}^T) + H_U\text{vec}(\tilde{B}_{0:k}^T) + H_E\text{vec}(\bar{C}_{0:k}^T) \\
& \quad + H_F\text{vec}(G_{0:k}) + H_W\tilde{w}_{0:k-1} + \tilde{v}_{0:k},
\end{split}
\end{equation}
where $\tilde{y}_{0:k}^A \triangleq \text{\small$\begin{bmatrix} \tilde{y}_0^{A^T} \text{ } \cdots \text{ } \tilde{y}_k^{A^T} \end{bmatrix}^T$}$, $\text{\small$\text{vec}(\bar{A}_{0:k}^T)\triangleq\begin{bmatrix}\text{vec}(\bar{A}_0^T)^T\text{ }\cdots\text{ }\text{vec}(\bar{A}_k^T)^T\end{bmatrix}^T$}$, $\text{\small$\text{vec}(\tilde{B}_{0:k}^T)\triangleq\begin{bmatrix}\text{vec}(\tilde{B}_0^T)^T\text{ }\cdots\text{ }\text{vec}(\tilde{B}_k^T)^T\end{bmatrix}^T$}$, $\text{\small$\text{vec}(\bar{C}_{0:k}^T)\triangleq\begin{bmatrix}\text{vec}(\bar{C}_0^T)^T\text{ }\cdots\text{ }\text{vec}(\bar{C}_k^T)^T\end{bmatrix}^T$}$, and $\text{\small$\text{vec}(G_{0:k})\triangleq\begin{bmatrix}\text{vec}(G_0)^T\text{ }\cdots\text{ }\text{vec}(G_k)^T\end{bmatrix}^T$}$ and the matrices are given by $H_X\triangleq H_D\text{BlkDiag}(I_{\tilde{n}}\otimes x_0^{A^T},\cdots,I_{\tilde{n}}\otimes x_k^{A^T})$, $H_U\triangleq H_D\text{BlkDiag}(I_{\tilde{n}}\otimes u_0^{A^T},\cdots,I_{\tilde{n}}\otimes u_k^{A^T})$, $H_E\triangleq\text{BlkDiag}\left(I_{\tilde{m}}\otimes x_0^{A^T},\cdots,I_{\tilde{m}}\otimes x_k^{A^T}\right)$, and $H_F\triangleq\text{BlkDiag}\left(h(x_0^A)^T\otimes I_{\tilde{m}},\cdots,h(x_k^A)^T\otimes I_{\tilde{m}}\right)$.

We consider a strong adversary who has full knowledge of the attacked states $x_{0:k}^A$. With this adversary, the distribution of all the auxiliary intercepted sensor measurements given the time-varying parameters $\theta_{0:k}\triangleq\text{\small$\begin{bmatrix}\text{vec}(\bar{A}_{0:k}^T)^T&\text{vec}(\tilde{B}_{0:k}^T)^T&\text{vec}(\bar{C}_{0:k}^T)^T\end{bmatrix}^T$}$ follows a normal distribution $\tilde{y}_{0:k}^A|\theta_{0:k}\sim\mathcal{N}(\mu_{\tilde{y}_{0:k}^A|\theta_{0:k}},\Sigma_{\tilde{y}_{0:k}^A|\theta_{0:k}})$ with mean and covariance given by
\begin{equation} \small
\mu_{\tilde{y}_{0:k}^A|\theta_{0:k}} = H_X\text{vec}(\bar{A}_{0:k}^T) + H_U\text{vec}(\tilde{B}_{0:k}^T) + H_E\text{vec}(\bar{C}_{0:k}^T) + H_F\mu_{\vec{g}},
\end{equation}
\begin{equation} \small
\Sigma_{\tilde{y}_{0:k}^A|\theta_{0:k}} = H_F \Sigma_{\vec{g}} H_F^T + H_W \Sigma_{\tilde{Q}} H_W^T + \Sigma_{\tilde{R}},
\end{equation}
where $\mu_{\vec{g}}\triangleq\begin{bmatrix}\mu_G^T\text{ }\cdots\text{ }\mu_G^T\end{bmatrix}^T$ and $\Sigma_{\vec{g}}\triangleq\text{BlkDiag}(\Sigma_G,\cdots,\Sigma_G)$.

To quantify the amount of information the strong adversary receives about the time-varying parameters $\theta_{0:k}$ from the auxiliary intercepted sensor measurements $\tilde{y}_{0:k}^A$, we use the Bayesian Fisher information matrix $\mathcal{I}_{NL}$ which accounts for prior information the strong adversary has access to about the time-varying parameters. Since the joint distribution of $\tilde{y}_{0:k}^A$ and $\theta_{0:k}$ is Gaussian, each element $\mathcal{I}_{NL}(i,j)$ of the Fisher information matrix takes the following form \cite{FIM}
\begin{equation} \small
\medmuskip=1.57mu
\thinmuskip=1.57mu
\thickmuskip=1.57mu
\begin{split}
\mathcal{I}_{NL}(i,j) =& \mathbb{E}_{\theta_{0:k}}\Bigg[\frac{\partial \mu_{\tilde{y}_{0:k}^A|\theta_{0:k}}^T}{\partial \theta_{0:k}(i)}\Sigma_{\tilde{y}_{0:k}^A|\theta_{0:k}}^{-1}\frac{\partial \mu_{\tilde{y}_{0:k}^A|\theta_{0:k}}}{\partial \theta_{0:k}(j)} + \frac{1}{2}\Sigma_{\theta_{0:k}}^{-1}(i,j) \\
& + \frac{1}{2}\text{Tr}\left(\Sigma_{\tilde{y}_{0:k}^A|\theta_{0:k}}^{-1}\frac{\partial \Sigma_{\tilde{y}_{0:k}^A|\theta_{0:k}}}{\partial \theta_{0:k}(i)}\Sigma_{\tilde{y}_{0:k}^A|\theta_{0:k}}^{-1}\frac{\partial \Sigma_{\tilde{y}_{0:k}^A|\theta_{0:k}}}{\partial \theta_{0:k}(j)}\right)\Bigg],
\end{split}
\end{equation}
where $\Sigma_{\theta_{0:k}}\triangleq\text{BlkDiag}(\Sigma_{\bar{A}},\cdots,\Sigma_{\bar{A}},\Sigma_{\tilde{B}},\cdots,\Sigma_{\tilde{B}},$ $\Sigma_{\bar{C}},\cdots,\Sigma_{\bar{C}})$, and the partial derivatives of the mean $\mu_{\tilde{y}_{0:k}^A|\theta_{0:k}}$ and the covariance $\Sigma_{\tilde{y}_{0:k}^A|\theta_{0:k}}$ equal $0$ and $H(\text{column }i)$, respectively, with $H\triangleq\begin{bmatrix}H_X&H_U&H_E\end{bmatrix}$. Applying these results to each element of the Fisher information matrix implies that $\mathcal{I}_{NL}$ can be written as
\begin{equation} \small
\medmuskip=1.95mu
\thinmuskip=1.95mu
\thickmuskip=1.95mu
\label{NonlinearFIM}
\begin{split}
& \mathcal{I}_{NL} = H^T (H_F \Sigma_{\vec{g}} H_F^T + H_W \Sigma_{\tilde{Q}} H_W^T + \Sigma_{\tilde{R}})^{-1} H + \frac{1}{2}\Sigma_{\theta_{0:k}}^{-1} \\
& = H^T \Bigg(\text{BlkDiag}\left(\sum_{i=1}^nh(x_0^A(i))^2\Sigma_G,\cdots,\sum_{i=1}^nh(x_k^A(i))^2\Sigma_G\right) \\
& \quad + H_W \Sigma_{\tilde{Q}} H_W^T + \Sigma_{\tilde{R}}\Bigg)^{-1} H + \frac{1}{2}\Sigma_{\theta_{0:k}}^{-1}.
\end{split}
\end{equation}
To understand how the nonlinear term $G_kh(x_k^A)$ influences the Fisher information matrix, we consider the dynamics given in (\ref{NonlinearStateDynamics}) and (\ref{NonlinearSensorDynamics}) without the nonlinear term $G_kh(x_k^A)$ and see that the Fisher information matrix $\mathcal{I}_L$ can be written as
\begin{equation} \small
\label{LinearFIM}
\mathcal{I}_L = H^T (H_W \Sigma_{\tilde{Q}} H_W^T + \Sigma_{\tilde{R}})^{-1} H + \frac{1}{2}\Sigma_{\theta_{0:k}}^{-1}.
\end{equation}

Given these representations of $\mathcal{I}_{NL}$ and $\mathcal{I}_L$ in (\ref{NonlinearFIM}) and (\ref{LinearFIM}), we can use the Woodbury identity to show that the difference between the Fisher information matrices associated with the systems containing and not containing the nonlinearity is positive definite as seen below
\begin{equation} \small
\label{FIMdifference}
\begin{split}
\mathcal{I}_L - &\mathcal{I}_{NL} = H^T \Sigma_N^{-1}H_F(\Sigma_{\vec{g}}^{-1}+H_F^T\Sigma_N^{-1}H_F)^{-1} H_F^T\Sigma_N^{-1} H \\
& = H^T \Sigma_N^{-1}\Bigg(\text{BlkDiag}\bigg(\frac{1}{\sum_{i=1}^nh(x_0^A(i))^2}\Sigma_G^{-1},\cdots \\
& \quad \cdots, \frac{1}{\sum_{i=1}^nh(x_k^A(i))^2}\Sigma_G^{-1}\bigg) + \Sigma_N^{-1}\Bigg)^{-1} \Sigma_N^{-1} H \succeq 0,
\end{split}
\end{equation}
where $\Sigma_N\triangleq H_W\Sigma_{\tilde{Q}}H_W^T+\Sigma_{\tilde{R}}$. Because this difference is positive definite, adding the nonlinear term $G_kh(x_k^A)$ to the system dynamics decreases the amount of information the strong adversary receives about the time-varying parameters $\theta_{0:k}$ from the auxiliary intercepted sensor measurements $\tilde{y}_{0:k}^A$.

To minimize this amount of information, we would like to design $\Sigma_G$ and the function $h$ to maximize $\mathcal{I}_L - \mathcal{I}_{NL}$, the difference between the Fisher information matrices associated with the systems containing and not containing the nonlinearity. From (\ref{NonlinearFIM}) and (\ref{FIMdifference}), we see that as $\|\Sigma_G\|\to\infty$ or as $h(x_k^A)\to\infty$, $\mathcal{I}_{NL}\to\frac{1}{2}\Sigma_{\theta_{0:k}}^{-1}$ and $\mathcal{I}_L-\mathcal{I}_{NL}\to H^T(H_W\Sigma_{\tilde{Q}}H_W^T+\Sigma_{\tilde{R}})^{-1}H$. As the covariance $\Sigma_G$ or nonlinear function $h(x_k^A)$ approaches infinity, the information the adversary receives about the time-varying parameters $\theta_{0:k}$ is reduced to his or her a priori information about $\theta_{0:k}$. As a result, an increase in the covariance $\Sigma_G$ or an increase in the magnitude of the nonlinear function $h(x_k^A)$ results in the adversary receiving less information about the time-varying parameters $\theta_{0:k}$. In addition, as $\Sigma_G\to0$ or as $h(x_k^A)\to0$, $\mathcal{I}_{NL}\to\mathcal{I}_L$ and $\mathcal{I}_L-\mathcal{I}_{NL}\to0$. Consequently, a decrease in the covariance $\Sigma_G$ or a decrease in the magnitude of $h(x_k^A)$ results in the adversary receiving more information about the time-varying parameters $\theta_{0:k}$.

This general analysis provides intuition about the effects of the magnitude of the nonlinearity on the information received by the adversary. Because the function $h(x_k)$ determines the magnitude of the nonlinearity while the coefficient matrix $G_k$ determines the direction of the nonlinearity, in Section VII we design the function $h(x_k)$ to limit the adversary's information while in the next subsection we design the coefficient matrix $G_k$ to maximize the defender's estimation performance.

\subsection{Nonlinearity Design}
We now consider the same additive integrity attacks on the nonlinear moving target as described in (\ref{NonlinearStateDynamics}) and (\ref{NonlinearSensorDynamics}) from the perspective of the defender. We want to provide joint guidelines for designing the parameters of the distributions that generate the time-varying matrices $\bar{A}_k$, $\bar{C}_k$, and $G_k$. Here we do not design the parameters that generate $\tilde{B}_k$ because the extended moving target design of $\Sigma_{\tilde{B}_k}$ should be sufficient to maximize detection performance as long as the magnitude of the nonlinear function $h$ is approximately zero when the state lies within a normal region of operation. In providing a joint design for the time-varying matrices $\bar{A}_k$, $\bar{C}_k$, and $G_k$, we seek to design $\Sigma_{\bar{A}}$, $\Sigma_{\bar{C}}$, and $\Sigma_G$ to maximize the amount of information the defender receives about the attacked states to improve the defender's estimation performance, in turn improving detection performance. The state dynamics are the same as those given in (\ref{NonlinearStateDynamics}), while the sensor measurements are given by
\begin{equation} \small
\label{NonlinearSensorDynamics2}
\underbrace{\begin{bmatrix}
\tilde{y}_k^a \\
y_k^a
\end{bmatrix}}_{\bar{y}_k^a} =
\underbrace{\begin{bmatrix}
\tilde{C} & \bar{C}_k \\
0 & C
\end{bmatrix}}_{\mathcal{C}_k}
\underbrace{\begin{bmatrix}
\tilde{x}_k^A \\
x_k^A
\end{bmatrix}}_{\bar{x}_k^A} +
\begin{bmatrix}
G_kh(x_k^A)\\
0
\end{bmatrix} +
\underbrace{\begin{bmatrix}
\tilde{v}_k \\
v_k
\end{bmatrix}}_{\bar{v}_k} +
\underbrace{\begin{bmatrix}
\tilde{d}_k^a \\
d_k^a
\end{bmatrix}}_{\bar{d}_k^a},
\end{equation}
where $\bar{y}_k^a$ represents the biased sensor measurements that the defender receives. Given these dynamics, the biased auxiliary sensor measurements are given by
\begin{equation} \small
\begin{split}
\tilde{y}_k^a & = \tilde{C}\tilde{A}^k\tilde{x}_0 + \tilde{C}\sum_{j=0}^{k-1}\tilde{A}^{k-1-j}\left(\bar{A}_jx_j^A+\tilde{B}_ju_j^A+\tilde{w}_j\right) \\
& \quad + \bar{C}_kx_k^A + G_kh(x_k^A) + \tilde{v}_k + \tilde{d}_k^a.
\end{split}
\end{equation}
Considering the amount of information all the biased auxiliary sensor measurements $\tilde{y}_{0:k}^a$ carry about all the attacked states $x_{0:k}^A$, we can represent all the biased auxiliary sensor measurements $\tilde{y}_{0:k}^a$ as
\begin{equation} \small
\begin{split}
\tilde{y}_{0:k}^a & = (H_A+H_C)x_{0:k}^A + H_Gh(x_{0:k}^A) + H_Bu_{0:k}^A \\
& \quad + \tilde{d}_{0:k}^a + H_W\tilde{w}_{0:k-1} + \tilde{v}_{0:k},
\end{split}
\end{equation}
where $H_G\triangleq\text{BlkDiag}(G_0,\cdots,G_k)$ and $h(x_{0:k}^A)\triangleq\text{\small$\begin{bmatrix}h(x_0^A)^T\text{ }\cdots\text{ }h(x_k^A)^T\end{bmatrix}^T$}$. With this representation, we see that the distribution of all the biased auxiliary sensor measurements given all the attacked states follows a normal distribution $\tilde{y}_{0:k}^a|x_{0:k}^A \sim \mathcal{N}(\mu_{\tilde{y}_{0:k}^a|x_{0:k}^A},\Sigma_{\tilde{y}_{0:k}^a|x_{0:k}^A})$ with mean and covariance given by
\begin{equation} \small
\mu_{\tilde{y}_{0:k}^a|x_{0:k}^A} = (H_A+H_C)x_{0:k}^A + H_Gh(x_{0:k}^A) + H_Bu_{0:k}^A + \tilde{d}_{0:k}^a,
\end{equation}
\begin{equation} \small
\Sigma_{\tilde{y}_{0:k}^a|x_{0:k}^A} = H_W \Sigma_{\tilde{Q}} H_W^T + \Sigma_{\tilde{R}}.
\end{equation}

To quantify the amount of information the defender receives about the attacked states through the biased auxiliary sensor measurements, we use the Fisher information matrix $\bar{\mathcal{I}}$. Since $\tilde{y}_{0:k}^a|x_{0:k}^A$ follows a multivariate Gaussian distribution, each element of the Fisher information matrix $\bar{\mathcal{I}}(i,j)$ takes the following form \cite{FIM}
\begin{equation} \small
\begin{split}
& \bar{\mathcal{I}}(i,j) = \frac{\partial \mu_{\tilde{y}_{0:k}^a|x_{0:k}^A}^T}{\partial x_{0:k}^A(i)}\Sigma_{\tilde{y}_{0:k}^a|x_{0:k}^A}^{-1}\frac{\partial \mu_{\tilde{y}_{0:k}^a|x_{0:k}^A}}{\partial x_{0:k}^A(j)} + \\
& \quad + \frac{1}{2}\text{Tr}\left(\Sigma_{\tilde{y}_{0:k}^a|x_{0:k}^A}^{-1}\frac{\partial \Sigma_{\tilde{y}_{0:k}^a|x_{0:k}^A}}{\partial x_{0:k}^A(i)}\Sigma_{\tilde{y}_{0:k}^a|x_{0:k}^A}^{-1}\frac{\partial \Sigma_{\tilde{y}_{0:k}^a|x_{0:k}^A}}{\partial x_{0:k}^A(j)}\right),
\end{split}
\end{equation}
where the partial derivative of the covariance $\Sigma_{\tilde{y}_{0:k}^a|x_{0:k}^A}$ equals $0$ and the partial derivative of the mean is given by
\begin{equation} \small
\frac{\partial \mu_{\tilde{y}_{0:k}^a|x_{0:k}^A}}{\partial x_{0:k}^A(i)} = H_A(\text{clmn }i) + H_C(\text{clmn }i) + H_G\frac{\partial h(x_{0:k}^A)}{\partial x_{0:k}^A(i)}.
\end{equation}
Applying these results to each element of the Fisher information matrix implies that $\bar{\mathcal{I}}$ can be written as
\begin{equation} \small
\begin{split}
\bar{\mathcal{I}} = & \left(H_A + H_C + H_G\frac{\partial h(x_{0:k}^A)}{\partial x_{0:k}^A}\right)^T (H_W \Sigma_{\tilde{Q}} H_W^T + \\
& + \Sigma_{\tilde{R}})^{-1} \left(H_A + H_C + H_G\frac{\partial h(x_{0:k}^A)}{\partial x_{0:k}^A}\right).
\end{split}
\end{equation}

To maximize the amount of information the defender receives about the attacked states, we want to maximize the expected value of the Fisher information matrix which takes the following form
\begin{equation} \small
\medmuskip=-0.05mu
\thinmuskip=-0.05mu
\thickmuskip=-0.05mu
\label{ExpectedNonlinearFIM}
\begin{split}
& \mathbb{E}_{\bar{A}_{0:k-1},\bar{C}_{0:k},G_{0:k}}\left[\bar{\mathcal{I}}\right] = \mathbb{E}_{\bar{A}_{0:k-1},\bar{C}_{0:k}}\left[\Omega_A + \Omega_C + \Omega_{AC} + \Omega_{AC}^T\right] \\
& \quad\quad\quad + \mathbb{E}_{\bar{A}_{0:k-1},\bar{C}_{0:k},G_{0:k}}\left[\Omega_G + \Omega_{AG} + \Omega_{AG}^T + \Omega_{CG} + \Omega_{CG}^T\right],
\end{split}
\end{equation}
where $\Omega_A$, $\Omega_C$, and $\Omega_{AC}$ are defined as given in (\ref{ConstraintBarV1}) and $\Omega_G \triangleq \frac{\partial h(x_{0:k}^A)^T}{\partial x_{0:k}^A}H_G^T\Sigma_N^{-1}H_G\frac{\partial h(x_{0:k}^A)}{\partial x_{0:k}^A}$, $\Omega_{AG} \triangleq H_A^T\Sigma_N^{-1}H_G\frac{\partial h(x_{0:k}^A)}{\partial x_{0:k}^A}$, and $\Omega_{CG} \triangleq H_C^T\Sigma_N^{-1}H_G\frac{\partial h(x_{0:k}^A)}{\partial x_{0:k}^A}$. We note that the first term in (\ref{ExpectedNonlinearFIM}) contains $\Sigma_{\bar{A}}$ and $\Sigma_{\bar{C}}$ while the second term only contains $\Sigma_G$. Furthermore, the first term in (\ref{ExpectedNonlinearFIM}) is simply the expected value of the Fisher information matrix given in (\ref{LinearACFIM}). Consequently, we can maximize the amount of information the defender receives about the attacked states by jointly designing $\Sigma_{\bar{A}}$ and $\Sigma_{\bar{C}}$ as given in (\ref{ACOptimization}) while designing $\Sigma_G$ to maximize the second term in (\ref{ExpectedNonlinearFIM}).

Because $\mathbb{E}_{\bar{A}_{0:k-1},\bar{C}_{0:k},G_{0:k}}\left[\Omega_{AG} + \Omega_{AG}^T + \Omega_{CG} + \Omega_{CG}^T\right]$ does not contain $\Sigma_G$, we only consider $\mathbb{E}_{G_{0:k}}\left[\Omega_G\right]$ when maximizing the second term in (\ref{ExpectedNonlinearFIM}). Furthermore, only the diagonal entries of $\mathbb{E}_{G_{0:k}}\left[\Omega_G\right]$ are functions of $\Sigma_G$, so we consider $\text{Tr}\left(\mathbb{E}_{G_{0:k}}\left[\Omega_G\right]\right)$ as the metric to be maximized where $\text{Tr}\left(\mathbb{E}_{G_{0:k}}\left[\Omega_G\right]\right)=x_{NL}^{A^T}\mathbb{E}_{G_{0:k}}\left[\mathcal{F}_G\right]x_{NL}^A$ with $x_{NL}^A\triangleq\text{\small$\begin{bmatrix}\frac{\partial h(x_{0:k}^A(1))}{\partial x_{0:k}^A(1)}\text{ }\cdots\text{ }\frac{\partial h(x_{0:k}^A((k+1)n))}{\partial x_{0:k}^A((k+1)n)}\end{bmatrix}^T$}$ and $\mathcal{F}_G\triangleq\text{Diag}(H_G(\text{clmn }1)^T\Sigma_N^{-1}H_G(\text{clmn }1),\cdots,H_G(\text{clmn }(k+1)n)^T\Sigma_N^{-1}H_G(\text{clmn }(k+1)n))$. Because $x_{NL}^A$ is unknown to the defender, it becomes difficult to maximize $\text{Tr}\left(\mathbb{E}_{G_{0:k}}\left[\Omega_G\right]\right)$. However, we note that $\mathbb{E}_{G_{0:k}}\left[\mathcal{F}_G\right]$ is positive semidefinite, allowing us to maximize $\text{Tr}\left(\mathbb{E}_{G_{0:k}}\left[\Omega_G\right]\right)$ for all possible $x_{NL}^A$ by maximizing a nonnegative constant $\beta$ such that $\mathbb{E}_{G_{0:k}}\left[\mathcal{F}_G\right]$ is greater than a positive semidefinite lower bound $\beta I$. This maximization problem is presented below where $\mathcal{M}$ is a positive semidefinite upper bound that represents real-world constraints on the variance magnitude of the nonlinear auxiliary sensors
\begin{equation} \small
\label{InitialNonlinearOptimization}
\argmax_{\beta,\Sigma_G} \beta \quad \text{s.t. } \Sigma_G \preceq \mathcal{M}, \text{ } \mathbb{E}_{G_{0:k}}\left[\mathcal{F}_G\right] \succeq \beta I.
\end{equation}

Since $\mathcal{F}_G$ is a diagonal matrix, the second constraint in (\ref{InitialNonlinearOptimization}) can be simplified to the following series of constraints
\begin{equation} \small
\mathbb{E}_{G_{0:k}}\left[\mathcal{F}_G(i,i)\right] \geq \beta, \quad i=1,\cdots,(k+1)n.
\end{equation}
Noting that $\mathbb{E}_{G_{0:k}}\left[\mathcal{F}_G(i,i)\right]=\text{Tr}\left((\Sigma_G+\mu_G\mu_G^T)S_{jj}\right)$ where $j\triangleq\lfloor(i-1)/n\rfloor$ and choosing $k=T-1$ so that $\text{Tr}\left(\mathbb{E}_{G_{0:k}}\left[\Omega_G\right]\right)$ is maximized over the time window $T$ of the chi-squared detector, the optimization problem in (\ref{InitialNonlinearOptimization}) can be written as
\begin{equation} \small
\label{GkOptimization}
\argmax_{\beta,\Sigma_G} \beta \quad \text{s.t. } \Sigma_G \preceq \mathcal{M}, \text{ Tr}\left((\Sigma_G+\mu_G\mu_G^T)S_{ii}\right) \geq \beta,
\end{equation}
with $i = 0,\cdots,T-1$ and unique solutions for $\beta$ and $\Sigma_G$. Consequently, this optimization problem provides a method for designing $\Sigma_G$ to maximize the expected value of the Fisher information matrix for all possible additive integrity attacks described by (\ref{NonlinearStateDynamics}) and (\ref{NonlinearSensorDynamics2}).

\section{Bounds on Attacker's Performance}
We now turn our attention to calculating lower bounds on the detection statistic associated with optimal attacks on each of the moving target systems. These bounds characterize the worst case detection performance while under attack. We first investigate lower bounds on the attacker's state estimation and use these bounds to understand how well an adversary can fool the bad data detector.

\subsection{Attack Strategy}
We consider an attack strategy where the adversary aims to track the system operator's state estimate $\hat{\bar{x}}_{k|k-1}$. By tracking the system operator's state estimate, the adversary attempts to generate stealthy outputs. We assume the adversary has full knowledge of the nominal static system model, is able to read and modify all the control inputs and all the sensor outputs, and knows the probability density function (pdf) of the random matrices and the noise. Without loss of generality, we assume that the attack begins at $k=0$. For the hybrid moving target, the attacker's observations and strategy are formulated as
\begin{equation} \small
\label{HybridAttackStrategyStates}
\begin{split}
\begin{bmatrix}
x_{k+1}^A \\
\hat{x}_{k+1|k}
\end{bmatrix} & =
\begin{bmatrix}
A_k & 0 \\
0 & A_k(I-K_kC_k)
\end{bmatrix}
\begin{bmatrix}
x_k^A \\
\hat{x}_{k|k-1}
\end{bmatrix} \\ &\quad +
\begin{bmatrix}
B_k & B_k & 0 \\
B_k & 0 & A_kK_k
\end{bmatrix}
\begin{bmatrix}
u_k \\
u_k^a \\
y_k^a
\end{bmatrix} +
\begin{bmatrix}
w_k \\
0
\end{bmatrix},
\end{split}
\end{equation}
\begin{equation} \small
\label{HybridAttackStrategySensors}
y_k^A =
\begin{bmatrix}
C_k & 0
\end{bmatrix}
\begin{bmatrix}
x_k^A \\
\hat{x}_{k|k-1}
\end{bmatrix} + v_k,
~ d_k^a = \mathbb{E}\left[C_k\hat{x}_{k|k-1}\middle|\mathcal{I}_k^A\right] - y_k^A,
\end{equation}
where $x_k^A$ represents the attacked states for the nominal system, $y_k^A$ denotes the sensor measurements for the nominal system that the attacker intercepts, $y_k^a$ represents the biased sensor measurements for the nominal system received by the system operator, and $u_k^a=0$. For the extended moving target, the attack strategy is the same as \eqref{HybridAttackStrategyStates} and \eqref{HybridAttackStrategySensors} except that $x_k^A$, $\hat{x}_{k|k-1}$, $y_k^A$, $y_k^a$, $d_k^a$, $w_k$, $v_k$, $A_k$, $B_k$, $C_k$, and $K_k$ are replaced by $\bar{x}_k^A$, $\hat{\bar{x}}_{k|k-1}$, $\bar{y}_k^A$, $\bar{y}_k^a$, $\bar{d}_k^a$, $\bar{w}_k$, $\bar{v}_k$, $\mathcal{A}_k$, $\mathcal{B}_k$, $\mathcal{C}_k$, and $\mathcal{K}_k$, respectively. For the nonlinear moving target, this attack strategy is
\begin{align}
&\text{\small$\begin{bmatrix}
\bar{x}_{k+1}^A \\
\hat{\bar{x}}_{k+1|k}
\end{bmatrix} =
\begin{bmatrix}
\mathcal{A}_k & 0 \\
0 & \mathcal{A}_k(I-\mathcal{K}_k\mathcal{C}_k)
\end{bmatrix}
\begin{bmatrix}
\bar{x}_k^A \\
\hat{\bar{x}}_{k|k-1}
\end{bmatrix} +
\begin{bmatrix}
\bar{w}_k \\
0
\end{bmatrix}$} \\ &\text{\small$\quad\quad\quad +
\begin{bmatrix}
\mathcal{B}_k & \mathcal{B}_k & 0 \\
\mathcal{B}_k & 0 & \mathcal{A}_k\mathcal{K}_k
\end{bmatrix}
\begin{bmatrix}
u_k \\
u_k^a \\
\bar{y}_k^a
\end{bmatrix} -
\begin{bmatrix}
0 \\
\mathcal{A}_k\mathcal{K}_k \begin{bmatrix}
G_kh(\hat{x}_{k|k-1}) \\
0
\end{bmatrix}
\end{bmatrix},$} \nonumber \\
&\text{\small$\bar{y}_k^A =
\begin{bmatrix}
\mathcal{C}_k & 0
\end{bmatrix}
\begin{bmatrix}
\bar{x}_k^A \\
\hat{\bar{x}}_{k|k-1}
\end{bmatrix} +
\begin{bmatrix}
G_kh(x_k^A) \\
0
\end{bmatrix} + \bar{v}_k,$} \\
&\text{\small$\quad\quad\quad\bar{d}_k^a = \mathbb{E}\left[\mathcal{C}_k\hat{\bar{x}}_{k|k-1}+\begin{bmatrix}G_kh(\hat{x}_{k|k-1})\\0\end{bmatrix}\middle|\mathcal{I}_k^A\right] - \bar{y}_k^A.$} \nonumber
\end{align}

Here $\mathcal{I}_k^A$ refers to the information available to the attacker as presented in section II for the hybrid moving target, extended moving target, and nonlinear moving target. These attack strategies are motivated by the following result which states that for the extended moving target, the chosen sensor measurement bias $\bar{d}_k^a$ minimizes the expected value of the $\chi^2$ detection statistic. This illustrates the potential effectiveness of the attack when countered by a $\chi^2$ detector. This result can easily be extended to account for the sensor biases and the measurement residues of the hybrid moving target and the nonlinear moving target.
\begin{theorem}
\label{OptimalBiasTheorem}
Consider a strong adversary who knows $\{\mathcal{C}_j,\mathcal{P}_{j|j-1}\}$ for all $j\in\mathbb{Z}$. Defining $\hat{\bar{x}}_{k|k-1}^e\triangleq\mathbb{E}\left[\hat{\bar{x}}_{k|k-1}\middle|\mathcal{I}_k^A\right]$,
\begin{equation} \small
\bar{d}_k^{a*} \triangleq \argmin_{\bar{d}_k^a}\mathbb{E}\left[g_k(\bar{z}_{k-T+1:k})\middle|\mathcal{I}_k^A\right] = \mathcal{C}_k\hat{\bar{x}}_{k|k-1}^e-\bar{y}_k^A.
\end{equation}
\end{theorem}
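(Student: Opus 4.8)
The plan is to exploit the fact that, within the detector's sliding window, the only residue influenced by the current bias $\bar{d}_k^a$ is the current one. First I would observe that the a priori estimate $\hat{\bar{x}}_{k|k-1}$ is a function of the corrupted measurements $\bar{y}_{0:k-1}^a$ only, and is therefore already determined before the attacker selects $\bar{d}_k^a$ at time $k$. Consequently, the residues $\bar{z}_{k-T+1},\ldots,\bar{z}_{k-1}$ appearing in $g_k(\bar{z}_{k-T+1:k})$ do not depend on $\bar{d}_k^a$, so their contribution to $\mathbb{E}[g_k(\bar{z}_{k-T+1:k})\,|\,\mathcal{I}_k^A]$ is constant in $\bar{d}_k^a$. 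Minimizing over $\bar{d}_k^a$ then reduces to minimizing the single expected quadratic term $\mathbb{E}[\bar{z}_k^T(\mathcal{C}_k\mathcal{P}_{k|k-1}\mathcal{C}_k^T+\mathcal{R})^{-1}\bar{z}_k\,|\,\mathcal{I}_k^A]$.

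Next I would substitute the residue expression. Since the operator receives $\bar{y}_k^a = \bar{y}_k^A + \bar{d}_k^a$, the current residue is $\bar{z}_k = (\bar{y}_k^A + \bar{d}_k^a) - \mathcal{C}_k\hat{\bar{x}}_{k|k-1}$. Under the strong-adversary hypothesis the attacker knows $\mathcal{C}_k$ and $\mathcal{P}_{k|k-1}$, so the weighting matrix $\Sigma_k \triangleq \mathcal{C}_k\mathcal{P}_{k|k-1}\mathcal{C}_k^T+\mathcal{R}$ is deterministic in the conditional expectation; likewise $\bar{y}_k^A$ is intercepted and hence known. The only quantity that remains random from the attacker's viewpoint is $\mathcal{C}_k\hat{\bar{x}}_{k|k-1}$, because the estimate propagation $\hat{\bar{x}}_{k|k-1} = \mathcal{A}_{k-1}\hat{\bar{x}}_{k-1|k-1}+\mathcal{B}_{k-1}u_{k-1}$ involves the hidden matrices $\bar{A}_{k-1}$ and $\tilde{B}_{k-1}$, which the attacker knows only in distribution.

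I would then expand the conditional expectation of the quadratic form, treating $a \triangleq \bar{y}_k^A + \bar{d}_k^a$ as the deterministic decision variable and $b \triangleq \mathcal{C}_k\hat{\bar{x}}_{k|k-1}$ as the random term, obtaining $a^T\Sigma_k^{-1}a - 2a^T\Sigma_k^{-1}\mathbb{E}[b\,|\,\mathcal{I}_k^A] + \mathbb{E}[b^T\Sigma_k^{-1}b\,|\,\mathcal{I}_k^A]$. This is a strictly convex quadratic in $a$ since $\Sigma_k^{-1}\succ0$ (guaranteed by $\mathcal{R}\succ0$). Setting its gradient $2\Sigma_k^{-1}(a-\mathbb{E}[b\,|\,\mathcal{I}_k^A])$ to zero yields the unique minimizer $a^{*} = \mathbb{E}[\mathcal{C}_k\hat{\bar{x}}_{k|k-1}\,|\,\mathcal{I}_k^A] = \mathcal{C}_k\hat{\bar{x}}_{k|k-1}^e$, where the last equality uses that $\mathcal{C}_k$ is known. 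Unwinding $a^{*} = \bar{y}_k^A + \bar{d}_k^{a*}$ gives $\bar{d}_k^{a*} = \mathcal{C}_k\hat{\bar{x}}_{k|k-1}^e - \bar{y}_k^A$, as claimed.

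The main obstacle, and the step deserving the most care, is the decoupling argument in the first paragraph: one must argue rigorously that $\hat{\bar{x}}_{k|k-1}$ carries no dependence on $\bar{d}_k^a$, so that the earlier residues in the window are constants with respect to the current decision and drop out of the minimization. Once this causal separation is established and the strong-adversary assumption renders $\Sigma_k$ and $\bar{y}_k^A$ deterministic, the remainder is an elementary least-squares projection onto the conditional mean, with the positive definiteness of $\Sigma_k^{-1}$ certifying that the stationary point is the unique global minimum.
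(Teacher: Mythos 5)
Your proposal is correct and follows essentially the same route as the paper's proof: the paper likewise reduces the problem to the first-order condition on the conditional expectation of the quadratic statistic (writing it as an integral against $f(\vartheta_k|\mathcal{I}_k^A)$, where only the $i=k$ term survives the gradient in $\bar{d}_k^a$) and solves to get $\bar{d}_k^{a*}=\mathcal{C}_k\hat{\bar{x}}_{k|k-1}^e-\bar{y}_k^A$. Your explicit causal-decoupling argument and the strict-convexity certification via $\Sigma_k^{-1}\succ 0$ are left implicit in the paper, but they formalize exactly the steps it takes.
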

\begin{proof}
Observe that
\begin{equation} \small
\medmuskip=1.21mu
\thinmuskip=1.21mu
\thickmuskip=1.21mu
\mathbb{E}\left[g_k(\bar{z}_{k-T+1:k})\middle|\mathcal{I}_k^A\right] = \int_{\vartheta_k} \sum_{i=k-T+1}^k \bar{z}_i^T \Sigma_i^{-1} \bar{z}_i f(\vartheta_k|\mathcal{I}_{k}^{A})\mbox{d}\vartheta_k,
\end{equation}
where $\bar{z}_i=\bar{y}_i^A+\bar{d}_i^a-\mathcal{C}_i\hat{\bar{x}}_{i|i-1}$ and $\vartheta_k\triangleq\hat{\bar{x}}_{[k-T+1|k-T]:[k|k-1]}$. Taking the gradient with respect to $\bar{d}_k^a$ and setting the resulting expression equal to 0, we obtain
\begin{equation} \small
\int_{\vartheta_k} 2\Sigma_k^{-1} (\bar{y}_k^A + \bar{d}_k^a - \mathcal{C}_k\hat{\bar{x}}_{k|k-1}) f(\vartheta_k|\mathcal{I}_{k}^{A})\mbox{d}\vartheta_k = 0.
\end{equation}
Solving for $\bar{d}_k^a$ yields
\begin{equation} \small
\bar{d}_k^a = -\bar{y}_k^A + \mathcal{C}_k\int_{\vartheta_k}\hat{\bar{x}}_{k|k-1} f(\vartheta_k|\mathcal{I}_{k}^{A})\mbox{d}\vartheta_k,
\end{equation}
and the result holds.
\end{proof}

\subsection{Bounds on Attacker's State Estimation}
Given the attack strategies considered in the last section, we now want to characterize a lower bound $Z_k$ on the mean square error matrix of the attacker's estimate of $\hat{\bar{x}}_{k|k-1}$. Since $\hat{\bar{x}}_{k|k-1}^e$ represents the attacker's estimate of $\hat{\bar{x}}_{k|k-1}$, this lower bound $Z_k$ is given by
\begin{equation} \small
\label{ErrorMatrix}
\mathbb{E}\left[(\hat{\bar{x}}_{k|k-1}^e-\hat{\bar{x}}_{k|k-1})(\hat{\bar{x}}_{k|k-1}^e-\hat{\bar{x}}_{k|k-1})^T\middle|\bar{y}_{0:k}^A\right] \geq Z_k.
\end{equation}
To approximate $Z_k$, we leverage conditional posterior Cramer-Rao lower bounds for Bayesian sequences. Unlike the traditional Cramer-Rao lower bound which is limited to unbiased estimators, the Bayesian Cramer-Rao lower bound considers both biased and unbiased estimators. Here we propose using the direct conditional posterior Cramer-Rao lower bound as set forth in \cite{LowerBound} to approximate $Z_k$. The authors here make use of the Bayesian Cramer-Rao lower bound or Van Trees bound derived in \cite{VanTrees} which states that the mean squared error matrix is bounded by the inverse of the Fisher information matrix $\mathcal{I}_k$ as follows
\begin{equation} \small
\label{MSEMatrix}
\mathbb{E}\left[(\underline{x}_k^e-\underline{x}_k)(\underline{x}_k^e-\underline{x}_k)^T\middle|\bar{y}_{0:k}^A\right] \geq \mathcal{I}_k^{-1},
\end{equation}
where $\underline{x}_k^e \triangleq \text{\small$\begin{bmatrix} \bar{x}_k^{e^T} & \hat{\bar{x}}_{k|k-1}^{e^T} \end{bmatrix}^T$}$, $\underline{x}_k \triangleq \text{\small$\begin{bmatrix} \bar{x}_k^{A^T} & \hat{\bar{x}}_{k|k-1}^T \end{bmatrix}^T$}$, and $\bar{x}_k^e$ is the attacker's estimate of $\bar{x}_k^A$. $Z_k$ can be obtained by simply taking the lower right $(n+\tilde{n})\times(n+\tilde{n})$ block of $\mathcal{I}_k^{-1}$. As demonstrated in \cite{LowerBound}, $\mathcal{I}_{k+1}$ can be decomposed into two parts as $\mathcal{I}_{k+1} = \mathcal{I}_{k+1}^D + \mathcal{I}_{k+1}^P$, where $\mathcal{I}_{k+1}^D$ represents the information gained from the new measurements averaged over the a priori distribution and $\mathcal{I}_{k+1}^P$ represents the information contained in the a priori distribution.

To compute $\mathcal{I}_{k+1}$, a particle filter is used to represent the distribution of $\underline{x}_{k+1}$ with the weighted particles $\{\underline{x}_{k+1}^{(j)},\omega_k^{(j)}\}_{j=1}^{\mathcal{L}}$. As shown in \cite{LowerBound}, $\mathcal{I}_{k+1}^D$ can be computed using the following approximation
\begin{equation} \small
\label{IDApproximation}
\mathcal{I}_{k+1}^D \approx \sum_{j=1}^{\mathcal{L}} \omega_k^{(j)} \mathcal{J}_{k+1}^S(\underline{x}_{k+1}^{(j)}),
\end{equation}
where $\mathcal{J}_{k+1}^S(\underline{x}_{k+1}^{(j)})$ is the standard Fisher information matrix with element $(m,n)$ given by
\begin{equation} \small
\label{StandardFIM}
\begin{split}
& \mathcal{J}_{k+1}^S(m,n) = \frac{\partial \mu(\underline{x}_{k+1}^{(j)})^T}{\partial \underline{x}_{k+1}^{(j)}(m)} \Sigma(\underline{x}_{k+1}^{(j)})^{-1} \frac{\partial \mu(\underline{x}_{k+1}^{(j)})}{\partial \underline{x}_{k+1}^{(j)}(n)} \\
& \quad + \frac{1}{2} \text{Tr}\left(\Sigma(\underline{x}_{k+1}^{(j)})^{-1}\frac{\partial \Sigma(\underline{x}_{k+1}^{(j)})}{\partial \underline{x}_{k+1}^{(j)}(m)}\Sigma(\underline{x}_{k+1}^{(j)})^{-1}\frac{\partial \Sigma(\underline{x}_{k+1}^{(j)})}{\partial \underline{x}_{k+1}^{(j)}(n)}\right),
\end{split}
\end{equation}
where $p(\bar{y}_{k+1}^A|\underline{x}_{k+1}^{(j)}) \sim \mathcal{N}(\mu(\underline{x}_{k+1}^{(j)}),\Sigma(\underline{x}_{k+1}^{(j)}))$. Using a simple Gaussian approximation for the prediction distribution $p(\underline{x}_{k+1}|\bar{y}_{0:k}^A) \approx \mathcal{N}(\mu_k,\bar{\Sigma}_k)$ with
\begin{equation} \small
\medmuskip=2.5mu
\thinmuskip=2.5mu
\thickmuskip=2.5mu
\label{PredictionSigmaMu}
\bar{\Sigma}_k = \sum_{j=1}^{\mathcal{L}} \omega_k^{(j)} (\underline{x}_{k+1}^{(j)}-\mu_k)(\underline{x}_{k+1}^{(j)}-\mu_k)^T, ~ \mu_k = \sum_{j=1}^{\mathcal{L}} \omega_k^{(j)}\underline{x}_{k+1}^{(j)},
\end{equation}
$\mathcal{I}_{k+1}^P$ can be approximated by the inverse of the covariance matrix as demonstrated in \cite{LowerBound} so that $\mathcal{I}_{k+1}^P \approx \bar{\Sigma}_k^{-1}$.

By choosing the importance density of the particle filter to be the prior $p(\underline{x}_{k+1}|\underline{x}_k^{(j)})$, the weight update equation derived in \cite{ParticleFilter} simplifies to $\omega_k^{(j)} = \omega_{k-1}^{(j)} p(\bar{y}_k^A|\underline{x}_k^{(j)})$.

A sequential importance sampling algorithm such as that presented in \cite{ParticleFilter} can be used to implement the particle filter, and resampling can be introduced to keep the particle filter from degenerating. Other algorithms presented in \cite{ParticleFilter} such as the auxiliary sampling importance resampling filter and the regularized particle filter can be used to protect the particle filter from sample impoverishment, which is severe in the case of small process noise.

\begin{remark}
Computing a lower bound for the hybrid moving target is described by replacing $\hat{\bar{x}}_{k|k-1}$, $\hat{\bar{x}}_{k|k-1}^e$, $\bar{x}_k^A$, $\bar{x}_k^e$, $\bar{y}_k^A$, $\bar{d}_k^a$, $\mathcal{C}_k$, $\mathcal{P}_{k|k-1}$, and $\bar{z}_k$ in \cref{ErrorMatrix,MSEMatrix,IDApproximation,StandardFIM,PredictionSigmaMu,LowerBoundResult,LowerBoundProof} with $\hat{x}_{k|k-1}$, $\hat{x}_{k|k-1}^e$, $x_k^A$, $x_k^e$, $y_k^A$, $d_k^a$, $C_k$, $P_{k|k-1}$, and $z_k$, respectively, where $\hat{x}_{k|k-1}^e\triangleq\mathbb{E}\left[\hat{x}_{k|k-1}\middle|\mathcal{I}_k^A\right]$ and $x_k^e$ is the attacker's estimate of $x_k^A$.
\end{remark}

\subsection{Bounds on Detection}
The algorithm described above provides a method for computing an approximate lower bound on the mean square error matrix of the attacker's estimate of $\hat{\bar{x}}_{k|k-1}$ for a given set of inputs $u_{0:k}$, $u_{0:k}^a$, $\bar{d}_{0:k}^a$ and observation history $\bar{y}_{0:k}^A$, allowing us to obtain a lower bound on the expected value of the $\chi^2$ detection statistic. The following result characterizes how small an attacker is able to make the detection statistic given the information available to him or her.
\begin{theorem}
Consider a strong adversary who knows $\{\mathcal{C}_j,\mathcal{P}_{j|j-1}\}$ for all $j\in\mathbb{Z}$. Suppose a lower bound $Z_i$ on the error matrix of $\hat{\bar{x}}_{i|i-1}$ is obtained for $i=\{k-T+1,\cdots,k\}$ as presented in (\ref{ErrorMatrix}). Then we have
\begin{equation} \small
\label{LowerBoundResult}
\min_{\bar{d}_k^a}\mathbb{E}\left[g_k(\bar{z}_{k-T+1:k})\middle|\mathcal{I}_k^A\right]\geq\sum_{i=k-T+1}^k\textup{Tr}(\mathcal{C}_{i}^T\Sigma_i^{-1}\mathcal{C}_{i}Z_i).
\end{equation}
\end{theorem}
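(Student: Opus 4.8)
The plan is to use Theorem~\ref{OptimalBiasTheorem} to reduce each summand of the detection statistic to a quadratic form in the attacker's state-estimation error, and then to lower bound that quadratic form using the conditional Cramer--Rao matrix $Z_i$ from \eqref{ErrorMatrix}. First I would decompose the objective as $\mathbb{E}[g_k(\bar z_{k-T+1:k})|\mathcal{I}_k^A] = \sum_{i=k-T+1}^k \mathbb{E}[\bar z_i^T\Sigma_i^{-1}\bar z_i|\mathcal{I}_k^A]$ and note that, since $\Sigma_i^{-1}\succ0$, each summand is a strictly convex quadratic in the corresponding bias. Only the $i=k$ term depends on the decision variable $\bar d_k^a$, and Theorem~\ref{OptimalBiasTheorem} identifies its unique minimizer $\bar d_k^{a*}$; because $\bar y_k^A + \bar d_k^{a*} = \mathcal{C}_k\hat{\bar{x}}_{k|k-1}^e$, substituting it into $\bar z_k = \bar y_k^A + \bar d_k^a - \mathcal{C}_k\hat{\bar{x}}_{k|k-1}$ collapses the residue to $\bar z_k = \mathcal{C}_k(\hat{\bar{x}}_{k|k-1}^e - \hat{\bar{x}}_{k|k-1})$. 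Under the attack strategy of the previous subsection the same optimal bias is applied at every earlier step, so each residue in the window likewise reduces to $\bar z_i = \mathcal{C}_i(\hat{\bar{x}}_{i|i-1}^e - \hat{\bar{x}}_{i|i-1})$, and for a lower bound it suffices that no admissible past bias can drive a summand below the value attained by this choice.

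Next I would exploit the fact that $\hat{\bar{x}}_{i|i-1}^e$ is constant given the conditioning information to rewrite each summand via the cyclic property of the trace,
\begin{equation*} \small
\mathbb{E}\!\left[\bar z_i^T\Sigma_i^{-1}\bar z_i\middle|\mathcal{I}_k^A\right] = \textup{Tr}\!\left(\mathcal{C}_i^T\Sigma_i^{-1}\mathcal{C}_i\,\mathbb{E}\!\left[(\hat{\bar{x}}_{i|i-1}^e-\hat{\bar{x}}_{i|i-1})(\hat{\bar{x}}_{i|i-1}^e-\hat{\bar{x}}_{i|i-1})^T\middle|\mathcal{I}_k^A\right]\right),
\end{equation*}
and then invoke \eqref{ErrorMatrix}, which bounds the inner error matrix below by $Z_i$ in the L\"owner order. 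Because $\mathcal{C}_i^T\Sigma_i^{-1}\mathcal{C}_i\succeq0$, the trace is monotone under this order, so $M\succeq Z_i$ implies $\textup{Tr}(\mathcal{C}_i^T\Sigma_i^{-1}\mathcal{C}_i M)\geq\textup{Tr}(\mathcal{C}_i^T\Sigma_i^{-1}\mathcal{C}_i Z_i)$. Applying this to every summand and adding yields the claimed inequality.

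The hard part will be reconciling the conditioning. The bound $Z_i$ in \eqref{ErrorMatrix} is stated for the error matrix conditioned on the intercepted history $\bar y_{0:i}^A$, whereas the summands above are conditioned on the full attacker information $\mathcal{I}_k^A$, which for $i<k$ contains strictly later observations; averaging a finer conditional expectation back down preserves the matrix inequality $\succeq Z_i$ only in one direction, so this point must be handled carefully. I would close the gap with the tower property, realizing the $\mathcal{I}_k^A$-conditional error matrix as an average of the $\bar y_{0:i}^A$-conditional matrices and verifying that the attacker's estimate $\hat{\bar{x}}_{i|i-1}^e$ appearing in the residue is measurable with respect to the coarser filtration $\mathcal{I}_i^A\subseteq\mathcal{I}_k^A$. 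The only remaining routine checks are that $\bar d_k^{a*}$ attains a global minimum rather than a mere stationary point (immediate from the strict convexity noted above) and that the per-step reductions of the past residues are consistent with the committed attack strategy.
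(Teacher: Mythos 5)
Your proposal follows essentially the same route as the paper's proof: Theorem \ref{OptimalBiasTheorem} is invoked so that $\bar{y}_i^A+\bar{d}_i^{a*}=\mathcal{C}_i\hat{\bar{x}}_{i|i-1}^e$ and each residue in the window collapses to $\mathcal{C}_i(\hat{\bar{x}}_{i|i-1}^e-\hat{\bar{x}}_{i|i-1})$, the summands are rewritten via the cyclic property of the trace, and \eqref{ErrorMatrix} together with monotonicity of $M\mapsto\textup{Tr}(\mathcal{C}_i^T\Sigma_i^{-1}\mathcal{C}_iM)$ on the positive semidefinite order yields the bound --- this is exactly the paper's displayed chain of three equalities followed by one inequality, including your observation that the earlier biases are fixed by the committed attack strategy while only $\bar{d}_k^a$ is optimized.

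One caveat on your closing discussion: the conditioning mismatch you flag is real (for $i<k$, \eqref{ErrorMatrix} conditions on $\bar{y}_{0:i}^A$ while the summand conditions on the finer $\mathcal{I}_k^A$), but your proposed tower-property repair runs in the wrong direction. The $\bar{y}_{0:i}^A$-conditional error matrix is an average of the $\mathcal{I}_k^A$-conditional ones, not vice versa, so conditioning on the later observations $\bar{y}_{i+1:k}^A$ can in principle push an individual summand below $\textup{Tr}(\mathcal{C}_i^T\Sigma_i^{-1}\mathcal{C}_iZ_i)$, and the per-summand inequality for $i<k$ is recovered only after averaging those later observations out. The paper's own proof silently applies \eqref{ErrorMatrix} inside $\mathbb{E}\left[\,\cdot\,\middle|\mathcal{I}_k^A\right]$ and thus shares, rather than resolves, this looseness; your attempt is therefore no weaker than the published argument, but you should not claim the tower property closes the gap pointwise.
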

 \begin{proof}
 We have the following.
 \begin{equation} \footnotesize
\medmuskip=0.14mu
\thinmuskip=0.14mu
\thickmuskip=0.14mu
 \label{LowerBoundProof}
 \begin{split}
 &\min_{\bar{d}_k^a}\mathbb{E}\left[g_k(\bar{z}_{k-T+1:k})\middle|\mathcal{I}_k^A\right] \\
 &=  \mathbb{E}\left[\sum_{i=k-T+1}^k(\bar{y}_i^A+\bar{d}_i^{a*}-\mathcal{C}_i\hat{\bar{x}}_{i|i-1})^T \Sigma_i^{-1}(\bar{y}_i^A+\bar{d}_i^{a*}-\mathcal{C}_i\hat{\bar{x}}_{i|i-1})\middle|\mathcal{I}_k^A\right]= \\
  &  \text{Tr}\left(\mathbb{E}\left[\sum_{i=k-T+1}^k(\mathcal{C}_i(\hat{\bar{x}}_{i|i-1}^e-\hat{\bar{x}}_{i|i-1}))(\mathcal{C}_i(\hat{\bar{x}}_{i|i-1}^e-\hat{\bar{x}}_{i|i-1}))^T \Sigma_i^{-1}\middle|\mathcal{I}_k^A\right]\right) \\
  &= \sum_{i=k-T+1}^k \text{Tr}\left(\mathbb{E}\left[(\hat{\bar{x}}_{i|i-1}^e-\hat{\bar{x}}_{i|i-1})(\hat{\bar{x}}_{i|i-1}^e-\hat{\bar{x}}_{i|i-1})^T\middle|\mathcal{I}_k^A\right]\mathcal{C}_i^T\Sigma_i^{-1}\mathcal{C}_i\right)\\
                       &\ge \sum_{i=k-T+1}^k \text{Tr}(\mathcal{C}_{i}^T \Sigma_i^{-1} \mathcal{C}_i Z_i).
 \end{split}
 \end{equation}
The first three equalities follow from Theorem \ref{OptimalBiasTheorem} and the properties of the trace and expectation. The final inequality follows from \eqref{ErrorMatrix}.
 \end{proof}

\begin{remark}
In general, the adversary's ability to estimate $\{\hat{\bar{x}}_{k|k-1}\}$ is dependent on the inputs $\{u_k^a\}, \{\bar{d}_k^a\}$. For instance, the more the adversary biases the state away from its expected region of operation, the more challenging it is to perform estimation. Thus if the system operator wants to analyze how well an adversary can generate stealthy outputs, he or she must consider a particular sequence of attack inputs $u_k^a, \bar{d}_k^a$.
\end{remark}

\section{Simulation}
We validate each moving target design by considering the quadruple tank process \cite{quadrupletank}, a multivariable laboratory process that consists of four interconnected water tanks. The goal is to control the water level of the first two tanks using two pumps. The system has four states (water level for each tank), two inputs (voltages applied to the pumps), and two outputs (voltages from level measurement devices for the first two tanks). We use an LQG controller with weights following suggestions in \cite{quadrupletankcomparison}. To ensure an appropriate noise magnitude, $Q$, $\tilde{Q}$, $R$, and $\mathcal{R}$ are created by generating a matrix from a uniform distribution, multiplying it by its transpose, and dividing by 100. A window size of 10 is used for the $\chi^2$ detector, $\tilde{A}$ and $\tilde{C}$ are composed of 50\% nonzero entries pulled from a standard normal distribution, and $\tilde{A}$ is stable.

The extended system is comprised of 4 auxiliary states and 2 auxiliary sensors. The auxiliary states can represent the water level of the tub into which each of the tanks dispense water, the rate of change in the tub's water level, and the supply and dispense rates of water flowing into and out of the tub. The auxiliary sensors can measure the tub's water level and the rate at which the tub is supplied with water. The time-varying nature of $\bar{A}_k$, $\tilde{B}_k$, and $\bar{C}_k$ can be achieved by varying the length and width of the tub over time through the auxiliary actuators. The matrices $\tilde{A}$ and $\tilde{C}$ mathematically describe the auxiliary system dynamics. Experiments are averaged over 1000 trials, and simulation results for the hybrid moving target defense and extended moving target defense can be found in \cite{weerakkody2016movingarxiv} and \cite{acc2017griffioen}, respectively.

\subsection{Nonlinear Moving Target Defense}
For the nonlinear moving target, we consider nonlinear functions that take the form of an element-wise power function, $h(x_k)=x_k^c$, $c\in\mathbb{Z}^+$. We first investigate how the power of this nonlinear function affects the amount of information an adversary receives about the time-varying matrices $\bar{A}_k$, $\tilde{B}_k$, and $\bar{C}_k$ through the auxiliary intercepted sensor measurements.

We consider an adversary who starting at time 200 sec. adds a constant input of 0.2 volts to the optimal LQG input and avoids detection by trying to subtract his or her own influence from the sensor measurements as described in (\ref{ZeroDynamics}). We assume that the attacker does not know the realizations of $\bar{A}_k$, $\tilde{B}_k$, $\bar{C}_k$, or $G_k$ but performs his or her attack by sampling the matrices from $\tilde{B}_k(\text{row }i)\sim\mathcal{N}(\mu_{\tilde{B}},\Sigma_{\tilde{B}})$, $\bar{A}_k(\text{row }i)\sim\mathcal{N}(\mu_{\bar{A}},\Sigma_{\bar{A}})$, $\bar{C}_k(\text{row }i)\sim\mathcal{N}(\mu_{\bar{C}},\Sigma_{\bar{C}})$, and $G_k(\text{column i})\sim\mathcal{N}(\mu_G,\Sigma_G)$ where we have chosen $\mu_{\tilde{B}}=\vec{0}$, $\mu_{\bar{A}}=\mu_{\bar{C}}=\vec{1}$, and $\mu_G=\vec{0}$.

We plot the absolute mean tank height deviation in Figure \ref{NonlinearOptimalTankDeviation} for $h(x_k)=x_k^2$ where we see the effect of the attacker's constant bias on the control inputs. In Figure \ref{AFIM}, we plot the spectral norm of the attacker's Fisher information matrix for a few different nonlinear power functions in addition to the case when there is no nonlinear function. For these figures, we use the optimal covariances $\Sigma_{\tilde{B}}^*$, $\Sigma_{\bar{A}}^*$, $\Sigma_{\bar{C}}^*$, and $\Sigma_G^*$ which are generated according to the optimization problems in \eqref{BBkOptimization}, \eqref{ACOptimization}, and \eqref{GkOptimization}. Here the means of $\bar{A}_k$ and $\bar{C}_k$ are used to design a time-invariant $\Sigma_{\tilde{B}}$, and the positive semidefinite bounds are set to $N_B=\vec{1}\vec{1}^T+0.5I$, $N_t=tI$, $\Theta_A=\vec{1}\vec{1}^T+0.5I$, $\Theta_C=\vec{1}\vec{1}^T+0.5I$, $\Theta_i=I$, and $\mathcal{M}=\vec{1}\vec{1}^T+0.5I$.
\begin{figure}[h]
\centering
\subfloat[Tank Height Deviations]{\includegraphics[width=0.5\columnwidth]{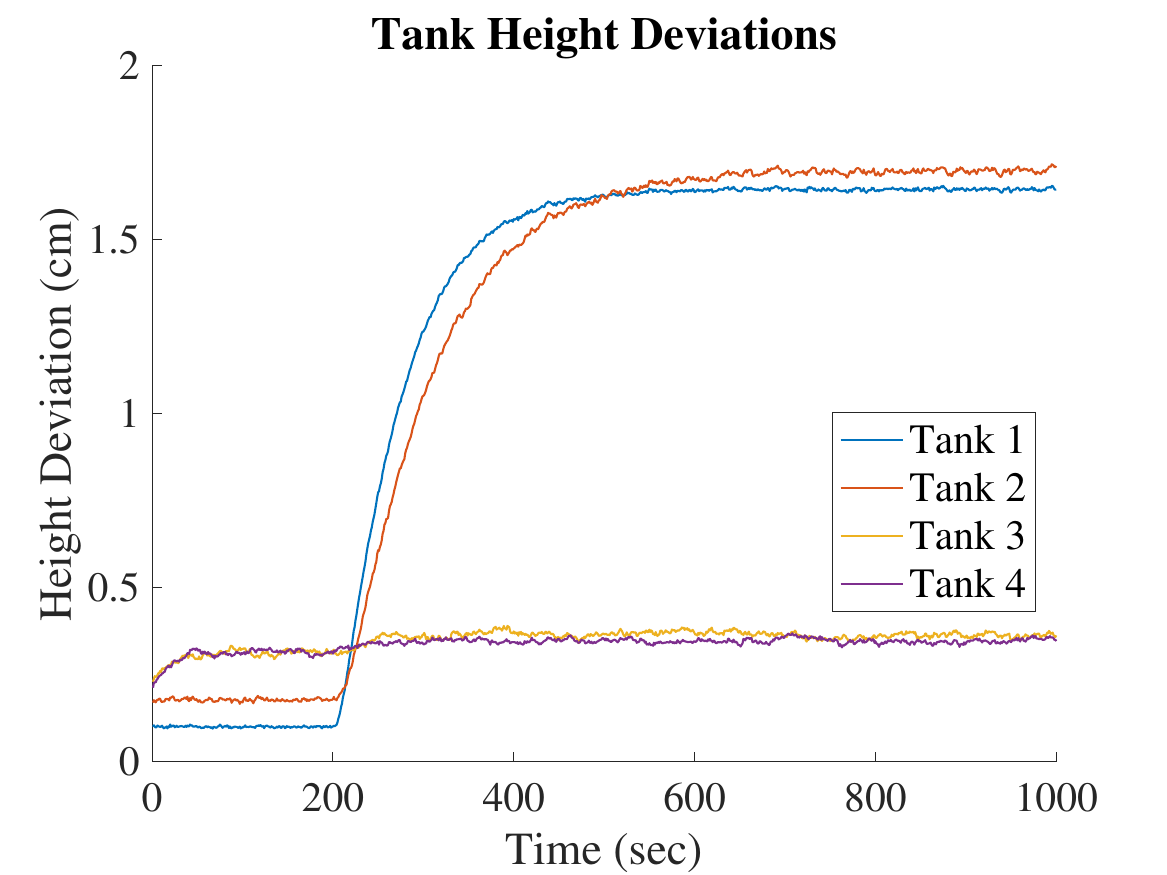}\label{NonlinearOptimalTankDeviation}}
\subfloat[Norm of Fisher Information Matrix]{\includegraphics[width=0.5\columnwidth]{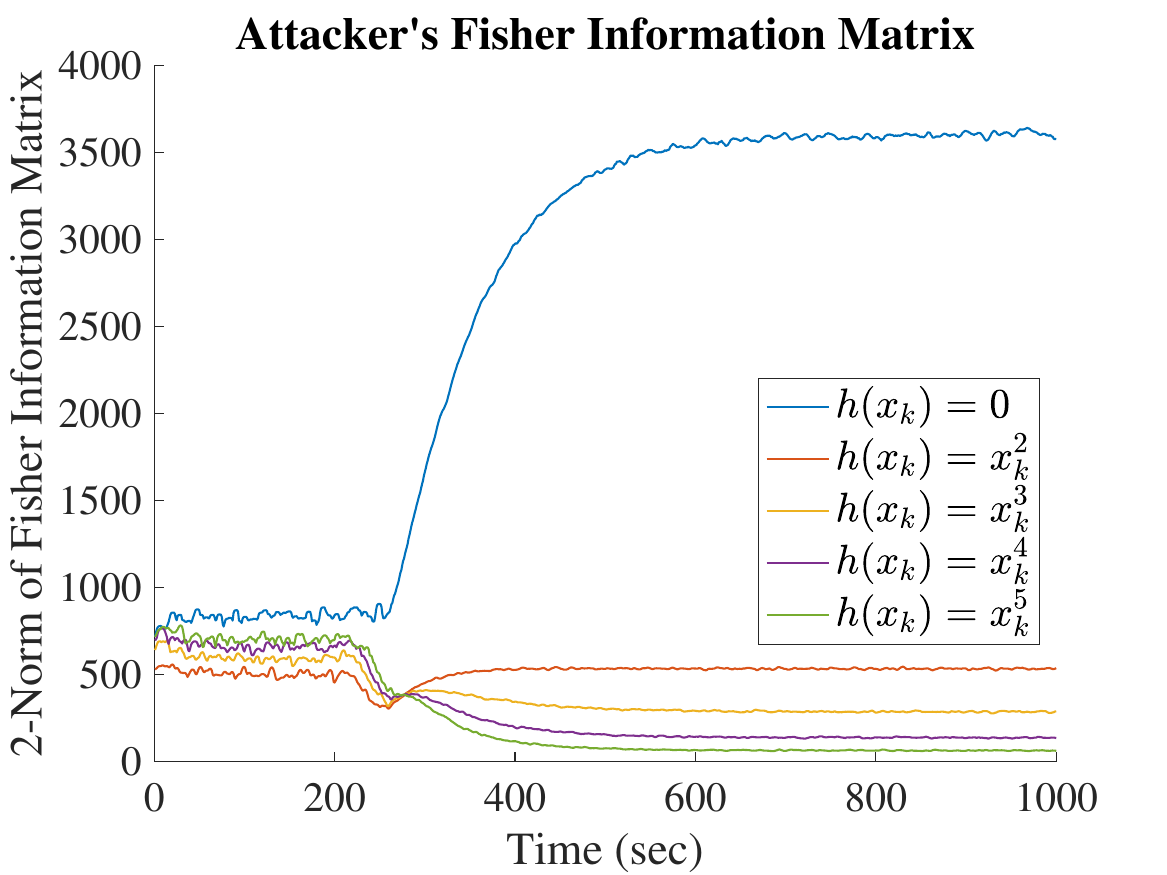}\label{AFIM}}
\caption{For an attacker who adds a constant bias to the control inputs and subtracts his or her influence from the sensor measurements, a) absolute mean tank height deviations and b) spectral norm of the attacker's Fisher information matrix for various nonlinear power functions}
\label{NonlinearFunctions}
\end{figure}

As seen in Figure \ref{AFIM}, the presence of the nonlinearity in the auxiliary sensor measurements results in a decrease of information that the adversary receives about the time-varying matrices $\bar{A}_k$, $\tilde{B}_k$, and $\bar{C}_k$. Furthermore, we see that nonlinear functions with larger powers (which generally have greater magnitudes) cause the adversary to receive less information about the time-varying matrices, consequently making it more difficult for an adversary to generate stealthy outputs.

We now consider the joint design of the covariances $\Sigma_{\bar{A}}$, $\Sigma_{\bar{C}}$, and $\Sigma_G$ for the coupling matrices and nonlinear coefficient matrix of the quadruple tank process. The optimal time-invariant covariance $\Sigma_{\tilde{B}}^*$ obtained previously is used to generate $\tilde{B}_k$, and we consider the same adversary as previously.

In Figure \ref{DetectionStatisticGk}, we plot the detection statistic for optimal and unintelligent designs of $\Sigma_{\bar{A}}$, $\Sigma_{\bar{C}}$, and $\Sigma_G$ with $h(x_k)=x_k^2$. The optimal covariances $\Sigma_{\bar{A}}^*$, $\Sigma_{\bar{C}}^*$, and $\Sigma_G^*$ are generated according to the optimization problems in (\ref{ACOptimization}) and (\ref{GkOptimization}) while the unintelligent covariances $\Sigma_{\bar{A}}^{\text{IID}}=\xi_1^*I$ and $\Sigma_{\bar{C}}^{\text{IID}}=\xi_2^*I$ take IID structures that satisfy all the constraints of (\ref{ACOptimization}) according to
\begin{equation} \small
\label{RandACOptimization}
\begin{split}
& \argmax_{\xi_1,\xi_2} \xi_1 + \xi_2 \quad \text{s.t. } \xi_1I \preceq \Theta_A, \text{ } \xi_2I \preceq \Theta_C, \\
& \quad\quad \text{Tr}(J_{ii})\xi_1I + \text{Tr}(S_{ii})\xi_2I + \text{Sum}(J_{ii})\mu_{\bar{A}}\mu_{\bar{A}}^T \\
& \quad\quad + \text{Sum}(S_{ii})\mu_{\bar{C}}\mu_{\bar{C}}^T + \begin{bmatrix}\mu_{\bar{A}}\text{ }\cdots\text{ }\mu_{\bar{A}}\end{bmatrix}F_{ii}\begin{bmatrix}\mu_{\bar{C}}\text{ }\cdots\text{ }\mu_{\bar{C}}\end{bmatrix}^T \\
& \quad\quad + \begin{bmatrix}\mu_{\bar{C}}\text{ }\cdots\text{ }\mu_{\bar{C}}\end{bmatrix}F_{ii}^T\begin{bmatrix}\mu_{\bar{A}}\text{ }\cdots\text{ }\mu_{\bar{A}}\end{bmatrix}^T \succeq \gamma^* \Theta_i,
\end{split}
\end{equation}
where $i=0,\cdots,T-1$, $\xi_1$ and $\xi_2$ are nonnegative constants, and $\gamma^*$ is the optimal nonnegative constant obtained from (\ref{ACOptimization}). In both (\ref{ACOptimization}) and (\ref{RandACOptimization}), the positive semidefinite bounds are the same as those used previously. The unintelligent covariance $\Sigma_G^{\text{IID}}=\varphi^*I$ takes an IID structure satisfying the first constraint of (\ref{GkOptimization}) according to
\begin{equation} \small
\label{RandGkOptimization}
\argmax_\varphi \varphi \quad \text{s.t. } \varphi I \preceq \mathcal{M},
\end{equation}
where $\varphi$ is a nonnegative constant and the positive semidefinite bound is the same as that used previously. Originally (\ref{RandGkOptimization}) was constructed so that $\Sigma_G^{\text{IID}}$ would satisfy all the constraints in (\ref{GkOptimization}), but this problem proved to be infeasible, implying that the unintelligent covariance is unable to achieve the chosen lower bound on the Fisher information matrix.

As seen in Figure \ref{DetectionStatisticGk}, designing $\Sigma_{\bar{A}}$, $\Sigma_{\bar{C}}$, and $\Sigma_G$ according to (\ref{ACOptimization}) and (\ref{GkOptimization}) results in a detection statistic that is significantly greater than that of a non-optimal design for $\Sigma_{\bar{A}}$, $\Sigma_{\bar{C}}$, and $\Sigma_G$. This supports the idea that increasing the amount of information the defender receives about the attacked states $x_k^A$ through the biased auxiliary sensor measurements $\tilde{y}_k^a$ will result in an increase in detection performance. Even for small biases on the optimal control input (0.2 volts), the nonlinear moving target defense with optimal designs of $\Sigma_{\bar{A}}$, $\Sigma_{\bar{C}}$, and $\Sigma_G$ results in a detection statistic that is far greater than the detection statistic under normal operation.
\begin{figure}[h]
\centering
\subfloat[Detection Statistic]{\includegraphics[width=0.5\columnwidth]{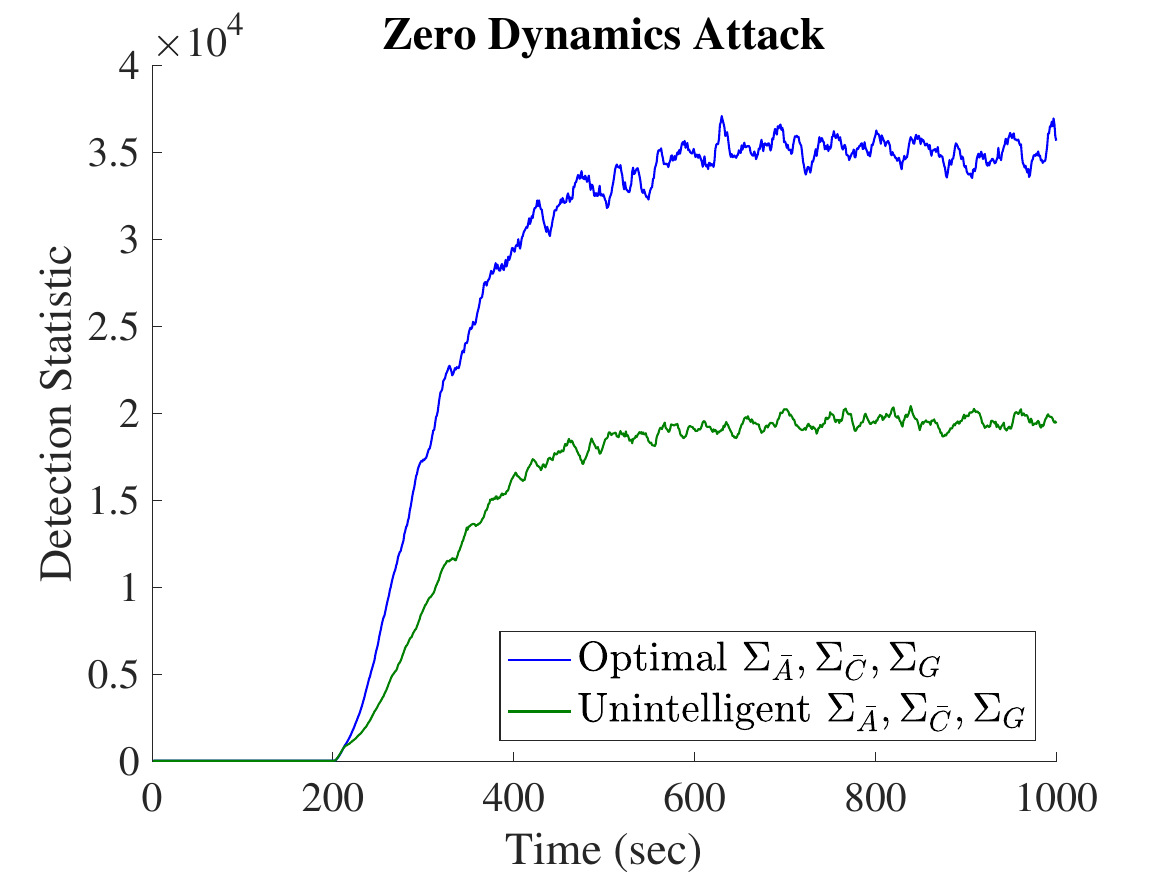}\label{DetectionStatisticGk}}
\subfloat[Norm of Fisher Information Matrix]{\includegraphics[width=0.5\columnwidth]{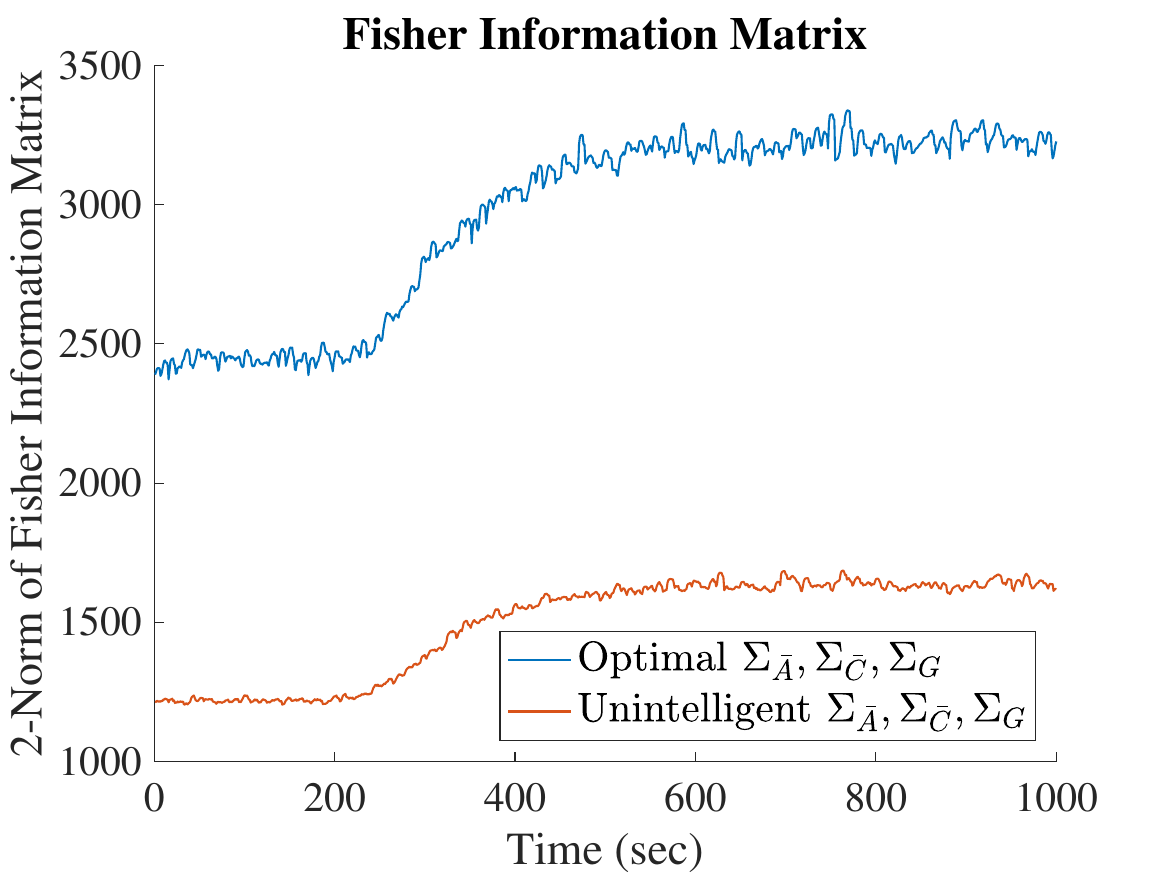}\label{NlnrFIM}}
\caption{For an attacker who adds a constant bias to the control inputs and subtracts his or her influence from the sensor measurements, a) detection statistic for optimal and unintelligent designs of $\Sigma_{\bar{A}}$, $\Sigma_{\bar{C}}$, and $\Sigma_G$ and b) spectral norm of the Fisher information matrix for optimal and unintelligent designs of $\Sigma_{\bar{A}}$, $\Sigma_{\bar{C}}$, and $\Sigma_G$}
\label{NonlinearCoefficientMatrix}
\end{figure}

Figure \ref{NlnrFIM} shows the spectral norm of the Fisher information matrix for optimal and unintelligent designs of $\Sigma_{\bar{A}}$, $\Sigma_{\bar{C}}$, and $\Sigma_G$ with $h(x_k)=x_k^2$. As seen, designing $\Sigma_{\bar{A}}$, $\Sigma_{\bar{C}}$, and $\Sigma_G$ according to (\ref{ACOptimization}) and (\ref{GkOptimization}) results in much more information being gained from the biased auxiliary sensor measurements $\tilde{y}_k^a$ about the attacked states $x_k^A$ than if a non-optimal design were used. Maximizing this amount of information will help produce a more accurate state estimate regardless of whether or not the system is under attack.

\subsection{Bounds on Attacker's Performance}
To investigate lower bounds on the detection statistic when the system is under attack, we consider the extended moving target defense. We consider an adversary who starting at time 200 sec. adds a constant input of 0.3 volts to the optimal LQG input and avoids detection by trying to subtract his or her own influence from the sensor measurements as described in (\ref{ZeroDynamics}). We assume that the attacker does not know the realizations of $\bar{A}_k$, $\tilde{B}_k$, or $\bar{C}_k$ but performs his or her attack by sampling the matrices from $\tilde{B}_k(\text{row }i)\sim\mathcal{N}(\mu_{\tilde{B}},\Sigma_{\tilde{B}})$, $\bar{A}_k(\text{row }i)\sim\mathcal{N}(\mu_{\bar{A}},\Sigma_{\bar{A}})$, and $\bar{C}_k(\text{row }i)\sim\mathcal{N}(\mu_{\bar{C}},\Sigma_{\bar{C}})$ where we have chosen $\mu_{\tilde{B}}=\vec{0}$ and $\mu_{\bar{A}}=\mu_{\bar{C}}=\vec{1}$.

We plot the $\chi^2$ detection statistic and its associated lower bound in Figure \ref{Chi2LowerBound} where we use the optimal covariances $\Sigma_{\tilde{B}}^*$, $\Sigma_{\bar{A}}^*$, and $\Sigma_{\bar{C}}^*$ generated according to the optimization problems in \eqref{BBkOptimization} and \eqref{ACOptimization}. Here the means of $\bar{A}_k$ and $\bar{C}_k$ are used to design a time-invariant $\Sigma_{\tilde{B}}$, and the positive semidefinite bounds are set to $N_B=\vec{1}\vec{1}^T+0.5I$, $N_t=tI$, $\Theta_A=\vec{1}\vec{1}^T+0.5I$, $\Theta_C=\vec{1}\vec{1}^T+0.5I$, and $\Theta_i=I$.
\begin{figure}[h]
\centering
\includegraphics[width=0.5\columnwidth]{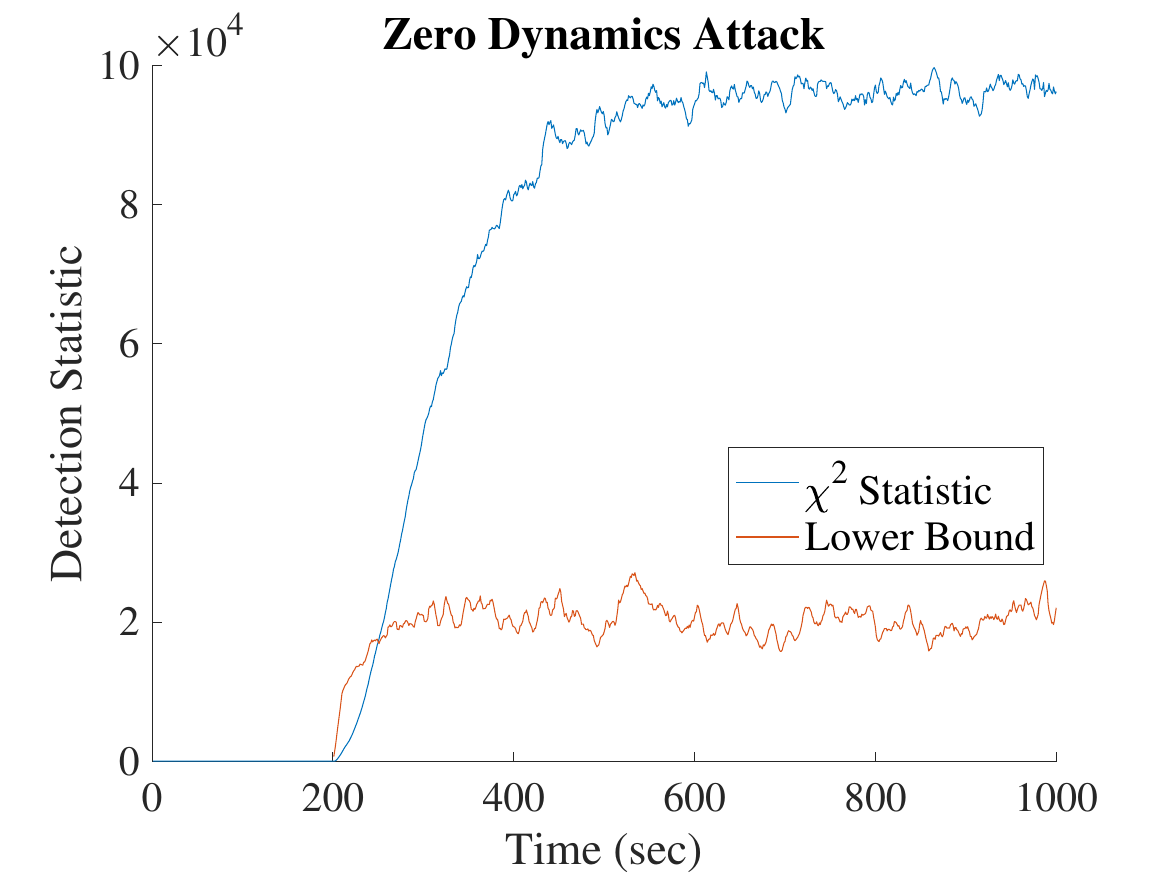}
\caption{$\chi^2$ detection statistic and lower bound on the expected value of the detection statistic for an attacker who adds a constant bias to the control inputs and subtracts his or her influence from the sensor measurements}
\label{Chi2LowerBound}
\end{figure}

As seen in Figure \ref{Chi2LowerBound}, the magnitude of the lower bound on the expected value of the detection statistic is much greater than detection thresholds associated with very small false alarm rates, implying that on average, any attack on this extended moving target system will be detected. These results demonstrate that the moving target defense is effective even in worst case attack scenarios.

\section{Conclusion}
This article presented the moving target defense for detecting and identifying attacks in CPSs. The moving target seeks to limit an adversary's knowledge of the model by introducing stochastic time-varying parameters in the control system. We considered the hybrid moving target, the extended moving target, and the nonlinear moving target, analyzing each system and providing guidelines for the design of the system parameters. We demonstrated how the hybrid moving target enables both detection and identification of malicious nodes, presented designs for the extended moving target that maximize detection and estimation performance, and showed how the nonlinear moving target minimizes any information an adversary receives about the time-varying parameters. Lastly, we investigated lower bounds on the detection statistic, showing that the moving target defense is able to detect even the most stealthy attacks. Future work consists of applying the moving target defense to specific use cases, investigating where and how the time-varying parameters might be introduced to take advantage of the existing system dynamics.

\ifCLASSOPTIONcaptionsoff
  \newpage
\fi

\bibliographystyle{IEEEtran}
\bibliography{root}

\begin{IEEEbiography}[{\includegraphics[width=1in,height=1.25in,clip,keepaspectratio]{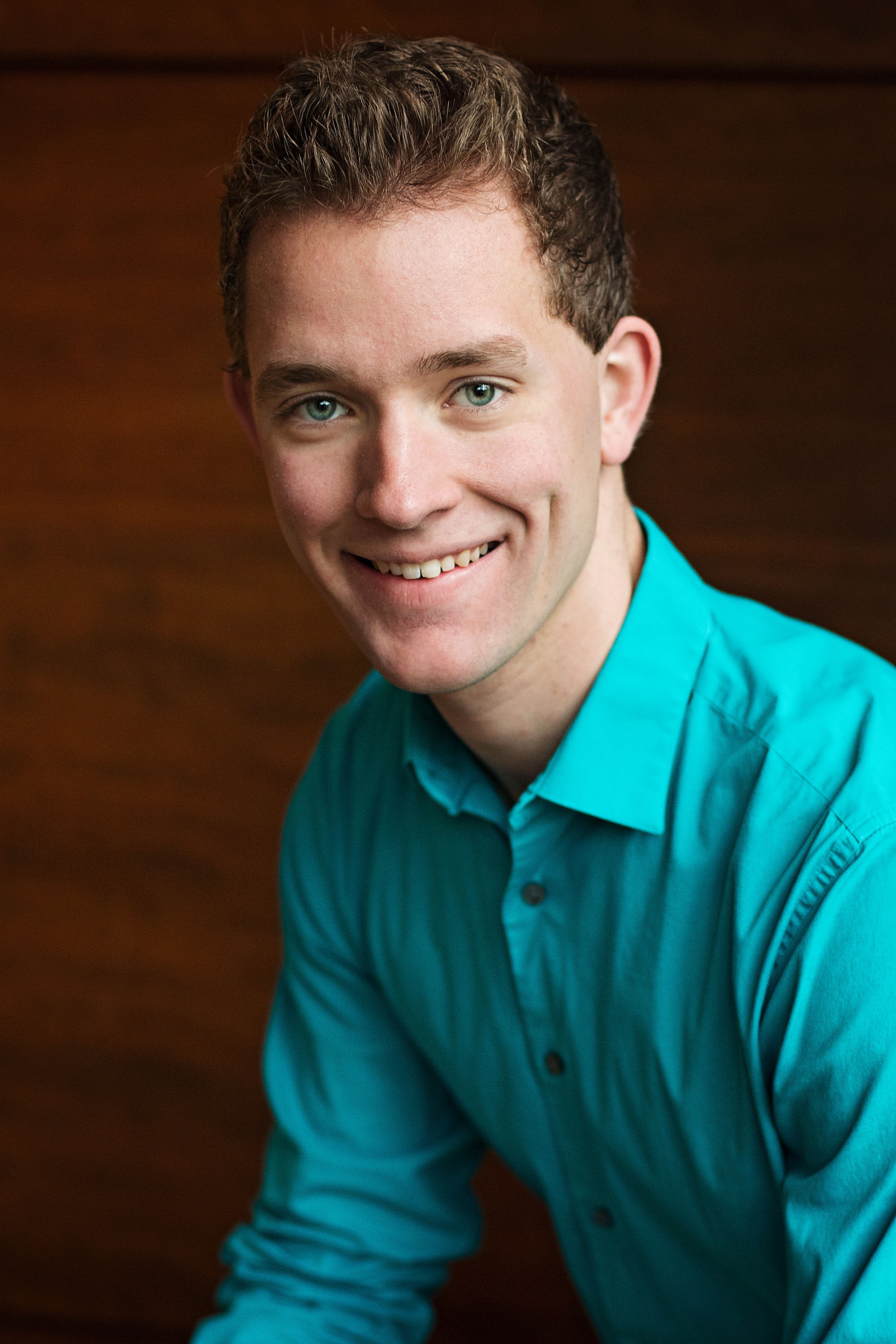}}]{Paul Griffioen}
received the B.S. degree in Engineering, Electrical/Computer concentration, from Calvin College, Grand Rapids, MI, USA in 2016 and the M.S. degree in Electrical and Computer Engineering from Carnegie Mellon University, Pittsburgh, PA, USA in 2018. He is currently pursuing the Ph.D. degree in Electrical and Computer Engineering at Carnegie Mellon University. His research interests include the modeling, analysis, and design of active detection techniques and resilient mechanisms for secure cyber-physical systems.
\end{IEEEbiography}
\begin{IEEEbiography}[{\includegraphics[width=1in,height=1.25in,clip,keepaspectratio]{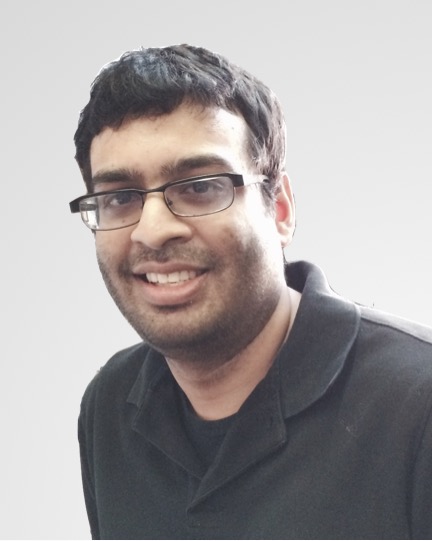}}]{Sean Weerakkody}
received the B.S. degree in Electrical Engineering and Mathematics from the University of Maryland, College Park, USA, in 2012 and the Ph.D. degree in Electrical and Computer Engineering from Carnegie Mellon University, Pittsburgh PA, USA, in 2018. He was awarded the National Defense Science and Engineering Graduate fellowship in 2014 and the Siebel Scholarship in Energy Science in 2018. His research interests include secure design and active detection in cyber-physical systems and estimation in sensor networks.
\end{IEEEbiography}
\begin{IEEEbiography}[{\includegraphics[width=1in,height=1.25in,clip,keepaspectratio]{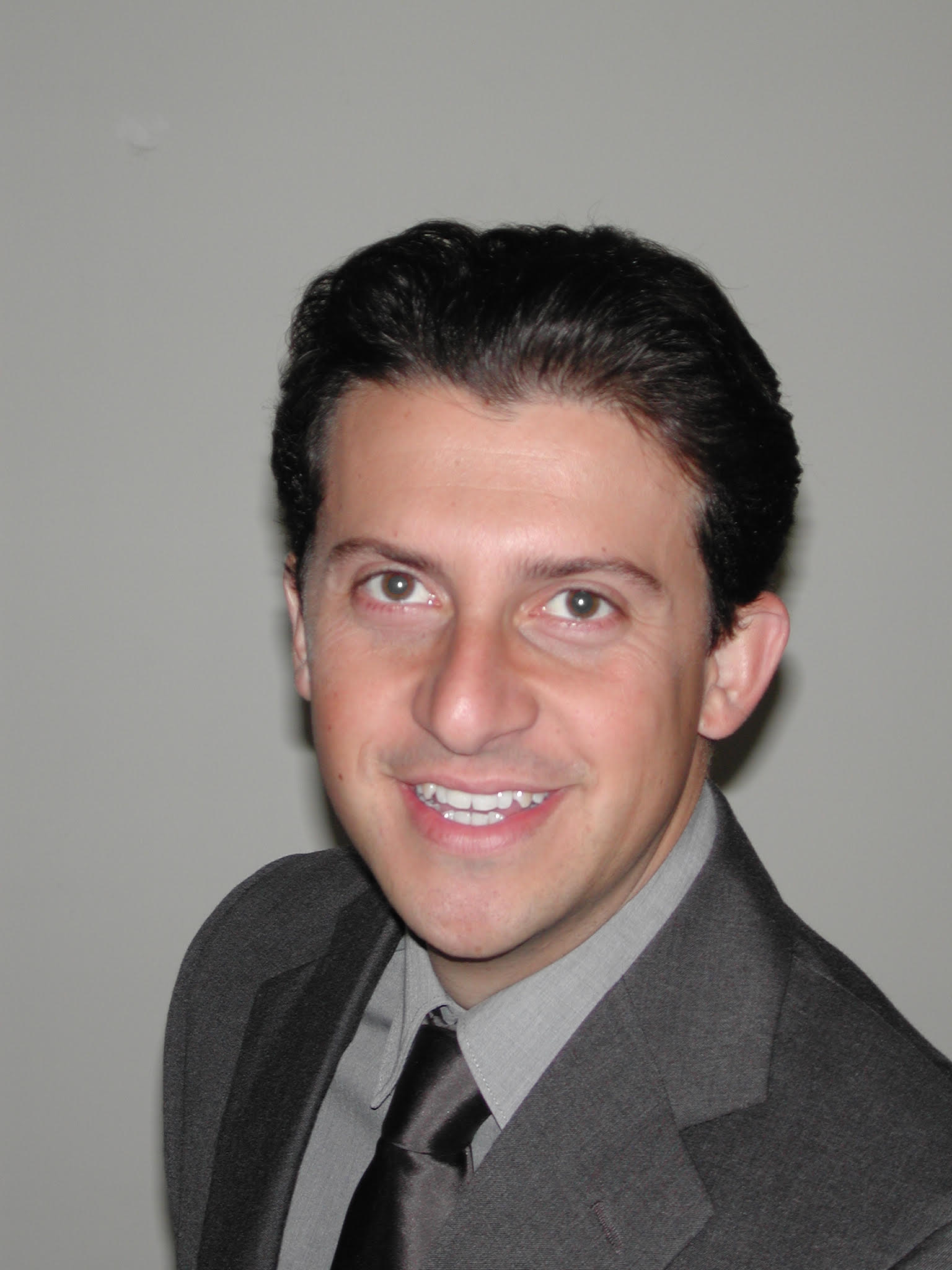}}]{Bruno Sinopoli}
received the Dr. Eng. degree from the University of Padova in 1998 and his M.S. and Ph.D. in Electrical Engineering from the University of California at Berkeley, in 2003 and 2005 respectively. After a postdoctoral position at Stanford University, Dr. Sinopoli was the faculty at Carnegie Mellon University from 2007 to 2019, where he was full professor in the Department of Electrical and Computer Engineering with courtesy appointments in Mechanical Engineering and in the Robotics Institute and co-director of the Smart Infrastructure Institute, a research center aimed at advancing innovation in the modeling analysis and design of smart infrastructure. In 2019 Dr. Sinopoli joined Washington University in Saint Louis, where he is the chair of the Electrical and Systems Engineering department. Dr. Sinopoli was awarded the 2006 Eli Jury Award for outstanding research achievement in the areas of systems, communications, control and signal processing at U.C. Berkeley, the 2010 George Tallman Ladd Research Award from Carnegie Mellon University and the NSF Career award in 2010. His research interests include the modeling, analysis and design of Secure by Design Cyber-Physical Systems with applications to Energy Systems, Interdependent Infrastructures and Internet of Things.
\end{IEEEbiography}

\end{document}